\documentclass[12pt]{article}
\usepackage{authblk}
\usepackage{hyperref}
\usepackage{url}
\usepackage{array}
\usepackage{makecell}
\usepackage{amsmath,amsfonts,amssymb,amsthm}
\usepackage{graphicx}
\usepackage{totcount}
\usepackage[round]{natbib}
\usepackage{lineno}
\usepackage{algorithm}
\usepackage{algpseudocode}
\newenvironment{varalgorithm}[1]
  {\algorithm}
  {\endalgorithm}

\usepackage{booktabs} 
\usepackage{subcaption}
\usepackage{bbm}
\usepackage{bm}
\usepackage{mwe}
\usepackage{paralist} 
\usepackage{indentfirst} 
\usepackage{xcolor}

\usepackage[font=footnotesize,labelfont=bf]{caption}

\usepackage[margin=1in]{geometry}

\DeclareMathOperator*{\argmax}{arg\,max}
\DeclareMathOperator*{\argmin}{arg\,min}

\title{Prediction-Powered Inference\\ with Imputed Covariates and Nonuniform Sampling}

\author[1]{\small Dan M. Kluger \thanks{Corresponding author: dkluger@mit.edu}}
\author[2]{\small Kerri Lu}
\author[3]{\small Tijana Zrnic}
\author[1]{\small Sherrie Wang}
\author[2]{\small Stephen Bates}
\affil[1]{\scriptsize Institute for Data, Systems, and Society, Massachusetts Institute of Technology, Cambridge MA, 02139}
\affil[2]{\scriptsize Department of Electrical Engineering and Computer Science, Massachusetts Institute of Technology, Cambridge MA, 02139}
\affil[3]{\scriptsize Stanford Data Science and Department of Statistics, Stanford University, Stanford CA, 94305}
\date{}

\newcommand*{\horzbar}{\rule[.5ex]{2.5ex}{0.5pt}}

\newcommand{\rd}{\,\mathrm{d}}

\newcommand{\e}{\mathbb{E}}
\newcommand{\var}{\mathrm{Var}}
\newcommand{\cov}{\mathrm{Cov}}
\newcommand{\corr}{\mathrm{Corr}}

\newcommand{\tran}{\mathsf{T}}
\newcommand{\indep}{\raisebox{0.05em}{\rotatebox[origin=c]{90}{$\models$}}}

\newcommand{\ci}{\mathcal{I}}

\newcommand{\cx}{\mathcal{X}}


\newcommand{\xobs}{X^{\textnormal{obs}}}
\newcommand{\xna}{X^{\textnormal{miss}}}
\newcommand{\txna}{\tilde{X}^{\textnormal{miss}}}

\newcommand{\ntot}{N}
\newcommand{\nm}{N-n}
\newcommand{\nc}{n}

\newcommand{\htc}{\hat{\theta}^{\bullet}}
\newcommand{\hgc}{\hat{\gamma}^{\bullet}}
\newcommand{\hgm}{\hat{\gamma}^{\circ}}
\newcommand{\hga}{\hat{\gamma}^{\textnormal{all}}}

\newcommand{\htcarg}[1]{\hat{\theta}^{\bullet {#1}}}
\newcommand{\hgcarg}[1]{\hat{\gamma}^{\bullet {#1}}}
\newcommand{\hgmarg}[1]{\hat{\gamma}^{\circ {#1}}}

\newcommand{\bparen}{(b)}

\newcommand{\htPP}{\hat{\theta}^{\textnormal{PTD}}}
\newcommand{\htPPIplus}{\hat{\theta}^{\textnormal{PPI++}}}

\newcommand{\htPPom}{\hat{\theta}^{\textnormal{PTD},\Omega}}
\newcommand{\htPPhom}{\hat{\theta}^{\textnormal{PTD},\hat{\Omega}}}
\newcommand{\htPPhomCB}[1]{\hat{\theta}^{\textnormal{PTD},\hat{\Omega},\textnormal{CB},{#1}}}
\newcommand{\htPPhomOpt}{\hat{\theta}^{\textnormal{PTD},\hat{\Omega}_{\textnormal{opt}}}}

\newcommand{\htPPomStar}{\hat{\theta}^{\textnormal{PTD},\Omega,*}}
\newcommand{\htPPhomStar}{\hat{\theta}^{\textnormal{PTD},\hat{\Omega},*}}
\newcommand{\htPPhomB}{\hat{\theta}^{\textnormal{PTD},\hat{\Omega},\bparen}}
\newcommand{\htPPhomArg}[1]{\hat{\theta}^{\textnormal{PTD},\hat{\Omega},{#1}}}

\newcommand{\cwmain}{\mathcal{W}^{\circ}}
\newcommand{\cwcalib}{\mathcal{W}^{\bullet}}

\newcommand{\cwmainStar}{\mathcal{W}^{\circ,*}}
\newcommand{\cwcalibStar}{\mathcal{W}^{\bullet,*}}


\newcommand{\SigGM}{\Sigma_{\gamma}^{\circ}}
\newcommand{\SigGC}{\Sigma_{\gamma}^{\bullet}}
\newcommand{\SigTC}{\Sigma_{\theta}^{\bullet}}
\newcommand{\SigTGC}{\Sigma_{\theta,\gamma}^{\bullet}}

\newcommand{\hSigGM}{\hat{\Sigma}_{\gamma}^{\circ}}
\newcommand{\hSigGC}{\hat{\Sigma}_{\gamma}^{\bullet}}
\newcommand{\hSigTC}{\hat{\Sigma}_{\theta}^{\bullet}}
\newcommand{\hSigTGC}{\hat{\Sigma}_{\theta,\gamma}^{\bullet}}

\newcommand{\hSMatGM}{\hat{S}_{\gamma}^{\circ}}

\newcommand{\SigPTD}{\Sigma_{\textnormal{PTD}}}

\newcommand{\indc}{\mathcal{I}^{\bullet}}
\newcommand{\indm}{\mathcal{I}^{\circ}}
\newcommand{\indcStar}{\mathcal{K}^{\bullet,*}}
\newcommand{\indmStar}{\mathcal{K}^{\circ,*}}

\newcommand{\calA}{\mathcal{A}}

\makeatletter
\newcommand{\skipitems}[1]{\addtocounter{\@enumctr}{#1}}
\makeatother

\newcommand{\giv}{\!\mid\!} 

\theoremstyle{plain}  
\newtheorem{theorem}{Theorem}[section]
\newtheorem{lemma}{Lemma}[section]
\newtheorem{corollary}{Corollary}[section]
\newtheorem{proposition}{Proposition}[section]

\theoremstyle{remark}

\newtheorem{remark}{Remark}
\regtotcounter{assumption} 

\newtheorem{assumption}{Assumption}

\begin{document}

\maketitle

\vspace{-2.5em}
\begin{abstract}

Machine learning models are increasingly used to produce predictions that serve as input data in subsequent statistical analyses. 
For example, computer vision predictions of economic and environmental indicators based on satellite imagery are used in downstream regressions; similarly, language models are widely used to approximate human ratings and opinions in social science research.
However, failure to properly account for errors in the machine learning predictions renders standard statistical procedures invalid.
Prior work uses what we call the \emph{Predict-Then-Debias} estimator to give valid confidence intervals when machine learning algorithms impute missing variables, assuming a small \emph{complete} sample from the population of interest.
We expand the scope by introducing bootstrap confidence intervals that apply when the complete data is a nonuniform (i.e.,  weighted, stratified, or clustered) sample and to settings where an arbitrary subset of features is imputed. Importantly, the method can be applied to many settings without requiring additional calculations.
We prove that these confidence intervals are valid under no assumptions on the quality of the machine learning model and are no wider than the intervals obtained by methods that do not use machine learning predictions.

\end{abstract}

\noindent {\bf Keywords:} prediction-powered inference, prediction-based inference, synthetic data, missing data, measurement error, two-phase sampling designs, internal validation, bootstrap 

\section{Introduction} \label{sec:Intro}

With increasingly rich data collection and a surge in open science initiatives, investigators now frequently rely on machine learning algorithms to predict quantities of interest based on large collections of related but indirect measurements. For example, in remote sensing studies, land cover is predicted from satellite images with computer vision algorithms. Similarly, millions of protein structures have been predicted from their amino acid sequences, but such structures are rarely measured directly because it is costly and difficult. This situation is pervasive---investigators are now often faced with large data sets that partially consist of machine learning outputs.
However, using error-prone predictions from machine learning models as input data for statistical analysis (e.g., calculating regression coefficients) can lead to considerable biases and misleading conclusions. This has led to a flurry of recent works that develop methods to resolve the conundrum of how to use machine-learning predictions in a statistical analysis. In particular, when the analyst has access to data in which the predictions and the ground truth measurements are jointly available, it is possible to correct the bias of the machine learning model without assumptions on the quality of the predictions. In such settings, we develop methods to construct confidence intervals that account for errors in the machine learning predictions.

\subsection{Problem setup}


We first introduce our setting and two simple baselines. We assume a distribution over data points $X \equiv (\xobs, \xna) \in \mathbb{R}^{p}$. The features $\xobs$ are always observed, but measuring $\xna$ is costly (e.g.,  it requires expert human labeling). As such, few complete data points are available. In addition, we also have access to a much larger data set of machine-learning predictions of $\xna$, which we denote by $\txna$. The reader should interpret $\txna$ as a reasonably accurate but imperfect proxy of $\xna$. Formally, we have access to a small \textit{complete} ($\bullet$) sample $(\xobs_i, \xna_i,\txna_i)_{i \in \indc}$ and a much larger \textit{incomplete} ($\circ$) sample $(\xobs_i, \txna_i)_{i \in \indm}$, adding up to $N$ data points total. Here, $\indc$ and $\indm$ are disjoint sets of indices such that $\indc \cup \indm = \{1,\dots,\ntot\}$.
We denote by $\nc$ the size of the complete sample, $\nc = |\indc|$.

Our goal is to estimate some quantity $\theta \in \mathbb R^d$ describing the distribution of $X$. As our primary example in this work, we may wish to estimate the coefficients in a generalized linear model (GLM), obtained by regressing one component of $X$ on the remaining components. Note that the population regression coefficient vector for a GLM is still a well-defined estimand of interest even when a GLM does not accurately describe the distribution of $X$. Typically an investigator will have access to a function (implemented in software) $\calA(\cdot)$ which takes a sample of data as input and returns an estimate of $\theta$ as an output. They can, therefore, readily deploy two natural and simple approaches to estimating $\theta$:
\begin{enumerate}
	\item \textit{The naive approach}: Act as if $\xna=\txna$ and apply $\calA(\cdot)$ to all available samples imputed with machine learning predictions: $\hga = \calA \big( (\xobs_i, \txna_i)_{i=1}^{\ntot} \big)$.
    \item \textit{The classical approach}: Ignore the machine learning predictions and apply $\calA(\cdot)$ to the small complete sample only: $\htc = \calA \big((\xobs_i, \xna_i)_{i \in \indc} \big)$.
\end{enumerate}
Both approaches have both advantages and limitations. In particular, $\htc$ targets the correct parameter $\theta$, but it does not leverage the potential power of machine-learning predictions. Meanwhile, the naive approach uses the abundant machine-learning predictions, but it targets the wrong parameter; even with infinite data, $\hga$ will be biased, unless the proxy $\txna$ comes from the exact same distribution as $\xna$ (i.e., unless the machine learning model makes no errors). Consequently, confidence intervals based on the naive approach are invalid. We observe in our real-data experiments (Section~\ref{sec:RealDataExperiments}) that the naive estimator is sometimes biased by a factor of $1/3$ or more.

Given the limitations of the classical and naive approaches and the growing prevalence of machine-learning-imputed datasets, there is a growing interest in developing methods that leverage all available data, aiming for the best of both worlds. In this paper, we build on an easy-to-implement estimator proposed in \cite{ChenAndChen2000} and recently promoted in \cite{ChenAndChenLikeMedicalJournal}, \cite{YangAndDing2020}, \cite{kremers2021generalsimplerobustmethod}, \cite{PPBootNote}, \cite{MiaoLuNeurIPS}, and \cite{gronsbell2024PromotesCC}, which we will call the \emph{Predict-Then-Debias} (PTD) estimator. 

\begin{figure}[t]
    \centering 
    \includegraphics[width=0.95 \hsize]{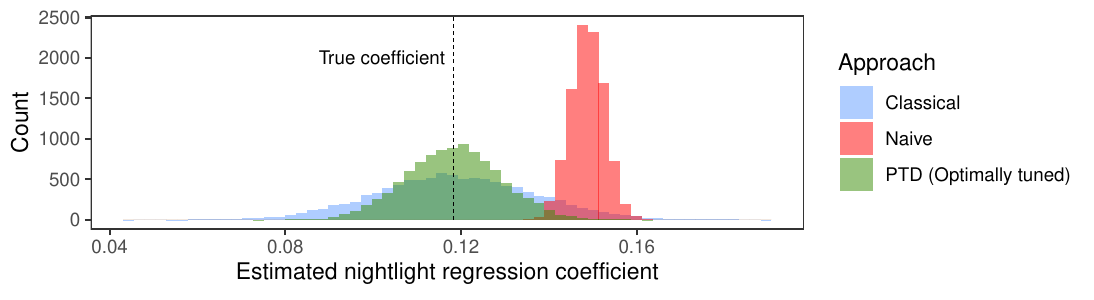}
    \caption{Histograms of nightlight coefficient estimator across $10{,}000$ simulations for three different estimation strategies. Each simulation used a random sample of size $40{,}000$ from the dataset from \cite{MOSAIKSPaper}, with $\nc=1{,}500$ samples randomly assigned to the complete sample. 
    The dashed vertical line gives the ``true coefficient" for nightlights based on fitting a regression using the gold standard data from all available samples. See Section \ref{sec:HousingPriceExample} for more details on the dataset and the regression setup.}
    \label{fig:TeaserFig}
\end{figure}

\paragraph{Preview example.} Figure \ref{fig:TeaserFig} previews a real-data example, in which the goal is to estimate the regression coefficient for nightlights in a regression of housing price on income, nightlights, and road density. 
The complete sample had accurate measurements of all variables, while the incomplete sample had accurate measurements of housing price and income, but used predictions of nightlights and road density based on daytime satellite imagery to impute the missing covariates.
The figure shows histograms for the three estimators across $10{,}000$ simulations. In each simulation, a random sample of size $40{,}000$ was drawn (with replacement) from the dataset of \cite{MOSAIKSPaper}, and $\nc=1{,}500$ of these samples were randomly assigned to the complete sample. 
The key takeaway is that the PTD estimator is approximately \emph{unbiased}, unlike the naive estimator, and has a \emph{lower variance} than the classical estimator. The naive approach was biased and consistently overestimated the nightlight regression coefficient. Meanwhile, the classical approach, which ignored the $38{,}500$ samples with proxy nightlight and road density measurements, had large variance---so much so that it sometimes produced estimates that were even less accurate than those of the naive approach.


\subsection{A flexible solution: Predict-Then-Debias (PTD) estimator}

The PTD estimator takes the naive estimator based on the incomplete sample---which is biased---and then adds a bias correction term based on the complete sample. In particular, let $\hgm = \calA \big( (\xobs_i,\txna_i)_{i \in \indm} \big)$, $\htc =\calA \big( (\xobs_i,\xna_i)_{i \in \indc} \big)$, and $\hgc=\calA \big( (\xobs_i,\txna_i)_{i \in \indc} \big)$; the first estimator is similar to the previously defined naive estimator and the second estimator is the classical estimator. The basic Predict-Then-Debias estimator is then defined as $$\htPP = \underbrace{\hgm}_{\text{biased, low-variance estimator}} + \underbrace{(\htc -\hgc)}_{\text{bias correction}}.$$

To study properties of $\htPP$, we presume standard regularity conditions, namely that $\htc$ will converge to $\theta$ as the sample size grows, and that $\hgc$ and $\hgm$ converge to some limit that we will denote by $\gamma$. 
Under such assumptions, $\htPP$ is \emph{consistent}: it converges to $\theta$ as $\nc  ,\ntot \to \infty$. Meanwhile, the variance of this estimator has the potential to be quite small when $\txna \approx \xna$. In particular, when the samples are independent, the variance of the $j$th 
coordinate of $\htPP$ is given by $$\var(\htPP_j)=\var(\hgm_j)+\var(\htc_j-\hgc_j).$$ The first term should be small, or even negligible, because $\hgm$ is estimated from a large sample. If $\txna \approx \xna$, then $\htc_j-\hgc_j$ will tend to be near zero, implying the second variance is also small.



We can reduce the variance further by considering a matrix-valued tuning parameter---for any (possibly data-dependent) tuning matrix $\hat{\Omega} \in \mathbb{R}^{d \times d}$, define
$$\htPPhom = \hat{\Omega} \hgm + (\htc -\hat{\Omega} \hgc).$$ If $\hat{\Omega}$ converges to some $\Omega$, $\htPPhom$ is again consistent, meaning it will converge to $\Omega\gamma + (\theta - \Omega \gamma)=\theta$ in the large-sample limit. 
One can check that $\var(\htPPhom_j)$ is minimized for each $j \in \{1,\dots,d\}$ when $\hat{\Omega}=\cov(\htc,\hgc) [\var(\hgc)+\var(\hgm)]^{-1}$.  This optimal tuning matrix is unknown \emph{a priori} but can be estimated with data.
We refer to $\htPPhom$ as the \emph{tuned} PTD estimator (or sometimes just the PTD estimator for short). 


While the PTD estimator is straightforward to compute and has desirable consistency and efficiency properties, 
streamlined approaches for constructing confidence intervals for the PTD estimator are less well-developed. One approach is to derive a central limit theorem for $\htPPhom$ as well as a formula for its asymptotic variance, and then use the normal approximation to construct a confidence interval. In the context of GLM coefficient estimators, this approach was taken by \cite{ChenAndChen2000}, among others, but to generalize to other estimation problems requires additional mathematical calculations. 

An alternative approach to constructing confidence intervals for $\htPPhom$
is to use the bootstrap. In particular, one can fix the tuning matrix $\hat{\Omega}$, resample the data with replacement $B$ times, recalculate $\htPPhom$ on each resampled version of the dataset, and return the interval given by the $\frac{\alpha}{2}B$-th and $(1-\frac{\alpha}{2})B$-th largest obtained estimates. This easy-to-generalize approach is presented explicitly in Algorithm~\ref{alg:FullPercentileBootstrapWarmup} when the complete and incomplete datasets are IID. Notably, implementation of this algorithm merely requires software to evaluate $\calA(\cdot)$ on various resamplings of the input data. 
We highlight that these confidence intervals are asymptotically valid no matter how accurate the machine-learning predictions $\txna$ are, making them an appealing choice for use with complex machine-learning models. Moreover, when estimating regression coefficients, the confidence intervals are valid even when the regression is misspecified. 
This manuscript discusses the validity of this algorithm and its extensions to cases with weighted, cluster, and stratified sampling.

\begin{varalgorithm}{1}
\caption{Predict-Then-Debias Bootstrap (uniform sampling version)}\label{alg:FullPercentileBootstrapWarmup}
\begin{algorithmic}[1]
\For{$b= 1,\dots,B$}
\State Sample $i_1,i_2,\dots,i_{\ntot} \stackrel{\text{iid}}{\sim} \text{Unif}(\{1,\dots,\ntot\})$
\State Let $\indcStar = \{ k \in \{1,\dots,\ntot\}: i_k \in \indc\}$
\State Let $\indmStar = \{ k \in \{1,\dots,\ntot\}: i_k \in \indm\}$
\State $\htcarg{,\bparen} \gets \calA \big((\xobs_{i_k},\xna_{i_k})_{k \in \indcStar} \big)$
\State $\hgcarg{,\bparen} \gets \calA \big((\xobs_{i_k},\txna_{i_k})_{k \in \indcStar} \big)$
\State $\hgmarg{,\bparen} \gets \calA \big((\xobs_{i_k},\txna_{i_k})_{k \in \indmStar} \big)$ \label{line:CalcHgmInBoot}
\EndFor
\State Select tuning matrix $\hat{\Omega}$ (e.g., using \ref{alg:FullBootTuningSubroutine})
\State $\htPPhomB \gets \hat{\Omega} \hgmarg{,\bparen} + (\htcarg{,\bparen}-\hat{\Omega} \hgcarg{,\bparen} )$ for $b=1,\dots, B$
\State \Return  $\mathcal{C}_j^{1-\alpha} \gets \big( \text{Quantile}_{\alpha/2}( \{\htPPhomB_j\}_{b=1}^B), \text{Quantile}_{1-\alpha/2}( \{\htPPhomB_j\}_{b=1}^B ) \big) \quad \forall_{j \in \{1,\dots,d\}}$
\end{algorithmic}
\end{varalgorithm}


\subsection{Our contribution}

We develop bootstrap-based approaches to construct confidence intervals for the PTD estimator, generalizing to new settings and providing precise technical conditions under which the intervals have theoretical guarantees of validity.
Building off Algorithm \ref{alg:FullPercentileBootstrapWarmup}, the contributions of our paper are as follows:
\begin{itemize}
    \item In Algorithm \ref{alg:FullPercentileBootstrap}, we use inverse probability weighting to generalize Algorithm \ref{alg:FullPercentileBootstrapWarmup} to handle weighted two-phase sampling designs, where the complete sample is a weighted random subsample of all data points. We provide sufficient conditions under which this algorithm returns asymptotically valid confidence intervals.
    \item In Algorithm \ref{alg:QuickConvolutionBootstrap}, we develop a convolution-based speedup to Algorithm \ref{alg:FullPercentileBootstrap} that can be used when the sample size $\ntot$ is large, rendering calculation of $\hgm$ on $B$ resamplings of the data slow. We prove the validity of this faster bootstrap algorithm.
    \item In Section \ref{sec:ClusterAndStratifiedBootstrapMoreInDepth} we present a cluster bootstrap \citep[e.g.,  ][]{Davison_Hinkley_ChapterWithClusterBootstrap} modification of Algorithms \ref{alg:FullPercentileBootstrap} and \ref{alg:QuickConvolutionBootstrap}. In many applications of interest it is of practical importance to select entire geographical clusters within which all units $X$ will be measured.
    Similarly, we present modifications of Algorithm \ref{alg:FullPercentileBootstrap} that can be used when the data is obtained via stratified sampling; such settings are common when the data comes from statistical surveys.

\end{itemize}
In addition, we discuss the efficiency of the PTD estimator and compare it to the PPI++ estimator from \cite{PPI++}. We show a situation where the latter is more efficient than the former, a first in the literature.

Finally, in Section \ref{sec:RealDataExperiments} we show with real-data experiments that our bootstrap-based approaches achieve the desired coverage, while providing confidence intervals that are narrower than simply using the complete sample. Notably, these simulations include cases where the covariates---not the response variables---are imputed via machine learning, a setting that has received little attention in the literature on prediction-powered inference.

\subsection{Related work}

Because the literature on methods for using proxies in a statistical analysis is vast, we restrict our attention to discussing methods that use a small complete subsample, sometimes called \emph{internal validation} data, where all variables of interest, including the proxies, are jointly measured.
This has historically been investigated in the literature on measurement error~\citep{CarrolStefanski06}, missing data~\citep{LittleAndRubin}, and survey sampling~\citep{sarndal2003model}.

\paragraph{Measurement error and missing data:} Measurement error techniques such as regression calibration and SIMEX~\citep[e.g., ][]{CarrolStefanski06} are generally designed for cases where internal validation data is unavailable, so they rely on assumptions about the prediction error. 
Such methods are sometimes used even when there is an internal validation sample \citep{EfficientRegressionCalibration,FongAndTyler2021}, but they still rely critically upon the nondifferential measurement error assumption, unlike the present work.
Multiple imputation methods~\citep[e.g., ][]{LittleAndRubin} are often used when internal validation data is available~\citep{CarrolStefanski06, GuoAndLittle,ProctorPaper}. Multiple imputation does not require the nondifferential measurement error assumption, but it instead requires Bayesian modelling assumptions about the distribution of the missing data (in our case the ground truth values) given the observed data (in our case the predictions and other widely available variables). As a result, confidence intervals for parameters of interest given by multiple imputation-based approaches do not have frequentist guarantees. This is a critical limitation in practice---data-based experiments in \cite{ProctorPaper} show that multiple imputation confidence intervals with nominal level 95\% have coverage of less than 80\% in some cases.

\paragraph{Semiparametric and semisupervised inference:} 
Approaches from the semiparametrics literature \citep{RobinsRotnitzkyZhao94,TsiatisMissingDataSemiparametricChapter}, give a broad class of estimators for the setting of internal validation with asymptotically valid confidence intervals. As discussed in \cite{ChenAndChen2000}, the PTD estimator is asymptotically equivalent to a special case of the semiparametric estimator from \cite{RobinsRotnitzkyZhao94}. Therefore, the semiparametric estimator, with an optimal estimate of the nuisance function, can be more efficient than the PTD estimator.
A recent body of work on semi-supervised inference considers methods for leveraging a large unlabeled dataset and a small labeled dataset and focuses on efficiency in high-dimensional and semiparametric regimes for specific types of estimators. In particular, recent works studied efficient estimation of means \citep{ZhangBrownCaiSemiSupervisedInference,ZhangSemisupervised1,ZhangSemisupervised2}, linear regression parameters \citep{ChakraborttyCaiSemiSupervisedInf}, quantiles \citep{ChakraborttyDaiCarrollSemisupervised2}, and quantile and average treatment effect estimates \citep{ChakraborttyDaiSemisupervised3}.
While their efficiency is appealing, the complexity of semiparametric methods (which require estimating a nuisance function) makes them less accessible to users who do not have extensive statistical training. 

\paragraph{Loss-debiasing approaches:} More recently, prediction-powered inference (PPI) \citep{OriginalPPI}, and other works that build upon or extend it \citep{PPI++,MiaoPSPASomeErrorInX,PPIVariantAustralianJournal,ActivePPI,CrossPPI}, propose more streamlined methods for leveraging internal validation data for convex M-estimation tasks that use few tuning parameters. The estimators from these works have some conceptual similarities to the PTD estimator but have important technical differences: instead of debiasing the naive estimator directly they debias the empirical loss based on predictions. 
Such methods lead to valid confidence intervals, however they do not easily generalize to many use cases of interest. For example, the debiased empirical loss in PPI++ \citep[e.g., ][]{PPI++} can be nonconvex when using predicted covariates in a regression, violating the conditions under which valid coverage holds. See \citep{OptimalPPI} for a further discussion of the connections between the PTD and PPI estimators, and their semiparametrically efficient variants.

Several works in this space have considered nonuniform sampling \citep[e.g., ][]{StratifiedPPI, ActivePPI}, however they do not rely on the PTD estimator or bootstrap confidence intervals, and thus require case-by-case calculations to form valid confidence intervals. Lastly, we note that there are methods for the same setting as PPI that use modeling assumptions~\citep{wang2020methods, mccaw2023leveraging}---see~\citet{motwani2023revisiting} for a comparison with the PPI approach. 



\paragraph{Predict-Then-Debias approaches:} A simple alternative to approaches that use the validation sample to debias the empirical loss is to use the PTD approach, which readily generalizes to a broad class of estimators. PTD-type estimators have been studied for GLMs \citep{ChenAndChen2000,ChenAndChenLikeMedicalJournal,kremers2021generalsimplerobustmethod},  Cox regression models \citep{kremers2021generalsimplerobustmethod}, treatment effect estimators \citep{YangAndDing2020}, and more \citep{MiaoLuNeurIPS,gronsbell2024PromotesCC,PPBootNote}. However, some of these works \citep{ChenAndChen2000,ChenAndChenLikeMedicalJournal} use asymptotic variance calculations to construct confidence intervals, but generalizing to estimators beyond GLMs requires bespoke asymptotic variance calculations. 

Resampling approaches offer a different path forward that may bypass such calculations. \cite{kremers2021generalsimplerobustmethod} and \cite{MiaoLuNeurIPS} propose jackknife-based and bootstrap-based approaches, respectively, for estimating the covariance of the debiased estimator. However, neither work proves the consistency of the covariance estimators. \cite{YangAndDing2020} do provide theoretical guarantees for a bootstrapped-based approach, but 
that work proposes bootstrapping the asymptotic linear expansion terms rather than the estimator itself. This requires the calculation of the linear expansion for the estimator at hand. By contrast, our approach does not require such calculation, and we prove that the confidence intervals are valid under weak conditions. 

Lastly, a critical practical challenge is that internal validation samples are often weighted, stratified, or clustered random samples. To our knowledge, only \cite{YangAndDing2020} (and a brief remark in \cite{kremers2021generalsimplerobustmethod}) address the case where the validation sample is weighted and existing work does not address the cases where it is a clustered or stratified sample.

\section{Properties of the Predict-Then-Debias estimator}\label{sec:TheoreticalSectionPointEst}

We formalize the key properties of the tuned PTD estimator $\htPPhom$. Moreover, we present a couple of strategic ways to choose the tuning matrix $\hat{\Omega}$ and discuss the efficiency of the optimally tuned $\htPPhom$ relative to the classical estimator $\htc$, the PPI++ estimator \citep{PPI++}, and a commonly considered variant of the PTD estimator. Readers not interested in the theoretical development may prefer to skim the next two sections other than the algorithm definitions and then turn to the experiments in Section \ref{sec:RealDataExperiments}. 

\subsection{Formal setting and assumptions}
We start by introducing the formal setting, notation, and some assumptions.
Let $X=(X^{(1)},\dots,X^{(p)}) \in \mathbb{R}^p$ be a random vector drawn from a distribution $\mathbb{P}_X$ denoting the actual data and $\tilde{X}=(\tilde{X}^{(1)},\dots,\tilde{X}^{(p)})  \in \mathbb{R}^p$ be a random vector drawn from a distribution $\mathbb{P}_{\tilde{X}}$
 denoting the machine-learning-imputed proxy for $X$. The reader should think of the case where only a subset of the components of $X$ are imputed with predictions---$X=(\xobs,\xna)$ and $\tilde{X}=(\xobs,\txna)$. The investigator has $\ntot$ proxy samples $\big(  \tilde{X}_i \big)_{i=1}^\ntot$, and for a subset $\indc \subset\{1,\dots,N\}$ of these samples the corresponding vector of actual data, $X_i$, is also available. 

The goal is to use the small number of samples of $(X,\tilde{X})$ and the larger number of samples of $\tilde{X}$ to estimate some parameter $\theta=\phi(\mathbb{P}_X) \in \mathbb{R}^d$, where $\phi$ is a function that maps probability distributions to a summary statistic of interest.  
Note that $\theta$ can be any quantity describing the joint distribution of the entries of $X$. For example, $\theta$ can be the population mean or quantile of a component of $X$, the population correlation between two components of $X$, the population regression coefficient in a generalized linear model (GLM) relating one component of $X$ to other components of $X$. We denote by $\gamma = \phi(\mathbb{P}_{\tilde{X}}) \in \mathbb{R}^d$ the corresponding parameter of the joint distribution of the proxy $\tilde{X}$.


We now introduce assumptions about the sampling of $\indc$ that are analogous to the missing-at-random and positivity assumptions commonly seen in the missing data and sample survey literatures, respectively. Let $I_i \in \{0,1\}$ be the indicator taking on the value $1$ if $X_i$ is observed and $0$ otherwise. We assume the following:
\begin{assumption}[Sampling and missingness assumption]\label{assump:SamplingLabelling} \
    \begin{enumerate}
    [(i)] \vspace{-0.6em}
    \setlength{\itemsep}{1pt}
    \setlength{\parskip}{1pt}
    \item \textit{IID assumption:} $(I_i, X_i,\tilde{X}_i)_{i=1}^{\ntot} \stackrel{\text{iid}}{\sim} \mathbb{P}$;
    \item \textit{Missing at random assumption:} $I \indep X \giv \tilde{X}$; 
    \item \textit{Known sampling probability:} $\pi(\tilde{X}) \equiv \mathbb{P}(I=1 \giv \tilde{X})$ is known; 
    \item \textit{Overlap assumption:} For some $a,b \in (0,1)$,  $a \leq \pi(\tilde{X}) \leq b$ almost surely.
\end{enumerate}
\end{assumption}
For two-phase labeling designs, where an investigator first starts with $(\tilde{X}_i)_{i=1}^\ntot$ and subsequently chooses the subset of samples where $X_i$ is measured (i.e where $I_i=1$), the investigator can easily ensure by design that parts (ii)--(iv) of Assumption \ref{assump:SamplingLabelling} hold. Note that when $X = (\xobs,\txna)$ and $\tilde{X}=(\xobs,\txna)$, this means that $\pi(\tilde X)$ is allowed to depend on the observable features $\xobs$.



We consider settings where $\theta$ and $\gamma$ can be estimated using standard statistical software and an appropriate choice of weights. Specifically, let $\tilde{\cx} \in \mathbb{R}^{\ntot \times p}$ be the data matrix of the proxies whose $i$'th row is $\tilde{X}_i$ and let $\cx \in \mathbb{R}^{\ntot \times p}$ be the data matrix of the actual data whose $i$'th row is $X_i$ (which is observed only when $I_i=1$). Further, define the weights $$W_i \equiv I_i/ \pi(\tilde{X}_i) \quad \text{and} \quad \bar{W}_i \equiv (1-I_i)/(1-\pi(\tilde{X}_i)),$$ typically used for inverse probability weighted estimators, and let $\cwcalib \equiv (W_1,\dots,W_{\ntot}) \in \mathbb{R}^\ntot \ \text{and} \  \cwmain \equiv (\bar{W}_1,\dots,\bar{W}_{\ntot}) \in \mathbb{R}^\ntot.$ Let $\calA : \mathbb{R}^{\ntot \times p} \times \mathbb{R}^{\ntot} \to \mathbb{R}^d$ be a function that takes a data matrix and a weights vector and estimates $\theta$. For many choices of targets $\theta$, there are statistical software packages that implement such $\calA(\cdot ; \cdot)$, allowing a data matrix and a sample weight vector as inputs. Finally, we use the following shorthand notation 
$$\htc \equiv \calA(\cx; \cwcalib),\quad \hgc \equiv \calA(\tilde{\cx}; \cwcalib), \quad \text{and } \hgm \equiv \calA(\tilde{\cx}; \cwmain),$$ to describe the weighted estimator for $\theta$ based on the true data from the complete sample, the weighted estimator for $\gamma$ based on the proxy data from the complete sample, and the weighted estimator for $\gamma$ based on the proxy data from the incomplete sample, respectively. We remark that, even though many rows of $\cx$ are unobserved, $\calA(\cx; \cwcalib)$ can still be evaluated because the missing rows of $\cx$ are given zero weight according to the vector $\cwcalib$.

We study the tuned PTD estimator:
\begin{equation}\label{eq:hatThetaPP}
    \htPPhom \equiv \hat{\Omega} \hgm + ( \htc - \hat{\Omega} \hgc ).
\end{equation}
where $\hat{\Omega} \in \mathbb{R}^{d \times d}$ is a tuning matrix.
Recall its key property: it converges to $\theta$ regardless of the quality of the proxy samples. Formally, provided $\htc \xrightarrow{p} \theta$, $\hgc \xrightarrow{p} \gamma$, $\hgm \xrightarrow{p} \gamma$, and $\hat{\Omega} \xrightarrow{p} \Omega $ for some fixed $\Omega \in \mathbb{R}^{d \times d}$, then
$$\htPPhom \xrightarrow{p} \Omega \gamma + (\theta-\Omega \gamma)=\theta \quad \text{as } n, \ntot \to \infty.$$
In words, this means the PTD estimator is targeting the parameter $\theta$, which is the value that the algorithm would output with an infinitely large complete sample.


\subsection{Asymptotic normality of the PTD estimator}

We now show that the tuned PTD estimator can be well approximated by a normal distribution under certain regularity conditions.  
We begin with an assumption that holds for a large class of estimator functions $\calA(\cdot;\cdot)$.

\begin{assumption}[Asymptotic weighted linearity of $\htc$, $\hgc$, and $\hgm$]\label{assump:AsymptoticLinearity} There exist functions $\Psi : \mathbb{R}^p \to \mathbb{R}^d$ and $\tilde{\Psi}:  \mathbb{R}^p \to \mathbb{R}^d$ such that each component of the random vector $\big(\Psi(X) ,\tilde{\Psi}(\tilde{X}) \big)$ has mean $0$ and finite variance, and such that, as $\ntot \to \infty$,
    \begin{enumerate}[(i)] \vspace{-0.6em}
    \setlength{\itemsep}{1.2pt}
    \setlength{\parskip}{1.2pt}
    \item $\sqrt{\ntot} \big( \htc -\theta - \frac{1}{\ntot} \sum_{i=1}^\ntot W_i \Psi(X_i) \big) \xrightarrow{p} 0$, 
    \item  $\sqrt{\ntot} \big( \hgc -\gamma - \frac{1}{\ntot} \sum_{i=1}^\ntot W_i \tilde{\Psi}(\tilde{X}_i) \big) \xrightarrow{p} 0$,
    \item $\sqrt{\ntot} \big( \hgm -\gamma - \frac{1}{\ntot} \sum_{i=1}^\ntot \bar{W}_i \tilde{\Psi}(\tilde{X}_i) \big) \xrightarrow{p} 0$.
\end{enumerate}
\end{assumption}

While Assumption \ref{assump:AsymptoticLinearity} is stated in an abstract form, it holds for a broad class of common statistical estimators and there is often a mathematical formula that can be used to derive the functions $\Psi$ and $\tilde{\Psi}$. For example, under certain regularity conditions, $\Psi(\cdot)$ and $\tilde{\Psi}(\cdot)$ can be found by calculating the influence function of $\phi$ with respect to the distributions $\mathbb{P}_X$ and $\mathbb{P}_{\tilde{X}}$, respectively \citep{InfluenceFunctionEarlyPaper,VanderVaartTextbook}. 
Assumption \ref{assump:AsymptoticLinearity} also holds for M-estimators under fairly mild regularity conditions. M-estimators include a broad class of estimators of interest such as sample means, sample quantiles, linear and logistic regression coefficients, and regression coefficients in quantile regression or robust regression, among others. See Appendix \ref{sec:RegularityMestimation}, particularly Proposition \ref{prop:MestimatorsAsymptoticallyLinear} and Assumptions \ref{assump:SmoothEnoughForAsymptoticLineariaty} and \ref{assump:SufficientConditionsForConsistency}, for further details justifying Assumption \ref{assump:AsymptoticLinearity} for M-estimation.

Under Assumptions \ref{assump:SamplingLabelling} and \ref{assump:AsymptoticLinearity}, the following proposition shows that $\htPPhom$ is asymptotically normally distributed. 
Let $(W,\bar{W},X,\tilde{X})$ denote a random vector with the same distribution as $(W_i,\bar{W}_i,X_i,\tilde{X}_i)$ and let $\Psi(\cdot)$ and $\tilde{\Psi}(\cdot)$ denote the functions guaranteed by Assumptions \ref{assump:AsymptoticLinearity}. We let $\SigTC \equiv \var \big( W \Psi(X) \big)$, $\SigGC \equiv \var \big( W \tilde{\Psi}(\tilde{X}) \big)$, $\SigGM \equiv \var \big( \bar{W} \tilde{\Psi}(\tilde{X}) \big),$ and $\SigTGC \equiv \cov \big( W \Psi(X),W \tilde{\Psi}(\tilde{X}) \big)$ denote covariance matrices of interest, with $\SigTC$, $\SigGC$, and $\SigGM$ being the asymptotic covariance matrices of $\htc$, $\hgc$, and $\hgm$, respectively. 


\begin{proposition}\label{prop:PPEstimatorCLT}
    Under Assumptions \ref{assump:SamplingLabelling} and \ref{assump:AsymptoticLinearity}, if $\hat{\Omega} \xrightarrow{p} \Omega$ for some $\Omega \in \mathbb{R}^{d \times d}$, then as $\ntot \to \infty$, $\sqrt{\ntot} ( \htPPhom -\theta ) \xrightarrow{d} \mathcal{N} \big(0, \SigPTD (\Omega) \big),$ where \begin{equation}\label{eq:PPOmAsympVar}
\SigPTD (\Omega) \equiv \SigTC-\SigTGC \Omega^\tran -\Omega [ \SigTGC]^\tran +\Omega ( \SigGC + \SigGM) \Omega^\tran. \end{equation} 
\end{proposition}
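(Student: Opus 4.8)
The plan is to reduce $\sqrt{\ntot}(\htPPhom - \theta)$ to a sum of IID mean-zero random vectors, apply the multivariate central limit theorem, and then identify the limiting covariance with $\SigPTD(\Omega)$. First I would center each of the three pieces of the estimator using the identity
\[
\htPPhom - \theta = (\htc - \theta) - \hat{\Omega}(\hgc - \gamma) + \hat{\Omega}(\hgm - \gamma),
\]
which holds because the two copies of $\hat{\Omega}\gamma$ cancel. Multiplying by $\sqrt{\ntot}$, I would replace $\hat{\Omega}$ by its limit $\Omega$; the error incurred is $(\hat{\Omega}-\Omega)\sqrt{\ntot}(\hgm - \hgc)$. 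The overlap bound in Assumption \ref{assump:SamplingLabelling}(iv) makes $W \le 1/a$ and $\bar W \le 1/(1-b)$ bounded, so the finite-variance hypothesis of Assumption \ref{assump:AsymptoticLinearity} guarantees finite second moments of $W\Psi(X)$, $W\tilde{\Psi}(\tilde X)$, and $\bar W \tilde{\Psi}(\tilde X)$; consequently $\sqrt{\ntot}(\hgm-\gamma)$ and $\sqrt{\ntot}(\hgc-\gamma)$ are each $O_p(1)$. Since $\hat{\Omega}-\Omega = o_p(1)$, Slutsky's theorem makes the error term $o_p(1)$.

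Next I would substitute the asymptotic linear expansions of Assumption \ref{assump:AsymptoticLinearity}, obtaining
\[
\sqrt{\ntot}(\htPPhom - \theta) = \frac{1}{\sqrt{\ntot}}\sum_{i=1}^{\ntot} Z_i + o_p(1), \qquad Z_i \equiv W_i\Psi(X_i) - \Omega W_i \tilde{\Psi}(\tilde X_i) + \Omega \bar W_i \tilde{\Psi}(\tilde X_i),
\]
where the $Z_i$ are IID by Assumption \ref{assump:SamplingLabelling}(i). The crucial verification is $\e[Z_i]=0$: conditioning on $\tilde X$ and combining the missing-at-random assumption $I \indep X \giv \tilde X$ with $\e[I \giv \tilde X]=\pi(\tilde X)$ gives $\e[W\Psi(X)] = \e[\e[\Psi(X)\giv \tilde X]] = \e[\Psi(X)] = 0$, and analogously $\e[W\tilde{\Psi}(\tilde X)] = \e[\bar W \tilde{\Psi}(\tilde X)] = 0$. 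The multivariate CLT then yields $\frac{1}{\sqrt{\ntot}}\sum_i Z_i \xrightarrow{d} \mathcal{N}(0, \var(Z))$.

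Finally I would compute $\var(Z)$. Writing $Z = W\Psi(X) + \Omega\big(\bar W \tilde{\Psi}(\tilde X) - W\tilde{\Psi}(\tilde X)\big)$ and expanding, the key simplification is that $W\bar W = I(1-I)/[\pi(\tilde X)(1-\pi(\tilde X))] = 0$ almost surely, so $\cov\big(W\Psi(X), \bar W \tilde{\Psi}(\tilde X)\big) = 0$ and $\cov\big(W\tilde{\Psi}(\tilde X), \bar W \tilde{\Psi}(\tilde X)\big) = 0$. Hence the cross terms collapse to $-\SigTGC\Omega^\tran - \Omega[\SigTGC]^\tran$ and the quadratic term to $\Omega(\SigGC + \SigGM)\Omega^\tran$, recovering exactly $\SigPTD(\Omega)$ in \eqref{eq:PPOmAsympVar}.

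The main obstacle is the bookkeeping in the two places where the weight structure matters: verifying $\e[Z]=0$, which relies on the missing-at-random and known-sampling-probability parts of Assumption \ref{assump:SamplingLabelling}, and exploiting $W\bar W = 0$ to annihilate the covariance between the complete-sample and incomplete-sample contributions. By contrast, the $o_p(1)$ replacement of $\hat{\Omega}$ by $\Omega$ and the application of the CLT are routine once the moment bounds from the overlap assumption are in hand.
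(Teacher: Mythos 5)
Your proof is correct and follows essentially the same route as the paper's: the paper first establishes a joint CLT for $\hat{\zeta}=(\htc,\hgc,\hgm)$ (its Lemma \ref{lemma:JointCLTzeta}), using the same mean-zero verification via missing-at-random and the same observation that $W\bar{W}=I(1-I)/[\pi(\tilde X)(1-\pi(\tilde X))]=0$ kills the cross-covariances, and then maps it through $A_{\Omega}=\begin{bmatrix} I & -\Omega & \Omega\end{bmatrix}$ with the identical $o_p(1)\cdot O_p(1)$ argument for replacing $\hat{\Omega}$ by $\Omega$. Your only deviation is collapsing directly to the $d$-dimensional linear combination $Z_i$ rather than stating the $3d$-dimensional CLT as a separate lemma, which is an organizational rather than substantive difference.
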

We defer all proofs of the results from the main text to the appendix.



\subsection{Optimal tuning matrix}\label{sec:TuningMatrixOptions}

 Ideally, the tuning matrix $\hat{\Omega}$ is chosen to minimize the (asymptotic) variance of each component of $\htPPhom$. 
In Equation \eqref{eq:PPOmAsympVar}, note that for each $j \in \{1,\dots,d\}$, $[\SigPTD (\Omega)]_{jj}$ only depends on the $j$th row of $\Omega$, which we denote by $\Omega_{j \cdot} \in \mathbb{R}^d$. Further, $[\SigPTD (\Omega)]_{jj}$ is a quadratic function in $\Omega_{j \cdot}$ that is minimized when $\Omega_{j \cdot}=(\SigGC + \SigGM)^{-1} [ \SigTGC]^\tran e_j$, where $e_j$ is the $j$th canonical vector. Hence, setting $\Omega=\Omega_{\text{opt}}$, where \begin{equation}\label{eq:OptOmegaAsymp}
    \Omega_{\text{opt}} \equiv \SigTGC (\SigGC + \SigGM)^{-1},
\end{equation}  simultaneously minimizes each diagonal entry of $\SigPTD (\Omega)$. If one uses the tuning matrix \begin{equation}\label{eq:HatOmOpt}
    \hat{\Omega}_{\text{opt}} =\hSigTGC (\hSigGC + \hSigGM)^{-1}, 
\end{equation} and $\hSigTGC \xrightarrow{p} \SigTGC$, $\hSigGC \xrightarrow{p} \SigGC$, and $\hSigGM \xrightarrow{p} \SigGM$, then  $\hat{\Omega}_{\text{opt}} \xrightarrow{p} \Omega_{\text{opt}}$ and, by Proposition \ref{prop:PPEstimatorCLT}, we have a tuned PTD estimator $\htPPhomOpt$ with minimal asymptotic variance.

Because $\Omega_{\text{opt}}$ has $d^2$ tuning parameters, estimating $\htPPhomOpt$ can be unstable in small sample sizes. To address this, we also consider the optimal tuning matrix among the class of diagonal tuning matrices, reducing the number of tuning parameters from $d^2$ to $d$. Note that, by minimizing $d$ univariate quadratic equations, the asymptotic variance in Equation \eqref{eq:PPOmAsympVar} is minimized across all diagonal choices of $\Omega$ when letting $\Omega = \Omega_{\text{opt}}^{\text{(diag)}}$, where $\Omega_{\text{opt}}^{\text{(diag)}}$ is the diagonal matrix with
\begin{equation}\label{eq:OmegaOptDiag}
    [\Omega_{\text{opt}}^{\text{(diag)}}]_{jj} = [\SigTGC]_{jj}/ [\SigGC + \SigGM]_{jj}
\end{equation} 
Therefore, selecting $\hat{\Omega}$ such that $\hat{\Omega} \xrightarrow{p} \Omega_{\text{opt}}^{\text{(diag)}}$ minimizes the asymptotic variance of $\htPPhom$ among all possible diagonal choices of $\hat{\Omega}$.

\subsection{Efficiency of the tuned PTD estimator}\label{sec:EfficiencyCalculations}

We next state that $\htPPhomOpt$ is more efficient than the classical estimator $\htc$: for each coordinate $j$, the asymptotic variance of $\htPPhomOpt_j$ is smaller than that of $\htc_j$. This result has been established in similar settings (e.g., \cite{ChenAndChen2000,gronsbell2024PromotesCC}). 
\begin{proposition}
\label{prop:MoreEfficientThanclassical}
   Under Assumptions \ref{assump:SamplingLabelling} and \ref{assump:AsymptoticLinearity}, if  $\hat{\Omega}_{\textnormal{opt}} \xrightarrow{p} \Omega_{\textnormal{opt}}$, then $\sqrt{\ntot}(\htc-\theta) \xrightarrow{d} \mathcal{N}(0,\SigTC)$ and $
       \sqrt{\ntot}(\htPPhomOpt-\theta) \xrightarrow{d} \mathcal{N}(0,\Sigma_{\textnormal{TPTD}})$, where $$\Sigma_{\textnormal{TPTD}} \equiv \SigTC-\SigTGC (\SigGC+\SigGM)^{-1}[ \SigTGC]^\tran \preceq \SigTC.$$
\end{proposition}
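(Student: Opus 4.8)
The plan is to obtain both central limit theorems as instances of Proposition \ref{prop:PPEstimatorCLT} and then reduce the efficiency comparison to a single positive-semidefiniteness observation, so that essentially no new asymptotic machinery is needed. First I would recover the classical CLT by observing that $\htc$ is itself a tuned PTD estimator with tuning matrix equal to zero: taking the deterministic choice $\hat{\Omega}=0$ (which trivially satisfies $\hat{\Omega}\xrightarrow{p}0$) in \eqref{eq:hatThetaPP} gives $\htPPhom=\htc$, and substituting $\Omega=0$ into \eqref{eq:PPOmAsympVar} yields $\SigPTD(0)=\SigTC$. Proposition \ref{prop:PPEstimatorCLT} then immediately delivers $\sqrt{\ntot}(\htc-\theta)\xrightarrow{d}\mathcal{N}(0,\SigTC)$. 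Equivalently, one can argue directly from Assumption \ref{assump:AsymptoticLinearity}(i), which reduces the claim to a CLT for the IID average $\frac{1}{\ntot}\sum_i W_i\Psi(X_i)$; its summands have mean zero, since the missing-at-random and known-probability parts of Assumption \ref{assump:SamplingLabelling} give $\mathbb{E}[W\Psi(X)]=\mathbb{E}\big[\mathbb{E}[\Psi(X)\mid\tilde{X}]\big]=0$, and variance $\SigTC$ by definition.

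Next I would obtain the second CLT by applying Proposition \ref{prop:PPEstimatorCLT} with $\Omega=\Omega_{\textnormal{opt}}=\SigTGC(\SigGC+\SigGM)^{-1}$, justified by the hypothesis $\hat{\Omega}_{\textnormal{opt}}\xrightarrow{p}\Omega_{\textnormal{opt}}$. It then remains only to verify that $\SigPTD(\Omega_{\textnormal{opt}})$ collapses to $\Sigma_{\textnormal{TPTD}}$. Writing $M\equiv\SigGC+\SigGM$, which is symmetric so that $M^{-1}$ is symmetric, and $C\equiv\SigTGC$, we have $\Omega_{\textnormal{opt}}=CM^{-1}$ and $\Omega_{\textnormal{opt}}^{\tran}=M^{-1}C^{\tran}$. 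Each of the three correction terms in \eqref{eq:PPOmAsympVar} then equals $CM^{-1}C^{\tran}$, so they combine to $-CM^{-1}C^{\tran}-CM^{-1}C^{\tran}+CM^{-1}C^{\tran}=-CM^{-1}C^{\tran}$, giving $\SigPTD(\Omega_{\textnormal{opt}})=\SigTC-CM^{-1}C^{\tran}=\Sigma_{\textnormal{TPTD}}$, as claimed.

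Finally, the ordering $\Sigma_{\textnormal{TPTD}}\preceq\SigTC$ is equivalent to $CM^{-1}C^{\tran}\succeq0$. Since $M=\SigGC+\SigGM$ is a sum of covariance matrices it is positive semidefinite, and indeed positive definite (so that $M^{-1}$ and hence $\Omega_{\textnormal{opt}}$ are well defined); thus $M^{-1}\succeq0$ and $v^{\tran}CM^{-1}C^{\tran}v=(C^{\tran}v)^{\tran}M^{-1}(C^{\tran}v)\ge0$ for every $v\in\mathbb{R}^d$, which closes the argument. I do not anticipate a genuine obstacle: the only step requiring any care is the algebraic cancellation in the variance formula, which relies squarely on the symmetry of $M$, while the remaining pieces are a direct invocation of Proposition \ref{prop:PPEstimatorCLT} and an elementary quadratic-form inequality.
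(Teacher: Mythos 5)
Your proposal is correct and follows essentially the same route as the paper: invoke Proposition \ref{prop:PPEstimatorCLT} at $\Omega=\Omega_{\textnormal{opt}}$, simplify \eqref{eq:PPOmAsympVar} to $\Sigma_{\textnormal{TPTD}}=\SigTC-\SigTGC(\SigGC+\SigGM)^{-1}[\SigTGC]^{\tran}$, and conclude $\Sigma_{\textnormal{TPTD}}\preceq\SigTC$ from positive semidefiniteness of the subtracted quadratic-form term. The only cosmetic difference is how the classical CLT for $\htc$ is obtained---the paper reads it off the first $d$ coordinates of the joint CLT in Lemma \ref{lemma:JointCLTzeta}, while you recover it by taking $\hat{\Omega}=0$ in Proposition \ref{prop:PPEstimatorCLT} (or directly from Assumption \ref{assump:AsymptoticLinearity}(i)); both rest on the same underlying machinery.
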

To build intuition about the asymptotic efficiency of $\htPPhomOpt$ and how it compares to that of other estimators, we consider the case where $\htPPhomOpt$ is univariate (i.e., $d=1$) and where the complete sample is a uniform random subsample (i.e., for some $\pi_L \in (0,1)$, $\pi(\tilde{X})=\mathbb{P}(I=1 \giv \tilde X)=\pi_L$ always).  
In this case, the following holds
\begin{equation*} 
\frac{\sigma_{\textnormal{TPTD}}^2}{\sigma_{\textnormal{classical}}^2}=1-(1-\pi_L) \cdot \corr^2(\Psi(X),\tilde{\Psi}(\tilde{X})), 
\end{equation*}
This reveals two natural facts. First, $\htPPhomOpt$ has a larger gain over the classical approach when $\pi_L$, the probability of observing the real data, is small. This is intuitive: if most real samples are observed, then there is little to be gained from machine learning imputations and $\htc$ is a good estimator. Second, the greater the correlation between $\Psi(X)$ and $\tilde{\Psi}(\tilde{X})$, the greater the efficiency of $\htPPhomOpt$ relative to the classical approach. If $\tilde{X} \approx X$, we expect $\Psi(X) \approx \tilde{\Psi}(\tilde{X})$ and that $\corr(\Psi(X),\tilde{\Psi}(\tilde{X}))$ would be large.

In in the same setting, we also compare the variance of the optimally tuned PTD estimator to that of the optimally tuned PPI++ estimator \citep{PPI++}, which we call $\sigma_{\text{PPI++}}^2$. We prove in Appendix \ref{sec:PPI++ComparisonSection} that
\begin{equation*} 
\frac{\sigma_{\text{TPTD}}^2}{\sigma_{\text{PPI++}}^2} = \frac{1-(1-\pi_L) \cdot \corr^2(\Psi(X),\tilde{\Psi}(\tilde{X}))}{1-(1-\pi_L) \cdot \corr^2(\Psi(X),\Psi(\tilde{X}))}.  
\end{equation*}
This is of interest because it shows that recent proposals in the prediction-powered inference literature that involve debiasing the loss function (e.g., \cite{PPI++}) are not necessarily less efficient than the PTD approach. Indeed, in Appendix \ref{sec:PPI++ComparisonSection} we present examples where $\sigma_{\text{PPI++}}^2>\sigma_{\text{TPTD}}^2$ and other examples where $\sigma_{\text{PPI++}}^2<\sigma_{\text{TPTD}}^2$. The latter is the first such case exhibited in the literature.

 
Lastly, the reader might wonder about the efficiency of a PTD-type estimator that uses all samples (rather than just incomplete samples) to calculate $\hgm$. It turns out this variant has the same asymptotic variance as the PTD estimator herein, provided the complete sample is a uniform random subsample and the optimal tuning matrices are used; see Appendix \ref{sec:PTDusingGammaHatAll}.


\section{Bootstrap confidence intervals}\label{sec:BootstrapCIOverallSection}

We now develop bootstrap algorithms that construct confidence intervals based on $\htPPhom$. Algorithm~\ref{alg:FullPercentileBootstrap} generalizes Algorithm~\ref{alg:FullPercentileBootstrapWarmup} to non-uniformly weighted settings. We then introduce a computational speedup in Algorithm~\ref{alg:QuickConvolutionBootstrap}. We prove that both algorithms provide asymptotically valid confidence intervals under suitable assumptions. Finally, we discuss generalizations of these bootstrap approaches to clustered and stratified sampling settings. 
 
The bootstrap approaches presented in this section are more flexible than CLT-based approaches for constructing confidence intervals in the sense that they do not require the mathematical calculation of the asymptotic variance terms. Nonetheless, for completeness, we present a CLT-based approach in Appendix \ref{sec:CLTBasedCIsAlgorithmAndTheory}.

\subsection{Main bootstrap algorithm and its validity}
\label{sec:BootstrapBasedCIs}

We first introduce the necessary notation. Let $V_i=(W_i,\bar{W}_i,X_i,\tilde{X}_i)$ for each $i$. Further, let $\hat{\mathbb{P}}_{\ntot}=\frac{1}{\ntot} \sum_{i=1}^\ntot \delta_{V_i}$ be the empirical distribution of $V_i$ from the $\ntot$ samples, where $\delta_{v}$ assigns a point mass of $1$ at $v$ and $0$ elsewhere. Next, define a single bootstrap draw of $\htc$, $\hgc$, $\hgm$, and $\htPPhom$ to be the version of that quantity with a starred superscript in the following procedure: \begin{enumerate}
\vspace{-0.6em}
\setlength{\itemsep}{1.2pt}
    \setlength{\parskip}{1.2pt}
    \item Draw $V_1^*,\dots,V_{\ntot}^* \stackrel{\text{iid}}{\sim} \hat{\mathbb{P}}_{\ntot}$ and set $(W_i^*,\bar{W}_i^*,X_i^*,\tilde{X}_i^*)=V_i^*$ for each $i \in \{1,\dots,\ntot\}$.
    \item Set $\cwcalibStar=(W_1^*,\dots,W_{\ntot}^*)$, $\cwmainStar=(\bar{W}_1^*,\dots,\bar{W}_{\ntot}^*)$ and $\cx^*,\tilde{\cx}^* \in \mathbb{R}^{\ntot \times p}$ such that the $i$'th rows of $\cx^*$ and $\tilde{\cx}^*$ are $X_i^*$ and $\tilde{X}_i^*$, respectively.
    \item Evaluate $\htcarg{,*} = \calA(\cx^*; \cwcalibStar)$, $\hgcarg{,*} = \calA(\tilde{\cx}^*; \cwcalibStar)$, and  $\hgmarg{,*} = \calA(\tilde{\cx}^*; \cwmainStar)$.
    \item Set $\htPPhomStar=\hat{\Omega} \hgmarg{,*} + (\htcarg{,*}- \hat{\Omega} \hgcarg{,*})$. 
\end{enumerate} 
With this in hand, Algorithm \ref{alg:FullPercentileBootstrap} computes confidence intervals at level $\alpha$ for each component of $\htPPhom$ by first taking $B$ independent draws from the bootstrap distribution and returning the $\alpha/2$ and $1-\alpha/2$ empirical quantiles of each coordinate. Note that $\calA(\cx^*; \cwcalibStar)$ can be evaluated despite unobserved rows of $\cx^*$ because $\cwcalibStar$ assigns zero weight to such rows.


\begin{varalgorithm}{2}
\caption{Predict-Then-Debias Bootstrap}\label{alg:FullPercentileBootstrap}
\begin{algorithmic}[1]
\For{$b= 1,\dots,B$}
\State Let $\ci$ be a vector of length $\ntot$ generated by sampling $\{1,2,\dots,\ntot\}$ with replacement
\State $(\cwcalibStar,\cwmainStar, \cx^*,\tilde{\cx}^*) \gets (\cwcalib_{\ci},\cwmain_{\ci},\cx_{\ci \cdot},\tilde{\cx}_{\ci \cdot})$
\State $\htcarg{,\bparen} \gets \calA(\cx^*; \cwcalibStar)$
\State $\hgcarg{,\bparen} \gets \calA(\tilde{\cx}^*; \cwcalibStar)$
\State $\hgmarg{,\bparen} \gets \calA(\tilde{\cx}^*; \cwmainStar)$
\EndFor
\State Select tuning matrix $\hat{\Omega}$ (e.g., using \ref{alg:FullBootTuningSubroutine})
\State $\htPPhomB \gets \hat{\Omega} \hgmarg{,\bparen} + (\htcarg{,\bparen}-\hat{\Omega} \hgcarg{,\bparen} )$ for $b=1,\dots, B$
\State \Return  $\mathcal{C}_j^{1-\alpha} \gets \big( \text{Quantile}_{\alpha/2}( \{\htPPhomB_j\}_{b=1}^B), \text{Quantile}_{1-\alpha/2}( \{\htPPhomB_j\}_{b=1}^B ) \big) \quad \forall_{j \in \{1,\dots,d\}}$

\end{algorithmic}
\end{varalgorithm}

 To show that Algorithm \ref{alg:FullPercentileBootstrap} leads to asymptotically valid confidence intervals, we must first introduce some further notation and an assumption. Let \begin{equation}\label{eq:ZetaDefNoCov}
     \zeta \equiv (\theta,\gamma,\gamma), \quad \hat{\zeta} \equiv (\htc,\hgc,\hgm), \quad \text{and} \quad \hat{\zeta}^*\equiv(\htcarg{,*}, \hgcarg{,*}, \hgmarg{,*}).
 \end{equation}
Under a fixed realization of the data (i.e., for a fixed $\hat{\mathbb{P}}_{\ntot}$), we call the distribution of $\hat{\zeta}^*$ generated by the above 4-step empirical bootstrap procedure the bootstrap distribution of $\hat{\zeta}$, and we use $\mathbb{P}_*( A| \hat{\mathbb{P}}_{\ntot})$ to denote the probability that an event $A$ occurs under the bootstrap distribution of $\hat{\zeta}$. Below we introduce a bootstrap consistency assumption for $\hat{\zeta}$ that heuristically says the random bootstrap distribution of $\hat{\zeta}^*-\hat{\zeta}$ uniformly converges to the distribution of $\hat{\zeta}-\zeta$. The below consistency assumption is not trivial to check; however, because much literature has been devoted to proving the bootstrap consistency for a large variety of settings and estimators of interest \citep{ShaoAndTuTextbook,KosorokEmpiricalProcessTextbook,VDVAndWellnerTextbook}, we state bootstrap consistency of $\hat{\zeta}$ as an assumption and then give precise technical conditions where the assumption will be met for example use cases of interest.
 

\begin{assumption}[Bootstrap consistency and limiting distribution of $\hat{\zeta}$]\label{assump:ZetaBootstrapConsistency} For each fixed $v \in \mathbb{R}^{3d}$, \begin{enumerate}[(i)]
    \item $\sup\limits_{x \in \mathbb{R}} \vert \mathbb{P}_*( \sqrt{\ntot} v^\tran (\hat{\zeta}^*-\hat{\zeta})  \leq x \giv \hat{\mathbb{P}}_{\ntot})- \mathbb{P}(\sqrt{\ntot} v^\tran (\hat{\zeta}-\zeta) \leq x) \vert \xrightarrow{p} 0,$ and
    \item $\sqrt{\ntot} v^\tran (\hat{\zeta}-\zeta)$ converges in distribution to some random variable with a symmetric distribution and a continuous, strictly increasing CDF.
\end{enumerate}
    
\end{assumption}

This assumption holds in a variety of use cases of interest. For many common statistical estimands of interest, such as Z-estimators (including linear or logistic regression coefficients), quantiles, and L-statistics, among others, Assumption \ref{assump:ZetaBootstrapConsistency} will hold, provided that Assumption \ref{assump:SamplingLabelling} and mild regularity conditions (that are specific to the estimand) are met.


\begin{remark}[Bootstrap Consistency for Z-estimators] Z-estimators are estimators that solve for the zero of estimating equations, and include M-estimators with differentiable loss functions. When $\htc$, $\hgc$, and $\hgm$ are Z-estimators, $\hat{\zeta}=(\htc,\hgc,\hgm)$ is also a Z-estimator. Many works in the theoretical bootstrap literature (e.g., Chapters 10.3 and 13 of \cite{KosorokEmpiricalProcessTextbook}) show that under certain standard and fairly mild regularity conditions, Z-estimators satisfy the bootstrap consistency criteria in Assumption \ref{assump:ZetaBootstrapConsistency} or a stronger version of it. In Appendix \ref{sec:ZestimatorsBootstrapConsistencyConditions}, by direct application of such results, we present some sufficient (although not necessary) conditions under which Assumption \ref{assump:ZetaBootstrapConsistency} holds for Z-estimators.
    
\end{remark}

\begin{remark}[Hadamard differentiable estimators] As another route toward verifying Assumption~\ref{assump:ZetaBootstrapConsistency} holds, when $\htc$, $\hgc$, and $\hgm$, are each Hadamard differentiable functions of the empirical distributions of $X$ or $\tilde{X}$, $\hat{\zeta}=(\htc,\hgc,\hgm)$ will also be a Hadamard differentiable function of a particular empirical distribution. In Appendix \ref{sec:UsingHadamardDifferentiablityToShowConsistency} and Theorem \ref{theorem:HadamardDiffImpliesBootstrapConsitency}, we characterize precisely when and how Hadamard differentiability of the component estimators implies Assumption \ref{assump:ZetaBootstrapConsistency}.
A number of estimators including quantiles and trimmed means are Hadamard differentiable functions of the empirical distribution function. 
    
\end{remark}

The following theorem shows that Algorithm \ref{alg:FullPercentileBootstrap} provides asymptotically valid confidence intervals under certain conditions. 

\begin{theorem} 
\label{theorem:FullPercentileBootstrapCIsValid}
    Under Assumption \ref{assump:ZetaBootstrapConsistency}, if $\hat{\Omega}=\Omega +o_p(1)$, Algorithm \ref{alg:FullPercentileBootstrap} returns asymptotically valid confidence intervals $\mathcal{C}_1^{1-\alpha},\dots,\mathcal{C}_d^{1-\alpha}$, in the sense that $$\lim\limits_{\ntot,B \to \infty} \mathbb{P}(\theta_j \in \mathcal{C}_j^{1-\alpha})=1-\alpha \quad \text{for all } j \in \{1,\dots,d\}.$$ 
\end{theorem}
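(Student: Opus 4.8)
The plan is to recognize $\htPPhom$ as a data-dependent \emph{linear} functional of $\hat{\zeta}=(\htc,\hgc,\hgm)$ and to push the bootstrap consistency of $\hat{\zeta}$ (Assumption \ref{assump:ZetaBootstrapConsistency}) through this functional. Writing the block-row matrix $L(\Omega) := [\,I \;\; -\Omega \;\; \Omega\,] \in \mathbb{R}^{d \times 3d}$, with $I$ the $d \times d$ identity, we have $\htPPhom = L(\hat{\Omega})\hat{\zeta}$ and, crucially, $L(\hat{\Omega})\zeta = \theta - \hat{\Omega}\gamma + \hat{\Omega}\gamma = \theta$ exactly, because $\zeta = (\theta,\gamma,\gamma)$ has its last two blocks equal. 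Hence $\htPPhom - \theta = L(\hat{\Omega})(\hat{\zeta}-\zeta)$ with \emph{no} remainder term, and the same algebra gives $\htPPhomStar - \htPPhom = L(\hat{\Omega})(\hat{\zeta}^* - \hat{\zeta})$, since the algorithm fixes a single $\hat{\Omega}$ across all resamples. Fixing a coordinate $j$ and setting $v(\Omega) := L(\Omega)^\tran e_j$, everything reduces to the scalar statistics $T_\ntot := v(\hat{\Omega})^\tran \sqrt{\ntot}(\hat{\zeta}-\zeta)$ and its bootstrap analogue $T_\ntot^* := v(\hat{\Omega})^\tran \sqrt{\ntot}(\hat{\zeta}^* - \hat{\zeta})$.

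First I would pin down the limit law of $T_\ntot$. Decomposing $v(\hat{\Omega}) = v(\Omega) + (v(\hat{\Omega}) - v(\Omega))$, the term $v(\Omega)^\tran \sqrt{\ntot}(\hat{\zeta}-\zeta)$ converges in distribution to a limit $G_j$ with symmetric, continuous, strictly increasing CDF $F_{G_j}$ by Assumption \ref{assump:ZetaBootstrapConsistency}(ii) applied to the \emph{fixed} vector $v(\Omega)$; meanwhile $v(\hat{\Omega}) - v(\Omega) = o_p(1)$ (since $\hat{\Omega} = \Omega + o_p(1)$) multiplies the coordinatewise-tight vector $\sqrt{\ntot}(\hat{\zeta}-\zeta) = O_p(1)$, so the remainder is $o_p(1)$ and Slutsky gives $T_\ntot \xrightarrow{d} G_j$. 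I would then show the bootstrap law of $T_\ntot^*$ converges to the same $F_{G_j}$ uniformly in probability. Conditionally on the data, $\hat{\Omega}$ is a constant, so Assumption \ref{assump:ZetaBootstrapConsistency}(i) applied to $v(\Omega)$, combined with weak convergence of the target and a P\'olya-type upgrade (valid because $F_{G_j}$ is continuous), yields $\sup_x |\mathbb{P}_*(v(\Omega)^\tran \sqrt{\ntot}(\hat{\zeta}^* - \hat{\zeta}) \le x \giv \hat{\mathbb{P}}_{\ntot}) - F_{G_j}(x)| \xrightarrow{p} 0$. The same assumption applied coordinatewise forces conditional tightness of $\sqrt{\ntot}(\hat{\zeta}^* - \hat{\zeta})$ with probability tending to one, so the data-deterministic factor $v(\hat{\Omega}) - v(\Omega) = o_p(1)$ renders $(v(\hat{\Omega}) - v(\Omega))^\tran \sqrt{\ntot}(\hat{\zeta}^* - \hat{\zeta})$ bootstrap-negligible; a conditional Slutsky argument then gives $\sup_x |\mathbb{P}_*(T_\ntot^* \le x \giv \hat{\mathbb{P}}_{\ntot}) - F_{G_j}(x)| \xrightarrow{p} 0$.

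Since $F_{G_j}$ is continuous and strictly increasing, uniform convergence of the bootstrap CDF implies the bootstrap quantiles $\hat{q}^*_\beta$ of $T_\ntot^*$ satisfy $\hat{q}^*_\beta \xrightarrow{p} q_\beta$, the $\beta$-quantile of $G_j$, for $\beta \in \{\alpha/2,\, 1-\alpha/2\}$; and as $B \to \infty$ the empirical quantiles returned by the algorithm converge, for fixed $\ntot$, to these population bootstrap quantiles, so taking $B \to \infty$ first is harmless. The endpoints of $\mathcal{C}_j^{1-\alpha}$ are the $\alpha/2$ and $1-\alpha/2$ bootstrap quantiles of $\htPPhomStar_j$, i.e. $\htPPhom_j + \hat{q}^*_\beta / \sqrt{\ntot}$, so substituting and rearranging gives $\mathbb{P}(\theta_j \in \mathcal{C}_j^{1-\alpha}) = \mathbb{P}(-\hat{q}^*_{1-\alpha/2} < T_\ntot < -\hat{q}^*_{\alpha/2})$. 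Letting $\ntot \to \infty$ and using $T_\ntot \xrightarrow{d} G_j$, $\hat{q}^*_\beta \xrightarrow{p} q_\beta$, and the continuity of $F_{G_j}$, this converges to $F_{G_j}(-q_{\alpha/2}) - F_{G_j}(-q_{1-\alpha/2})$. Here the symmetry of $G_j$ about the origin---precisely why it is assumed in Assumption \ref{assump:ZetaBootstrapConsistency}(ii)---gives $-q_{\alpha/2} = q_{1-\alpha/2}$ and $-q_{1-\alpha/2} = q_{\alpha/2}$, collapsing the expression to $(1-\alpha/2) - \alpha/2 = 1-\alpha$.

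The main obstacle I anticipate is the second step: Assumption \ref{assump:ZetaBootstrapConsistency}(i) is stated only for fixed directions $v$, so the data-dependence of $\hat{\Omega}$ inside the bootstrap must be absorbed by a careful conditional Slutsky argument. This in turn requires establishing conditional tightness of the bootstrap empirical process from the coordinatewise CDF matches, and a P\'olya-type upgrade from weak to uniform CDF convergence, in order to legitimately conclude quantile consistency $\hat{q}^*_\beta \xrightarrow{p} q_\beta$. The algebraic reduction and the final coverage computation are routine by comparison once these bootstrap-consistency technicalities are handled.
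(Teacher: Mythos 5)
Your proposal follows essentially the same route as the paper's proof: reduce to the linear pivot in the fixed direction $v(\Omega)=L(\Omega)^\tran e_j$, absorb the $\hat{\Omega}-\Omega$ discrepancy by multiplying an $o_p(1)$ factor against tight (bootstrap) pivots, and conclude via uniform bootstrap-CDF convergence, quantile consistency, and symmetry of the limit law. The paper packages exactly the technicalities you flag into two auxiliary lemmas---a tightness lemma showing $\sqrt{\ntot}(A\hat{\zeta}-A\zeta)=O_p(1)$ and $\sqrt{\ntot}(A\hat{\zeta}^*-A\hat{\zeta})=O_p(1)$ unconditionally, and a percentile-bootstrap validity lemma that tolerates unconditional $o_p(1)$ perturbations of both pivots and treats the joint limit $\ntot,B\to\infty$ rigorously (rather than the iterated ``$B\to\infty$ first'' limit you invoke)---but the substance of the argument matches yours.
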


\subsection{A faster bootstrap procedure}
Algorithm \ref{alg:FullPercentileBootstrap} can be slow if the incomplete sample is large because it requires computing $\hgmarg{,*}$ for $B$ different bootstrap draws, which requires evaluating $\cal A$ on a large data set with each draw. (For the percentile bootstrap it is often recommended to choose $B=2{,}000$ or larger (e.g., \cite{LittleAndRubin}).) Next, we propose a convolution-based speed-up (Algorithm \ref{alg:QuickConvolutionBootstrap}) that replaces the computation of $\hgmarg{,*}$ with a Gaussian approximation. We show that this is valid when $\hgm=\calA(\tilde{\cx}; \cwmain)$ is asymptotically Gaussian and when a consistent estimator of its asymptotic variance is readily available. This convolution-based speed-up exploits the fact that $\hgm$ is asymptotically uncorrelated with $\htc$ and $\hgc$.

Implementation of the speed-up for Algorithm \ref{alg:FullPercentileBootstrap} merely requires a consistent estimator of $\SigGM$. Under Assumptions \ref{assump:SamplingLabelling} and \ref{assump:AsymptoticLinearity}, standard statistical software can typically be used to return a matrix $\widehat{\var}(\hgm)$ estimating $\var(\hgm)$ such that $\ntot \widehat{\var}(\hgm) \xrightarrow{p} \SigGM$ as $\ntot \to \infty$ (e.g., see Remark \ref{remark:SoftwareOftenGivesCovMatEst}). Letting $\widehat{\var}(\hgm)$ be such a readily computable estimate of $\var(\hgm)$, our proposed algorithm is as follows: 
\begin{varalgorithm}{3}
\caption{Speedup for Predict-Then-Debias Bootstrap}\label{alg:QuickConvolutionBootstrap}
\begin{algorithmic}[1]
\State Choose tuning matrix $\hat{\Omega}$ that is consistent for $\Omega$ 
\State $\hgm \gets \calA(\tilde{\cx}; \cwmain)$ \Comment{Naive estimate of $\theta$ using the incomplete sample's proxy data}
\State $\hSMatGM \gets \widehat{\var}(\hgm)$ \Comment{Often implemented in statistical software}
\State $\hat{L}_{\gamma} \gets \text{Cholesky}(\hSMatGM)$ \Comment{Cholesky decomposition such that $\hat{L}_{\gamma}\hat{L}_{\gamma}^\tran= \hSMatGM$}
\For{$b= 1,\dots,B$}
\State Let $\ci$ be a vector of length $\ntot$ generated by sampling $\{1,2,\dots,\ntot\}$ with replacement
\State $(\cwcalibStar, \cx^*,\tilde{\cx}^*) \gets (\cwcalib_{\ci},\cx_{\ci \cdot},\tilde{\cx}_{\ci \cdot})$
\State $\htcarg{,\bparen} \gets \calA(\cx^*; \cwcalibStar)$
\State $\hgcarg{,\bparen} \gets \calA(\tilde{\cx}^*; \cwcalibStar)$
\State $Z^{(b)} \sim \mathcal{N}(0,I_{d\times d})$ \Comment{Independent draw from standard multivariate normal}
\EndFor
\State Select tuning matrix $\hat{\Omega}$ (e.g., using \ref{alg:ConvBootTuningSubroutine})
\State $\htPPhomB \gets \hat{\Omega}( \hgm + \hat{L}_{\gamma}Z^{(b)}) + (\htcarg{,\bparen}-\hat{\Omega} \hgcarg{,\bparen} )$ for $b=1,\dots, B$ \label{line:CalcThetaPTDBootConvApprox}
\State \Return  $\mathcal{C}_j^{1-\alpha} \gets \big( \text{Quantile}_{\alpha/2}( \{\htPPhomB_j\}_{b=1}^B), \text{Quantile}_{1-\alpha/2}( \{\htPPhomB_j\}_{b=1}^B ) \big) \quad \forall_{j \in \{1,\dots,d\}}$
\end{algorithmic}
\end{varalgorithm}

The following theorem shows that under certain conditions, Algorithm \ref{alg:QuickConvolutionBootstrap} gives asymptotically valid confidence intervals. 

\begin{theorem}
\label{theorem:GaussianConvBootCIsValid}
    Under Assumptions \ref{assump:SamplingLabelling}--\ref{assump:ZetaBootstrapConsistency}, if $\hat{\Omega} \xrightarrow{p} \Omega$ and $\ntot \widehat{\var}(\hgm) \xrightarrow{p} \SigGM \succ 0$ as $\ntot \to \infty$, Algorithm \ref{alg:QuickConvolutionBootstrap} gives asymptotically valid confidence intervals $\mathcal{C}_1^{1-\alpha},\dots, \mathcal{C}_d^{1-\alpha}$ in the sense that $$\lim\limits_{\ntot,B \to \infty} \mathbb{P}(\theta_j \in \mathcal{C}_j^{1-\alpha})=1-\alpha \quad \text{for all } j \in \{1,\dots,d\}.$$ 
\end{theorem}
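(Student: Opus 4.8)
\subsection*{Proof proposal for Theorem~\ref{theorem:GaussianConvBootCIsValid}}

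The plan is to show that, for each coordinate $j$, the conditional bootstrap law of $\sqrt{\ntot}(\htPPhomB_j - \htPPhom_j)$ produced by Algorithm~\ref{alg:QuickConvolutionBootstrap} converges (weakly, in probability over the data) to the same limit as the sampling law of $\sqrt{\ntot}(\htPPhom_j - \theta_j)$. By Proposition~\ref{prop:PPEstimatorCLT} the latter is $\mathcal{N}\big(0,[\SigPTD(\Omega)]_{jj}\big)$; and since $\sqrt{\ntot}(\htPPhom_j - \theta_j) = v_j^\tran \sqrt{\ntot}(\hat{\zeta}-\zeta) + o_p(1)$ with $v_j = (e_j, -\Omega^\tran e_j, \Omega^\tran e_j)$, Assumption~\ref{assump:ZetaBootstrapConsistency}(ii) forces this limit to have a continuous, strictly increasing CDF, so in particular $[\SigPTD(\Omega)]_{jj}>0$. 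Once the conditional bootstrap CDF is matched to this continuous limit, the coverage conclusion follows by the identical quantile-convergence argument used to prove Theorem~\ref{theorem:FullPercentileBootstrapCIsValid}; thus the real work is to identify the bootstrap limit for Algorithm~\ref{alg:QuickConvolutionBootstrap}.

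First I would decompose the bootstrap increment from line~\ref{line:CalcThetaPTDBootConvApprox} into two conditionally independent pieces,
$$\sqrt{\ntot}(\htPPhomB - \htPPhom) = \underbrace{\sqrt{\ntot}\,\hat{\Omega}\hat{L}_\gamma Z^{(b)}}_{A_{\ntot}^{(b)}} \;+\; \underbrace{\sqrt{\ntot}\big[(\htcarg{,\bparen} - \htc) - \hat{\Omega}(\hgcarg{,\bparen} - \hgc)\big]}_{B_{\ntot}^{(b)}},$$
where conditional independence given $\hat{\mathbb{P}}_{\ntot}$ holds because $Z^{(b)}$ is drawn independently of the resampling indices. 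For the Gaussian surrogate, conditional on the data $A_{\ntot}^{(b)}$ is exactly $\mathcal{N}\big(0,\hat{\Omega}(\ntot \hSMatGM)\hat{\Omega}^\tran\big)$, and since $\ntot \hSMatGM \xrightarrow{p} \SigGM$ and $\hat{\Omega}\xrightarrow{p}\Omega$, its coordinate-$j$ conditional variance converges in probability to $[\Omega \SigGM \Omega^\tran]_{jj}$; hence the conditional law of $A_{\ntot,j}^{(b)}$ converges in probability to $\mathcal{N}\big(0,[\Omega\SigGM\Omega^\tran]_{jj}\big)$.

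For $B_{\ntot}^{(b)}$, I would first replace $\hat{\Omega}$ by $\Omega$: the remainder $\sqrt{\ntot}\,e_j^\tran(\Omega-\hat{\Omega})(\hgcarg{,\bparen}-\hgc)$ is a product of a data-measurable $o_p(1)$ factor and a conditionally $O_p(1)$ bootstrap factor (tightness of $\sqrt{\ntot}(\hgcarg{,\bparen}-\hgc)$ following from Assumption~\ref{assump:ZetaBootstrapConsistency}(i)), so it is negligible. The leading term equals $\sqrt{\ntot}\,\tilde{v}_j^\tran(\hat{\zeta}^*-\hat{\zeta})$ with $\tilde{v}_j = (e_j, -\Omega^\tran e_j, 0)$, so Assumption~\ref{assump:ZetaBootstrapConsistency}(i) gives that its conditional CDF converges uniformly in probability to that of $\sqrt{\ntot}\,\tilde{v}_j^\tran(\hat{\zeta}-\zeta)$, which by Assumption~\ref{assump:AsymptoticLinearity} is asymptotically $\mathcal{N}\big(0,[\SigTC - \SigTGC\Omega^\tran - \Omega\SigTGC^\tran + \Omega\SigGC\Omega^\tran]_{jj}\big)$. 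Invoking conditional independence, the conditional law of $A_{\ntot,j}^{(b)}+B_{\ntot,j}^{(b)}$ is the convolution of the two conditional laws, and because these variances add with no cross term (this is exactly the asymptotic independence of $\hgm$ from $(\htc,\hgc)$ already reflected in the absence of cross-covariances in \eqref{eq:PPOmAsympVar}), the limit is $\mathcal{N}\big(0,[\SigPTD(\Omega)]_{jj}\big)$, matching the sampling-distribution limit.

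I expect the main obstacle to be the final combining step: propagating two different modes of convergence---a Gaussian surrogate whose conditional variance converges \emph{in probability} versus a genuine resampling term whose conditional \emph{law} converges in probability---through the convolution while keeping the statement uniform in probability over the data. The clean route is to note that the convolution limit $\mathcal{N}\big(0,[\SigPTD(\Omega)]_{jj}\big)$ is continuous (as $[\SigPTD(\Omega)]_{jj}>0$), establish conditional weak convergence of the sum via factorization of conditional characteristic functions, and then upgrade to uniform convergence of the conditional CDF by P\'olya's theorem; the data-dependence of $\hat{\Omega}$ and $\hat{L}_\gamma$ is carried through all of these statements by the standard device of passing, along every subsequence, to a further subsequence on which the $o_p(1)$ terms converge almost surely.
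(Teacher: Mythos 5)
Your proposal is correct, and its skeleton---splitting the bootstrap increment into the conditionally independent Gaussian surrogate $A_{\ntot}^{(b)}$ and resampling term $B_{\ntot}^{(b)}$, convolving, matching the summed variance to $[\SigPTD(\Omega)]_{jj}$, and finishing with the same quantile-convergence argument as Theorem~\ref{theorem:FullPercentileBootstrapCIsValid}---is the same as the paper's. Where you genuinely diverge is in how the data-dependent matrices $\hat\Omega$ and $\hat L_\gamma$ are handled. The paper first proves bootstrap consistency for an \emph{idealized} pivot $\Omega L_\gamma Z + \sqrt{\ntot}\big(\htcarg{,*}-\Omega\hgcarg{,*}-(\htc-\Omega\hgc)\big)$ with fixed $\Omega$ and the true Cholesky factor $L_\gamma$ of $\SigGM$ (Theorem~\ref{theorem:ConsistencyOfGaussianConvBoot}, proved by a Fubini argument on conditional CDFs, where the Gaussian component's law is non-random), and then shows in the main proof that the algorithm's pivot equals this plus $o_p(1)$; that reduction requires $\sqrt{\ntot}\,\hat L_\gamma \xrightarrow{p} L_\gamma$, which the paper gets from continuity of the Cholesky map on positive-definite matrices---exactly where the hypothesis $\SigGM \succ 0$ is consumed. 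You avoid the Cholesky step altogether by observing that, conditional on the data, $A_{\ntot}^{(b)}$ is \emph{exactly} Gaussian with covariance $\hat\Omega(\ntot\hSMatGM)\hat\Omega^\tran \xrightarrow{p} \Omega\SigGM\Omega^\tran$, so only convergence of the variance, not of its matrix square root, is needed (your route barely uses $\SigGM\succ 0$ at all, beyond what Assumption~\ref{assump:ZetaBootstrapConsistency}(ii) already forces). The price is that your convolution combines two data-dependent conditional laws converging in different senses, which you correctly flag as the main obstacle and resolve with conditional characteristic functions, P\'olya's theorem, and subsequence extraction---in effect re-proving the combination of Theorem~\ref{theorem:ConsistencyOfGaussianConvBoot} and Lemma~\ref{lemma:IgnoreOp1TermsForBootCI} in a single pass, whereas the paper funnels all data-dependence into $o_p(1)$ perturbations handled once by that reusable lemma. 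Two smaller points: your tightness claim for $\sqrt{\ntot}(\hgcarg{,\bparen}-\hgc)$ is precisely the content of the paper's Lemma~\ref{lemma:TightnessHelperLemma} (it does follow from Assumption~\ref{assump:ZetaBootstrapConsistency}, but needs the argument given there rather than being immediate), and your explicit derivation that $[\SigPTD(\Omega)]_{jj}>0$ from Assumption~\ref{assump:ZetaBootstrapConsistency}(ii) combined with the CLT makes rigorous a continuity hypothesis the paper invokes only implicitly when applying P\'olya's theorem.
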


 We note that while standard software can often be used to return a consistent estimator of $\SigGM$ (which is all that is required for Algorithm \ref{alg:QuickConvolutionBootstrap}), it cannot typically be used to find a consistent estimator for $\SigTGC$ (which is required for constructing CLT-based confidence intervals via Algorithm \ref{alg:CLTBasedCIs} in Appendix~\ref{sec:CLTBasedCIsAlgorithmAndTheory}). Therefore, Algorithm \ref{alg:QuickConvolutionBootstrap} is easier to generalize to new estimators than the purely CLT-based Algorithm \ref{alg:CLTBasedCIs} while having faster runtime than Algorithm \ref{alg:FullPercentileBootstrap}.

\subsection{Subroutines for estimating optimal tuning matrix}

In this subsection, we present subroutines for Algorithms \ref{alg:FullPercentileBootstrap} and \ref{alg:QuickConvolutionBootstrap} to compute the tuning matrix $\hat{\Omega}$. The subroutines are designed to have the same computational complexity as the corresponding algorithm. The subroutines can easily be modified such that $\hat{\Omega}$ estimates the optimal $d \times d$ tuning matrix rather than the optimal diagonal tuning matrix by modifying the last line to return $\hat{\Omega} = \hSigTGC (\hSigGC+\hSigGM)^{-1}$; see Equation \eqref{eq:HatOmOpt}.

\begin{varalgorithm}{Subroutine 1}\floatname{algorithm}{}
\caption{Estimate optimal diagonal tuning matrix in Algorithm  \ref{alg:FullPercentileBootstrap}}\label{alg:FullBootTuningSubroutine}
\begin{algorithmic}[1]

\State  \textbf{Input} $B$ bootstrap draws $\big(\htcarg{,\bparen},\hgcarg{,\bparen},\hgmarg{,\bparen} \big)_{b=1}^B$ of $(\htc,\hgc,\hgm)$
\State $\hSigTGC \gets \ntot \widehat{\text{Cov}}( \{ \htcarg{,\bparen} \}_{b=1}^B, \{ \hgcarg{,\bparen} \}_{b=1}^B )$
\State $\hSigGC \gets \ntot \widehat{\text{Var}}( \{ \hgcarg{,\bparen} \}_{b=1}^B )$
\State $\hSigGM \gets \ntot \widehat{\text{Var}}( \{ \hgmarg{,\bparen} \}_{b=1}^B )$
\State \Return  $\hat{\Omega} \gets \text{Diag} \Big(\frac{[\hSigTGC]_{11}}{[\hSigGC]_{11}+[\hSigGM]_{11}}, \dots, \frac{[\hSigTGC]_{dd}}{[\hSigGC]_{dd}+[\hSigGM]_{dd}} \Big)$
\end{algorithmic}
\end{varalgorithm}

\begin{varalgorithm}{Subroutine 2}\floatname{algorithm}{}
\caption{Estimate optimal diagonal tuning matrix in Algorithm \ref{alg:QuickConvolutionBootstrap}}\label{alg:ConvBootTuningSubroutine}
\begin{algorithmic}[1]

\State  \textbf{Input} $\hSMatGM$ and $B$ bootstrap draws $\big(\htcarg{,\bparen},\hgcarg{,\bparen}\big)_{b=1}^B$ of $(\htc,\hgc)$ and 
\State $\hSigTGC \gets \ntot \widehat{\text{Cov}}( \{ \htcarg{,\bparen} \}_{b=1}^B, \{ \hgcarg{,\bparen} \}_{b=1}^B )$
\State $\hSigGC \gets \ntot \widehat{\text{Var}}( \{ \hgcarg{,\bparen} \}_{b=1}^B )$
\State $\hSigGM \gets \ntot \hSMatGM$
\State \Return  $\hat{\Omega} \gets \text{Diag} \Big(\frac{[\hSigTGC]_{11}}{[\hSigGC]_{11}+[\hSigGM]_{11}}, \dots, \frac{[\hSigTGC]_{dd}}{[\hSigGC]_{dd}+[\hSigGM]_{dd}} \Big)$
\end{algorithmic}
\end{varalgorithm}

\subsection{Cluster and stratified bootstraps}\label{sec:ClusterAndStratifiedBootstrapPreview}

In many applications of interest, it is more economical to measure $X$ for entire clusters of samples (e.g., all samples in a geographical unit) and forgo measuring $X$ entirely on the remaining clusters. Such cases will violate Assumption \ref{assump:SamplingLabelling} and can render the confidence intervals from Algorithms \ref{alg:FullPercentileBootstrap} and  \ref{alg:QuickConvolutionBootstrap} too narrow. These algorithms can readily be extended using a cluster bootstrap scheme to appropriately construct confidence intervals in such settings. The cluster bootstrap modification involves resampling entire clusters with replacement as opposed to resampling individual samples with replacement---see Section \ref{sec:ClusterBootstrapDescription} and Algorithm \ref{alg:ClusterBootstrap} for details.

Another common setting is stratified sampling, which can often reduce the number of samples needed.
In the stratified sampling that we consider, the population is partitioned into strata and a fixed number of incomplete samples and complete samples are drawn from each strata. 
In Appendix~\ref{sec:StratifiedBootstrapDescription} we present a modification of Algorithm \ref{alg:FullPercentileBootstrap} (Algorithm \ref{alg:StratifiedBootstrap}) that constructs confidence intervals for the PTD estimator that account for the stratified sampling scheme. We caution readers that Algorithm \ref{alg:StratifiedBootstrap} is only designed to work in regimes where there is a small number of large strata and instead point readers to Section 6.2.4 of \cite{ShaoAndTuTextbook} for variants of the bootstrap for stratified samples that are designed to work in other regimes. Theoretical justifications of Algorithms \ref{alg:ClusterBootstrap} and \ref{alg:StratifiedBootstrap} are out of scope for the current work, but we refer the reader to \cite{Davison_Hinkley_ChapterWithClusterBootstrap,ShaoAndTuTextbook} and references therein for details on these methods and \cite{ClusterBootstrapTheoreticalJustification} for a theoretical justification of the cluster bootstrap.

\section{Experiments}\label{sec:RealDataExperiments}

In this section, we present a variety of experiments using four different real datasets to validate our method and compare it to the classical approach (i.e.,  only using the gold-standard data from the complete sample). We also consider a number of variations of the PTD approach that involve different algorithms for constructing confidence intervals (see Section \ref{sec:ExperimentsVaryingCIMethod}) and different tuning matrix choices (see Section \ref{sec:ExperimentsVaryingTuningMatrix}). 

Our experiments focus on regression tasks such as linear regression, logistic regression, and quantile regression, and some of them involve weighted, stratified, or clustered labelling schemes. Let $Y$ and $Z$ be the subvectors of $X$ corresponding to the response variable and covariate vector, respectively, and define $\tilde{Y}$ and $\tilde{Z}$ as the analogous subvectors of $\tilde X$. 
Our regression experiments fall into 3 main categories:

\begin{enumerate}
    \item \textbf{Error-in-response regressions:} In an error-in-response regression, the investigator has access to a large incomplete sample with measurements of $\tilde{Y}$ and $Z$ and can obtain access to a much smaller complete sample with measurements of $Y$, $\tilde{Y}$ and $Z$. 
    \item \textbf{Error-in-covariate regressions:} In an error-in-covariate regression, the investigator has access to a large incomplete sample with measurements of $Y$ and $\tilde{Z}$ and can obtain access to a much smaller complete sample with measurements of $Y$, $Z$, and $\tilde{Z}$.
    \item \textbf{Error-in-both regressions:} In an error-in-both regression, the investigator has access to a large incomplete sample with measurements of $\tilde{Y}$ and $\tilde{Z}$ and can obtain access to a much smaller complete sample with measurements of $Y$, $\tilde{Y}$, $Z$, and $\tilde{Z}$.
\end{enumerate}
Recent works on prediction-powered inference \citep{OriginalPPI,PPI++,MiaoLuNeurIPS,PPBootNote} only test their methods for error-in-response regressions, so we focus on error-in-covariate and error-in-both regressions.

The experiments conducted are summarized in Table \ref{table:ExperimentSummary}. We describe the experimental setup and datasets in more detail in the following subsections.

\begin{table}[hbt!]
\caption{Summary of experiments. The penultimate column gives the expected number of complete samples in each simulation. The final column gives the algorithms that were tested to construct confidence intervals for the PTD estimator.
}
\label{table:ExperimentSummary}
\centering
\scriptsize
\begin{tabular}{c c c c c c r r c} 
\toprule
Exp \#  & Dataset & Model & Error Regime & Sampling Scheme & $\ntot \ \ \ $  & $\e[\nc]$ & Algorithms \\ 
\midrule
1  & AlphaFold &  Logistic Reg & Error-in-response & Weighted &  $7{,}500$ & $1{,}000$ &  \ref{alg:FullPercentileBootstrap},\ref{alg:QuickConvolutionBootstrap},\ref{alg:CLTBasedCIs} \\ [1ex] 
2  & Housing Price & Linear Reg & Error-in-covariate & Uniform &  $5{,}000$ & $500$ & \ref{alg:FullPercentileBootstrap},\ref{alg:QuickConvolutionBootstrap},\ref{alg:CLTBasedCIs} \\ [1ex] 
3   & Housing Price &  Quantile Reg & Error-in-covariate & Uniform &  $5{,}000$ & $1{,}000$ & \ref{alg:FullPercentileBootstrap},\ref{alg:QuickConvolutionBootstrap} \\ [1ex] 
4   & Tree cover &  Linear Reg & Error-in-both & Uniform & $5{,}000$ & $500$ & \ref{alg:FullPercentileBootstrap},\ref{alg:QuickConvolutionBootstrap},\ref{alg:CLTBasedCIs} \\ [1ex] 
5  & Tree cover &  Linear Reg & Error-in-both & Clustered & $\sim$$10{,}000 \ \ \ \ $ & $1{,}000$ & \ref{alg:ClusterBootstrap}, \ref{alg:QuickConvolutionBootstrap}/\ref{alg:ClusterBootstrap} \\ [1ex] 
6 &  Tree cover &  Logistic Reg & Error-in-both & Uniform & $8{,}000$ & $1{,}000$ & \ref{alg:FullPercentileBootstrap},\ref{alg:QuickConvolutionBootstrap},\ref{alg:CLTBasedCIs} \\ [1ex] 
7  & Census &  Linear Reg & Error-in-covariate & Stratified & $6{,}000$ & $1{,}000$ & \ref{alg:StratifiedBootstrap} \\ [1ex]

 \bottomrule
\end{tabular}
\smallskip

\end{table}

\subsection{Experimental setup}

For each experiment we use the following validation procedure. We start with $M$ samples where $X$ and $\tilde{X}$ are jointly measured, where $X=(Y,Z)$ and $\tilde{X}=(\tilde{Y},\tilde{Z})$. We calculate the ``ground truth" regression coefficients $\theta$ by regressing $Y$ on $Z$ using all $M$ samples (dashed lines in Figure \ref{fig:ViolinPlotFigure}) and the ``naive" regression coefficients $\hga$ by regressing $\tilde{Y}$ on $\tilde{Z}$ using all $M$ samples (red lines in Figure \ref{fig:ViolinPlotFigure}). Even if the $M$ samples are drawn from a superpopulation, we can treat $\theta$ estimated from the $M$ samples as our estimand of interest because all simulations are based on these $M$ empirical samples. We then conduct 500 simulations in which we \begin{enumerate}
    \item Randomly select a subsample of the $M$ samples of size $\ntot$. A subset of the $\ntot$ samples are randomly assigned to the complete sample, where all variables are observed. For all samples not assigned to the complete sample, the $\xna$ values are withheld.
    \item Compute the classical estimator of $\theta$ and the corresponding 90\% confidence interval using only data from the complete sample. Confidence intervals are calculated using the sandwich estimator for $\var(\htc)$.
    \item Run various PTD-based approaches using $(X,\tilde{X})$ from the complete sample and $\tilde{X}$ from the remaining samples to estimate $\theta$ and a corresponding 90\% confidence interval. In particular, we consider a number of choices of tuning matrices $\hat{\Omega}$ and Algorithms   \ref{alg:FullPercentileBootstrap}-- \ref{alg:CLTBasedCIs}, to construct 90\% confidence intervals for $\htPPhom$. In some cases, we forgo Algorithm \ref{alg:CLTBasedCIs} or both Algorithms \ref{alg:QuickConvolutionBootstrap} and \ref{alg:CLTBasedCIs}  because of lack of implemented analytic expressions. 
\end{enumerate} 
Unless otherwise specified, step 1 is done by uniform random sampling without replacement; however, in some of our experiments we use a weighted, stratified, or cluster sampling scheme.
Finally, we calculate the coverage for each method by calculating the percentage of the $500$ simulations in which the 90\% confidence intervals contained $\theta$ (estimated from regressing $Y$ on $Z$ using all $M$ samples).

\subsection{Datasets}

We briefly describe the datasets used and experiments conducted. Further details are presented in Appendix \ref{sec:FurtherDetailsOnExperiments}.

\paragraph{AlphaFold Experiments:} We used a dataset of $M=10{,}802$ samples that originated from \cite{bludau2022structural} and was downloaded from Zenodo \citep{PPIZenodo}. Each sample had indicators $Z_{\text{Acet}},Z_{\text{Ubiq}} \in \{0,1\}$ of whether there was acetylation and ubiquatination, and an indicator $Y_{\text{IDR}} \in \{0,1\}$ of whether the protein region was an internally disordered region (IDR) coupled with a prediction of $Y_{\text{IDR}}$ based on AlphaFold \citep{AlphaFoldPaper}. We test Algorithms \ref{alg:FullPercentileBootstrap}--\ref{alg:CLTBasedCIs} when the estimands are the regression coefficients for a logisitic regression of $Y_{\text{IDR}}$ on $(Z_{\text{Acet}},Z_{\text{Ubiq}},Z_{\text{Acet}} \times Z_{\text{Ubiq}})$. $Y_{\text{IDR}}$ was withheld on all samples outside from a randomly selected complete sample. The complete sample was selected according to a weighted sampling scheme such that for each of the 4 possible combinations of $Z_{\text{Acet}}$ and $Z_{\text{Ubiq}}$, there were 250 complete samples in expectation. 

\paragraph{Housing Price Experiments:} 

We used a dataset of gold-standard measurements and remote sensing-based estimates of economic and environmental variables from \cite{MOSAIKSPaper,MOSAIKSSourceCode} that was used in \cite{ProctorPaper} to study multiple imputation methods and in \cite{KerriRSEPaper} to study the Predict-Then-Debias method. Each of the $M=46{,}418$ samples corresponded to a distinct $\sim 1 \text{km} \times 1 \text{km}$ grid cell, and included grid cell-level averages of housing price, income, nightlight intensity, and road length. We considered settings where the estimand was the regression coefficient of housing price on income, nightlight intensity, and road length, and where outside of a small complete sample, gold-standard measurements of nightlights and road length were unavailable (but proxies based on daytime satellite imagery were available for all samples). In Experiment 2 the estimands were regression coefficients from a linear regression, and the estimands in Experiment 3 were regression coefficients for a quantile regression with $q=0.5$.

\paragraph{Tree Cover Experiments:} We used a dataset of $M=67{,}968$ samples of $\sim 1 \text{km} \times 1 \text{km}$ grid cells taken from the previously mentioned data source \citep{MOSAIKSPaper,MOSAIKSSourceCode}. The variables included the percent of tree cover and grid cell-level averages of population and elevation. We considered settings where the estimands were the regression coefficients of tree cover on elevation and population, and where outside of a small complete sample, gold-standard measurements of tree cover and population were unavailable (but satellite-based proxies were available for all samples). In Experiments 4 and 5 the estimands were regression coefficients from a linear regression, while in Experiment 6 the estimands were regression coefficients from a logistic regression where the tree cover was binarized according to a meaningful threshold from a forestry perspective \citep{UDSAForestServiceReport}. In Experiment 5, the data and the complete samples were sampled via cluster sampling (where each cluster corresponded to a 0.5° $\times$ 0.5° grid cell), and the cluster bootstrap method (Algorithm \ref{alg:ClusterBootstrap}) was tested.

\paragraph{Census Experiments:} We used a dataset of income, age, and disability status of $M=200{,}227$ individuals from California taken from the 2019 US Census survey and downloaded via the Folktables interface \citep{FolktablesPaper}. We considered a setting where the estimands are the regression coefficients of income on age and disability status, and where disability status was only measured on a small complete sample. Outside the complete sample we used predictions of disability status based on a machine learning model that we trained using the previous year's census data. In Experiment 7 the incomplete sample and complete sample were taken according to stratified random sampling (with $K=4$ strata based on age buckets), and we tested Algorithm \ref{alg:StratifiedBootstrap}.

\subsection{Point estimates and confidence interval size}

\begin{figure}
    \centering
    \includegraphics[width=0.95\linewidth]{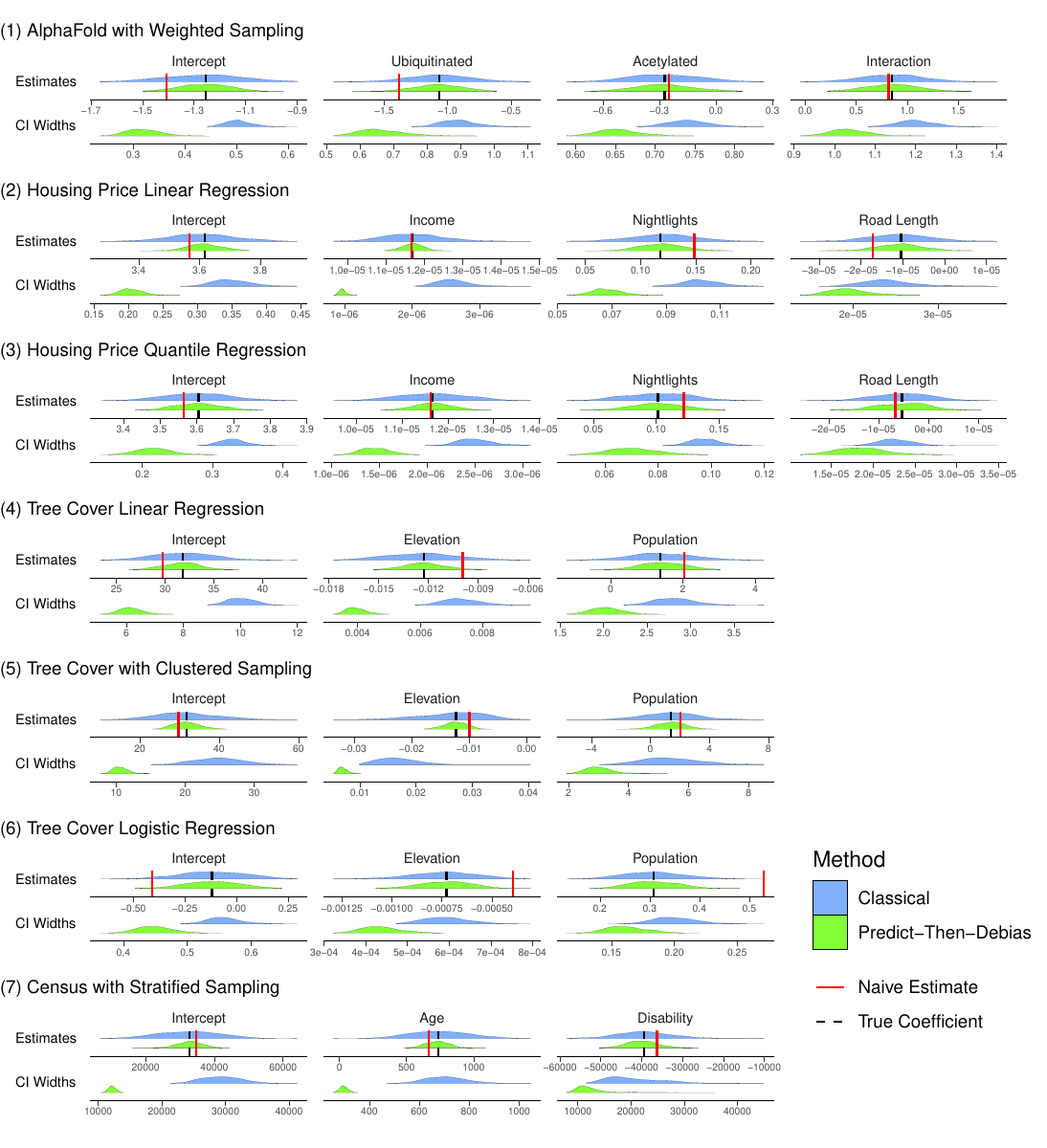}
    \caption{Half-violin plots of point estimates and confidence interval widths from the 7 experiments, each with 500 simulations. The panel names give the coefficient name and the number in parenthesis in each title gives the corresponding experiment number, according to the enumeration of experiments in Table \ref{table:ExperimentSummary}. For the green half-violin plots,  
    Algorithm \ref{alg:FullPercentileBootstrap} was used to construct confidence intervals (except for Experiments 5 and 7, where Algorithms \ref{alg:ClusterBootstrap} and \ref{alg:StratifiedBootstrap} were used, respectively). 
    }
    \label{fig:ViolinPlotFigure}
\end{figure}

For each experiment and regression coefficient, Figure \ref{fig:ViolinPlotFigure} gives the distribution across the 500 simulations of the point estimates and widths of the 90\% confidence intervals for the classical approach and for the PTD approach. In this figure, we only present the results for the PTD approach when the tuning matrix $\hat{\Omega}$ is chosen to estimate the asymptotically optimal diagonal tuning matrix given in Equation \eqref{eq:OmegaOptDiag} and when confidence intervals are calculated using fully bootstrap approaches (e.g., Algorithm \ref{alg:FullPercentileBootstrap}). Figure \ref{fig:ViolinPlotFigure} shows that in a fair number of cases the naive estimator has substantial bias, and that consistent with cautionary notes in the literature \citep{MythsEpi5Paper,Kluger24BiasDirection} the bias is often not attenuation towards zero. Meanwhile the classical and PTD estimators are consistently unbiased (they are centered on the dashed line). The PTD estimator consistently has lower variance and narrower confidence intervals than the classical estimator as guaranteed by Proposition \ref{prop:MoreEfficientThanclassical}.

Section \ref{sec:ExperimentsVaryingCIMethod} and Figure \ref{fig:ExperimentsVaryingCIMethod} present the empirical coverages and confidence interval widths for the faster alternatives to constructing confidence intervals for the PTD estimator. Meanwhile, Section \ref{sec:ExperimentsVaryingTuningMatrix} and Figure \ref{fig:ExperimentsVaryingTuningMatrix} presents the empirical coverages and confidence interval widths for other choices for the tuning matrix besides those estimating the asymptotically optimal diagonal tuning matrix.

\subsection{Comparing different confidence interval methods}\label{sec:ExperimentsVaryingCIMethod}

In Figure \ref{fig:ExperimentsVaryingCIMethod}, we present how the confidence interval widths and empirical coverage varied with the algorithm used to construct confidence intervals for $\htPPhom$. With the exception of the AlphaFold and the quantile regression experiments, in which mild overcoverage was observed, the confidence intervals for $\htPPhom$ had empirical coverages (across 500 simulations) that were close to the target coverage of $0.9$. The overcoverage in the AlphaFold experiment can be explained by the fact that each simulation involved sampling $7{,}500$ points without replacement from a dataset with $10{,}802$ points. Had we instead sampled with replacement or used a larger dataset, our superpopulation inference approach would not substantially overestimate the simulation scheme-specific variance of $\hgm$.

The 3 different approaches for constructing confidence intervals $\htPPhom$ yielded similar empirical coverages and confidence interval widths across the 7 experiments. Therefore, we recommend that investigators mainly consider runtime and ease-of-implementation when choosing between constructing CLT-based, convolution bootstrap-based, and fully bootstrap approaches to constructing confidence intervals for $\htPPhom$. Of these 3 approaches, CLT-based approaches are hardest to implement and generalize (requiring asymptotic variance calculations or an existing software implementation) but have the lowest runtime. On the other end of the spectrum, fully bootstrap approaches can be implemented in a few lines of code and require no asymptotic variance calculations but have the longest runtime. The convolution bootstrap approach is a compromise that leverages existing software that calculates asymptotic variance approximations and has intermediate runtime.

\begin{figure}[t]
    \centering 
    \includegraphics[width=0.95 \hsize]{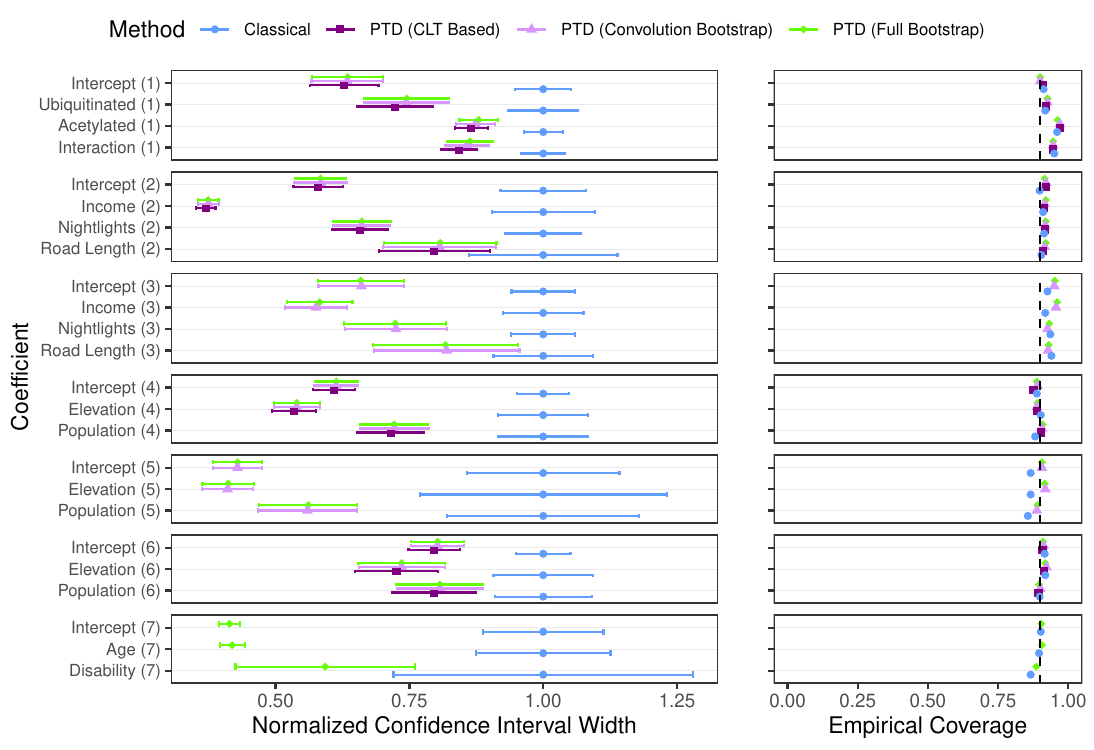}
    \caption{Confidence interval widths and empirical coverage for different confidence interval construction approaches.  Coefficients are normalized by the mean confidence interval width of the classical estimator. In the left column each point gives the average width of the 90\% confidence interval across 500 simulations for a given regression coefficient and method. The error bars give $\pm 1$ standard deviations of the confidence interval widths. The number in parenthesis on the y-axis denotes which experiment is being plotted, according to the enumeration of experiments in Table \ref{table:ExperimentSummary}. The right panel gives the empirical coverage across the 500 simulations for each method, experiment and coefficient, and the dashed vertical line is the desired coverage of 0.9. (CLT-based and convolution bootstrap-based speedups to the PTD method were not implemented in all instances, given that their implementation requires additional mathematical calculations.)}
    \label{fig:ExperimentsVaryingCIMethod}
\end{figure}

\subsection{Comparing different tuning matrix choices}\label{sec:ExperimentsVaryingTuningMatrix}

The results presented in Figures \ref{fig:ViolinPlotFigure} and \ref{fig:ExperimentsVaryingCIMethod} all use a tuning matrix $\hat{\Omega}$ that estimates the optimal diagonal tuning matrix $\Omega_{\text{opt}}^{\text{(diag)}}$ given in Equation \eqref{eq:OmegaOptDiag}. We next present additional results when using the tuning matrix $\hat{\Omega}_{\text{opt}}$ given in Equation \eqref{eq:HatOmOpt} (which estimates the optimal tuning matrix among all $d \times d$ matrices) and also when using the untuned PTD estimator (which has $\hat{\Omega}=I_{d \times d}$). In Figure \ref{fig:ExperimentsVaryingTuningMatrix}, for each regression coefficient, experiment, and tuning matrix choice we show the average and standard deviation of the 90\% confidence interval widths across the 500 simulations. For ease of comparison across coefficients and experiments, the confidence interval widths are only presented for full percentile bootstrap approaches (e.g., Algorithms \ref{alg:FullPercentileBootstrap}, \ref{alg:ClusterBootstrap}, and \ref{alg:StratifiedBootstrap}) and are normalized by the average confidence interval width of the classical estimator across the 500 simulations. Figure \ref{fig:ExperimentsVaryingTuningMatrix} also shows the empirical coverage for each tuning matrix choice.

\begin{figure}[t]
    \centering 
    \includegraphics[width=0.95 \hsize]{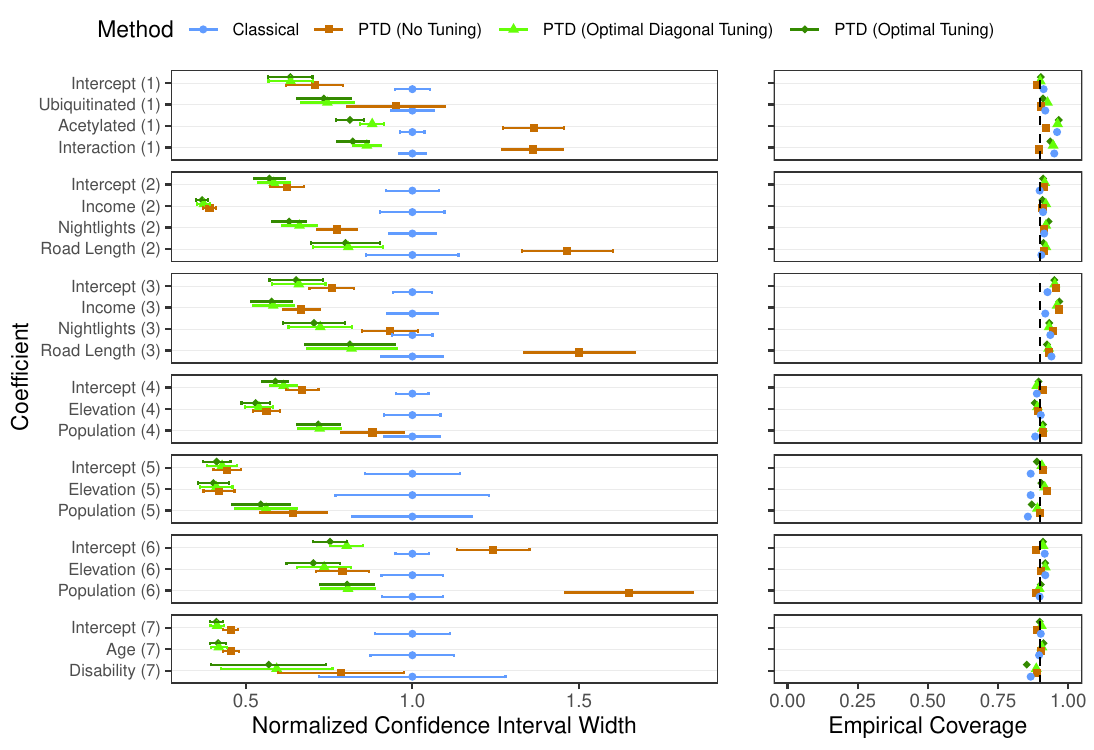}
    \caption{Confidence interval widths and empirical coverage for different tuning matrix choices. Other aspects of the plot are as in Figure~\ref{fig:ExperimentsVaryingCIMethod}.}
    \label{fig:ExperimentsVaryingTuningMatrix}
\end{figure}

The confidence intervals when using the optimal diagonal tuning matrix are typically similar in size to those when using the optimal tuning matrix. Moreover, the slightly narrower confidence intervals of the optimal tuning matrix comes at a cost of slightly poorer coverage, particularly in the stratified and clustered sampling experiments, which we suspect is driven by the high number of tuning parameters relative to the effective sample size. Meanwhile, the untuned PTD estimator sometimes has confidence intervals of comparable size to those of the optimally tuned PTD estimator, but it can also have confidence intervals that are much larger even than those of the classical estimator. Overall, this suggests that using a diagonal tuning matrix is enough to give intervals that are nearly as small as those given by the full tuning matrix, while an untuned PTD estimator is best avoided. Since the diagonal tuning is in general more stable because it has fewer parameters, we recommend this as the default choice.

\section*{Software and reproducibility code}

An \texttt{R} package with implementations of the Predict-Then-Debias bootstrap can be installed from \url{https://github.com/DanKluger/PTDBoot}. A Python implementation is available at \url{https://github.com/Earth-Intelligence-Lab/ppi_regression/tree/main}.
The code that was used to run the experiments and generate the figures can be found at \url{https://github.com/DanKluger/Predict-Then-Debias_Bootstrap}.

\section*{Acknowledgments}
This work was supported by the MIT Institute for Data Systems and Society Michael Hammer Postdoctoral Fellowship, by the U.S. Department of
Energy Computational Science Graduate Fellowship under Award Number DE-SC0023112, and by a Stanford Data Science postdoctoral fellowship. The authors also wish to thank Lihua Lei and Qidong Yang for helpful discussions and comments.

\bibliographystyle{apalike}
\bibliography{PTD_Bootstrap}

\appendix
\setcounter{table}{0}
\renewcommand{\thetable}{A\arabic{table}}
\setcounter{figure}{0}
\renewcommand{\thefigure}{A\arabic{figure}}

\newpage

 \section{Constructing confidence intervals using the CLT}\label{sec:CLTBasedCIsAlgorithmAndTheory}

To construct confidence intervals for $\htPPhom$ based on Proposition \ref{prop:PPEstimatorCLT}, the investigator needs to have consistent estimates of $\SigTC$, $\SigGC$, $\SigGM$, and $\SigTGC$. This is formalized in the following assumption, and remarks are given at the end of the section describing how an investigator can obtain consistent estimates of $\SigTC$, $\SigGC$, $\SigGM$, and $\SigTGC$.

\begin{assumption}(Consistent asymptotic variance estimators)\label{assump:ConsistentCovarianceMatrices} 
 The investigator can use the data to construct matrices $\hSigTC$, $\hSigGC$, $\hSigGM$ and $\hSigTGC$ such that as $\ntot \to \infty$, $\hSigTC \xrightarrow{p} \SigTC$, $\hSigGC \xrightarrow{p} \SigGC$, $\hSigGM \xrightarrow{p} \SigGM$, and $\hSigTGC \xrightarrow{p} \SigTGC$.
   
\end{assumption}

\begin{varalgorithm}{4}
\caption{Predict-Then-Debias CLT-based confidence intervals}\label{alg:CLTBasedCIs}
\begin{algorithmic}[1]
\State Compute $\hSigTC$, $\hSigGC$, $\hSigGM$, and $\hSigTGC$ satisfying Assumption \ref{assump:ConsistentCovarianceMatrices}
\State $\hgm \gets \calA(\tilde{\cx}; \cwmain)$ \Comment{Naive estimate of $\theta$ using the incomplete sample's proxy data}
\State $\hgc \gets \calA(\tilde{\cx}; \cwcalib)$ \Comment{Naive estimate of $\theta$ using the complete sample's proxy data}
\State $\htc \gets \calA(\cx; \cwcalib)$ \Comment{Estimate of $\theta$ using the complete sample's good data}
\State Select tuning matrix $\hat{\Omega}$ (e.g., using \ref{alg:CLTBasedTuningSubroutine} or Equation \eqref{eq:HatOmOpt})
\State $\htPPhom \gets \hat{\Omega} \hgm + (\htc -\hat{\Omega} \hgc) $
\State $\hat{\Sigma} \gets \hSigTC-\hSigTGC \hat{\Omega}^\tran -\hat{\Omega} [\hSigTGC]^\tran + \hat{\Omega} (\hSigGM+\hSigGC) \hat{\Omega}^\tran$
\State \Return  $\mathcal{C}_j^{1-\alpha} \gets \Big( \htPPhom_j- z_{1-\alpha/2} \sqrt{\hat{\Sigma}_{jj}/\ntot}, \htPPhom_j+z_{1-\alpha/2} \sqrt{\hat{\Sigma}_{jj}/\ntot}\Big) \quad \forall_{j \in \{1,\dots,d \}}$
\end{algorithmic}
\end{varalgorithm}

\begin{varalgorithm}{Subroutine 3}\floatname{algorithm}{}
\caption{Estimate optimal diagonal tuning matrix with plug-in estimator}\label{alg:CLTBasedTuningSubroutine}
\begin{algorithmic}[1]

\State  \textbf{Input} $\hSigGC$, $\hSigGM$, and $\hSigTGC$ which estimate $\SigGC$, $\SigGM$, and $\SigTGC$.
\State \Return  $\hat{\Omega} \gets \text{Diag} \Big(\frac{[\hSigTGC]_{11}}{[\hSigGC]_{11}+[\hSigGM]_{11}}, \dots, \frac{[\hSigTGC]_{dd}}{[\hSigGC]_{dd}+[\hSigGM]_{dd}} \Big)$ 
\end{algorithmic}
\end{varalgorithm}

\begin{corollary}\label{cor:CLTBasedCIsWork}
    Under Assumptions \ref{assump:SamplingLabelling}, \ref{assump:AsymptoticLinearity}, and \ref{assump:ConsistentCovarianceMatrices}, if $\hat{\Omega}=\Omega +o_p(1)$, Algorithm \ref{alg:CLTBasedCIs} returns asymptotically valid confidence intervals $\mathcal{C}_1^{1-\alpha},\dots,\mathcal{C}_d^{1-\alpha}$, in the sense that $$\lim\limits_{\ntot \to \infty} \mathbb{P}(\theta_{j} \in \mathcal{C}_j^{1-\alpha})=1-\alpha \quad \text{for all } j \in \{1,\dots,d\}.$$ 
\end{corollary}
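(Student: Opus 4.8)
The plan is to combine the central limit theorem for $\htPPhom$ from Proposition \ref{prop:PPEstimatorCLT} with the consistency of the plug-in variance estimator $\hat{\Sigma}$ computed inside Algorithm \ref{alg:CLTBasedCIs}, and then to studentize via Slutsky's theorem so that each reported interval $\mathcal{C}_j^{1-\alpha}$ is seen to invert an asymptotically pivotal statistic. Concretely, the hypotheses of Proposition \ref{prop:PPEstimatorCLT} all hold here: Assumptions \ref{assump:SamplingLabelling} and \ref{assump:AsymptoticLinearity} are assumed, and the condition $\hat{\Omega} = \Omega + o_p(1)$ is exactly $\hat{\Omega} \xrightarrow{p} \Omega$. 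The proposition therefore yields $\sqrt{\ntot}(\htPPhom - \theta) \xrightarrow{d} \mathcal{N}(0, \SigPTD(\Omega))$ with $\SigPTD(\Omega)$ as in Equation \eqref{eq:PPOmAsympVar}, and projecting onto coordinate $j$ gives $\sqrt{\ntot}(\htPPhom_j - \theta_j) \xrightarrow{d} \mathcal{N}(0, [\SigPTD(\Omega)]_{jj})$.

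Next I would show $\hat{\Sigma} \xrightarrow{p} \SigPTD(\Omega)$. The key observation is that the matrix $\hat{\Sigma}$ built in Algorithm \ref{alg:CLTBasedCIs} is precisely the right-hand side of \eqref{eq:PPOmAsympVar} with every population covariance replaced by its consistent estimator supplied by Assumption \ref{assump:ConsistentCovarianceMatrices} and with $\Omega$ replaced by $\hat{\Omega}$. Since $(\hSigTC, \hSigGC, \hSigGM, \hSigTGC, \hat{\Omega}) \xrightarrow{p} (\SigTC, \SigGC, \SigGM, \SigTGC, \Omega)$ and $\hat{\Sigma}$ is a fixed polynomial (hence continuous) function of these arguments, the continuous mapping theorem gives $\hat{\Sigma} \xrightarrow{p} \SigPTD(\Omega)$, and in particular the scalar convergence $\hat{\Sigma}_{jj} \xrightarrow{p} [\SigPTD(\Omega)]_{jj}$.

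With both pieces in place, I would studentize and translate to a coverage statement. Assuming $[\SigPTD(\Omega)]_{jj} > 0$, the square-root map is continuous at the limit, so $\sqrt{\hat{\Sigma}_{jj}} \xrightarrow{p} \sqrt{[\SigPTD(\Omega)]_{jj}} > 0$ and, with probability tending to one, $\hat{\Sigma}_{jj} > 0$ so that the interval endpoints are well-defined. Slutsky's theorem then yields
$$T_{\ntot,j} \equiv \frac{\sqrt{\ntot}(\htPPhom_j - \theta_j)}{\sqrt{\hat{\Sigma}_{jj}}} \xrightarrow{d} \mathcal{N}(0,1).$$
By the definition of the interval returned in Algorithm \ref{alg:CLTBasedCIs}, the event $\{\theta_j \in \mathcal{C}_j^{1-\alpha}\}$ coincides with $\{|T_{\ntot,j}| \le z_{1-\alpha/2}\}$; since the standard normal CDF is continuous, $\mathbb{P}(|T_{\ntot,j}| \le z_{1-\alpha/2}) \to \mathbb{P}(|Z| \le z_{1-\alpha/2}) = 1 - \alpha$ for $Z \sim \mathcal{N}(0,1)$, which is the desired conclusion.

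The main obstacle is the non-degeneracy condition $[\SigPTD(\Omega)]_{jj} > 0$: it is what guarantees the studentized statistic has a genuine standard normal limit and hence that coverage equals $1 - \alpha$ exactly rather than strictly exceeding it. If this diagonal entry vanishes, then $\hat{\Sigma}_{jj} \xrightarrow{p} 0$, the ratio $T_{\ntot,j}$ becomes a $0/0$ indeterminate form not pinned down by the CLT, and the interval contracts faster than $\ntot^{-1/2}$, typically producing over-coverage. I would handle this by adding the mild regularity condition that $[\SigPTD(\Omega)]_{jj} > 0$ for each $j$ (equivalently, that the coordinate-$j$ influence-function contribution in Assumption \ref{assump:AsymptoticLinearity} is not almost surely degenerate), after which the remainder of the argument is a routine application of the continuous mapping and Slutsky theorems.
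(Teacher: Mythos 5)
Your proposal is correct and follows essentially the same route as the paper's own proof: Proposition \ref{prop:PPEstimatorCLT} for the limiting normal law, the continuous mapping theorem to get $\hat{\Sigma} \xrightarrow{p} \SigPTD(\Omega)$, Slutsky's lemma to studentize coordinate-wise, and the same final coverage computation. Your flagged non-degeneracy condition $[\SigPTD(\Omega)]_{jj} > 0$ is a fair observation rather than a divergence: the paper's proof also implicitly relies on it when invoking Slutsky for the ratio $\hat{\Sigma}_{jj}^{-1/2}\sqrt{\ntot}(\htPPhom_j - \theta_j)$, but leaves it unstated.
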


\begin{proof}
     Note that by Assumptions \ref{assump:SamplingLabelling} and \ref{assump:AsymptoticLinearity} as well as Proposition \ref{prop:PPEstimatorCLT}, $\sqrt{\ntot}(\htPPhom-\theta) \xrightarrow{d} \mathcal{N}(0,\Sigma)$, where $$\Sigma=  \SigTC-\SigTGC \Omega^\tran -\Omega [ \SigTGC]^\tran +\Omega (\SigGM+\SigGC) \Omega^\tran.$$ By Assumption \ref{assump:ConsistentCovarianceMatrices} and since $\hat{\Omega} \xrightarrow{p} \Omega$, if we let $\hat{\Sigma}$ be the matrix returned by Algorithm \ref{alg:CLTBasedCIs}, $\hat{\Sigma} \xrightarrow{p} \Sigma$ by the continuous mapping theorem.

 Now fix $j \in \{1,\dots,d\}$. By the previous results $\sqrt{\ntot}(\htPPhom_j-\theta_j) \xrightarrow{d} \mathcal{N}(0,\Sigma_{jj})$ and $\hat{\Sigma}_{jj} \xrightarrow{p} \Sigma_{jj}$. Therefore by Slutsky's lemma, $\hat{\Sigma}_{jj}^{-1/2} \sqrt{\ntot}(\htPPhom_j-\theta_j) \xrightarrow{d} \mathcal{N}(0,1).$ Now let $\Phi(\cdot)$ denote the CDF of a standard normal distribution, let $z_{1-\alpha/2}$ be the $1-\alpha/2$ quantile of a standard normal, and let $\mathcal{C}_j^{1-\alpha}$ be the confidence interval returned by Algorithm \ref{alg:CLTBasedCIs}. Hence $$\begin{aligned} \lim_{\ntot \to \infty} \mathbb{P}(\theta_j \in \mathcal{C}_j^{1-\alpha}) & =  \lim_{\ntot \to \infty} \mathbb{P} \Big( \htPPhom_j- z_{1-\alpha/2} \sqrt{\hat{\Sigma}_{jj}/\ntot} < \theta_j <  \htPPhom_j+z_{1-\alpha/2} \sqrt{\hat{\Sigma}_{jj}/\ntot} \Big)
  \\ & =   \lim_{\ntot \to \infty} \mathbb{P} \Big(  -z_{1-\alpha/2}  < \hat{\Sigma}_{jj}^{-1/2}\sqrt{\ntot}(\htPPhom_j-\theta_j) <  z_{1-\alpha/2}  \Big)
   \\ & =   \lim_{\ntot \to \infty}  \mathbb{P} \big(  \hat{\Sigma}_{jj}^{-1/2}\sqrt{\ntot}(\htPPhom_j-\theta_j) <  z_{1-\alpha/2} \big) 
   \\ & \quad -  \lim_{\ntot \to \infty} \mathbb{P} \big(  \hat{\Sigma}_{jj}^{-1/2}\sqrt{\ntot}(\htPPhom_j-\theta_j) \leq  -z_{1-\alpha/2}\big) 
   \\ & = \Phi(z_{1-\alpha/2})-\Phi(-z_{1-\alpha/2})
   \\ & = 1-\alpha/2-\alpha/2=1-\alpha
 \end{aligned}$$ Above the penultimate step holds by the definition of convergence in distribution and continuity of $\Phi(\cdot)$, since $\hat{\Sigma}_{jj}^{-1/2} \sqrt{\ntot}(\htPPhom_j-\theta_j) \xrightarrow{d} \mathcal{N}(0,1)$, and the last step holds by symmetry of the standard normal and the fact that $z_{1-\alpha/2}$ is the $1-\alpha/2$ quantile of a standard normal.
\end{proof}

The following remarks give some guidance on how investigators can readily construct covariance matrix estimators that satisfy Assumption \ref{assump:ConsistentCovarianceMatrices}.

\begin{remark}\label{remark:SoftwareOftenGivesCovMatEst} When calculating $\htc = \calA(\cx; \cwcalib)$, $\hgc = \calA(\tilde{\cx}; \cwcalib)$, and $\hgm = \calA(\tilde{\cx}; \cwmain)$ via software that evaluates $\calA(\cdot ; \cdot)$ given a data matrix and weight vector as inputs, standard statistical software commonly also returns estimated covariance matrices $\widehat{\cov}(\htc)$, $\widehat{\cov}(\hgc)$, and $\widehat{\cov}(\hgm)$. Often, setting $\hSigTC=\ntot \widehat{\cov}(\htc)$, $\hSigGC=\ntot \widehat{\cov}(\hgc)$, $\hSigGM=\ntot \widehat{\cov}(\hgm)$ will give covariance matrices such that $\hSigTC \xrightarrow{p} \SigTC$, $\hSigGC \xrightarrow{p} \SigGC$, and $\hSigGM \xrightarrow{p} \SigGM$. For example, if $\calA(\cdot;\cdot)$ evaluates an M-estimator (such as a GLM or a sample quantile), under Assumptions \ref{assump:SamplingLabelling}, \ref{assump:AsymptoticLinearity} and other fairly mild regularity conditions, the sandwich estimators $\widehat{\cov}(\htc)$, $\widehat{\cov}(\hgc)$, and $\widehat{\cov}(\hgm)$ will be consistent for $\SigTC$, $\SigGC$, and $\SigGM$ after rescaling by $\ntot$. Moreover, for many common M-estimators, standard statistical software will compute sandwich estimators for $\calA(\cdot ; \cdot)$ given a data matrix and weight vector as input. 
\end{remark}

While $\SigTC$, $\SigGC$, and $\SigGM$ can often be consistently estimated using standard statistical software, most statistical software packages would not be equipped to estimate $\SigTGC$, because it is the asymptotic covariance matrix of two different estimators. That being said, in many cases it is possible for an investigator to derive a formula for $\SigTGC$ that can then be used to construct a consistent estimator of $\SigTGC$. In the next remark, we describe how to construct a consistent estimator of $\SigTGC$ for $M$-estimation tasks (and a similar analytic approach can also be taken to construct consistent estimators of $\SigTC$, $\SigGC$, and $\SigGM$).

\begin{remark}(Estimating $\SigTGC$  for M-estimation tasks) \label{remark:CrossCovMestimation} Using the same setup and notation as in Section \ref{sec:RegularityMestimation}, and assuming the regularity conditions of Proposition \ref{prop:MestimatorsAsymptoticallyLinear} and that the loss function is smooth enough for Hessians and expectations to be swapped, $\Psi(x)=-\e[\ddot{l}_{\theta}(X)]^{-1} \dot{l}_{\theta}(x)$ and $\tilde{\Psi}(\tilde{x})=-\e[\ddot{l}_{\gamma}(\tilde{X})]^{-1} \dot{l}_{\gamma}(\tilde{x})$. Moreover note that by Assumption $\ref{assump:SamplingLabelling}$ and the tower property $\e[W\Psi(X)]=0$, $\e[\tilde{W}\tilde{\Psi}(\tilde{X})]=0$, $\e[W\ddot{l}_{\theta}(X)]=\e[\ddot{l}_{\theta}(X)]$ and $\e[W\ddot{l}_{\gamma}(\tilde{X})]=\e[\ddot{l}_{\gamma}(\tilde{X})]$. Hence, $$\SigTGC = \e\big[W \Psi(X) [W \tilde{\Psi}(\tilde{X})]^\tran \big]=\big(\e[W\ddot{l}_{\theta}(X)] \big)^{-1} \e\big[W^2 \dot{l}_{\theta}(X) [\dot{l}_{\gamma}(\tilde{X})]^\tran \big] \big( \e[W\ddot{l}_{\gamma}(\tilde{X})] \big)^{-1}.$$ Setting $$\hat{D}_{1} = \frac{1}{\ntot} \sum_{i=1}^\ntot W_i \ddot{l}_{\htc}(X_i), \hat{C}_{12} = \frac{1}{\ntot} \sum_{i=1}^\ntot W_i^2 \dot{l}_{\htc}(X_i) [\dot{l}_{\hgc}(\tilde{X}_i)]^\tran, \text{ and }  \hat{D}_{2}= \frac{1}{\ntot} \sum_{i=1}^\ntot W_i \ddot{l}_{\hgc}(\tilde{X}_i),$$
it follows that under additional regularity conditions beyond those in Proposition \ref{prop:MestimatorsAsymptoticallyLinear} (e.g., if $\dot{l}_{\vartheta}$ and $\ddot{l}_{\vartheta}$ are locally Lipschitz in a neighborhood of $\vartheta=\theta$ and $\vartheta=\gamma$) $\hSigTGC=\hat{D}_1^{-1} \hat{C}_{12} \hat{D}_2^{-1} \xrightarrow{p} \SigTGC$. Therefore, under such regularity conditions, an investigator can consistently estimate $\SigTGC$ with the above estimator provided that they have a formula for the gradient and Hessian of the loss function with respect to $\vartheta$ evaluated at $\vartheta=\htc$ and $\vartheta=\hgc$.
\end{remark}

An alternative approach to estimating $\SigTGC$ that does not require analytic calculations is to use the bootstrap. The method in \cite{MiaoLuNeurIPS} proposes using the bootstrap to produce a consistent estimator of $\SigTGC$ and subsequently applying (a uniform sampling version of) Algorithm \ref{alg:CLTBasedCIs}. In this paper, we do not consider the approach of using the bootstrap-based estimator of $\SigTGC$ for use in Algorithm \ref{alg:CLTBasedCIs} for two reasons. First, Algorithm \ref{alg:CLTBasedCIs} is intended to have fast runtime whereas the bootstrap can be time consuming to implement. Second, if users are confronting a setting where it is difficult to find an analytic formula for $\SigTGC$, Algorithms \ref{alg:FullPercentileBootstrap} and \ref{alg:QuickConvolutionBootstrap} will provide asymptotically valid confidence intervals in certain settings of interest where the method in \cite{MiaoLuNeurIPS} could fail to provide asymptotically valid confidence intervals. In particular, in the settings of Theorems \ref{theorem:FullPercentileBootstrapCIsValid} (or Theorem \ref{theorem:GaussianConvBootCIsValid}), Algorithms \ref{alg:FullPercentileBootstrap} (or Algorithm \ref{alg:QuickConvolutionBootstrap}) provides asymptotically valid confidence intervals; however, the method in \cite{MiaoLuNeurIPS} uses a bootstrap-based estimator of the variance which requires an additional uniform integrability condition on the squared bootstrap pivots (e.g., see Chapter 3.1.6 of \cite{ShaoAndTuTextbook}). 


\section{Bootstrap for cluster or stratified sampling settings}\label{sec:ClusterAndStratifiedBootstrapMoreInDepth}

\subsection{Cluster bootstrap}\label{sec:ClusterBootstrapDescription}

Suppose that the $\ntot$ samples $(X_i,\tilde{X}_i)_{i=1}^\ntot$ are partitioned into $K$ clusters and that within each cluster, $X_i$ is either unobserved on all samples or is observed on all samples. In particular, let $C_1,\dots,C_K$ denote clusters which form a partition of $\{1,\dots,\ntot\}$ (i.e., for each $k$, $C_k \subset \{1,\dots,\ntot\}$ satisfying $\cup_{k=1}^K C_k = \{1,\dots,\ntot\}$ and $C_k \cap C_{k'} = \varnothing$ for all $k \neq k'$). In the cluster labelling schemes that we consider, $X_i$ is originally unobserved on all samples and then subsequently measured via the following procedure:

\begin{enumerate}
    \item Draw $\xi_1,\dots, \xi_K \stackrel{\text{Ind.}}{\sim} \text{Bernoulli}(\pi_k)$ for some $\pi_1,\dots,\pi_K \in (0,1)$.
    \item For each $k \in \{1,\dots,K\}$, if $\xi_k=1$, collect observations of $X_i$ for each $i \in C_k$ and if $\xi_k=0$ forgo collecting the $X_i$ observations for which $i \in C_k$.
\end{enumerate}
 
 According to the above sampling scheme, note that for each $k \in \{1,\dots,K\}$,  the inverse probability weights are given by $W_i=\xi_k/\pi_k$ and $\bar{W}_i=(1-\xi_k)/(1-\pi_k)$ for each $i \in C_k$. Letting $\cwcalib=(W_1,\dots,W_{\ntot})$ and $\cwmain=(\bar{W}_1,\dots,\bar{W}_{\ntot})$, Algorithm \ref{alg:ClusterBootstrap} gives a cluster bootstrap modification to Algorithm \ref{alg:FullPercentileBootstrap} that corrects confidence interval widths to account for the correlations induced by above sampling scheme.

\begin{varalgorithm}{5}
\caption{Predict-Then-Debias Bootstrap (cluster sampling setting)}\label{alg:ClusterBootstrap}
\begin{algorithmic}[1]
\For{$b= 1,\dots,B$}
\State Sample $k_1,k_2,\dots,k_K \stackrel{\text{iid}}{\sim} \text{Unif}(\{1,\dots,K\})$
\State $\mathcal{I} \gets \text{concatenate}(C_{k_1},\dots,C_{k_K})$ \Comment{$\vert \ci \vert$ may not equal $\ntot$}
\State $(\cwcalibStar,\cwmainStar, \cx^*,\tilde{\cx}^*) \gets (\cwcalib_{\ci},\cwmain_{\ci},\cx_{\ci \cdot},\tilde{\cx}_{\ci \cdot})$
\State $\htcarg{,\bparen} \gets \calA(\cx^*; \cwcalibStar)$
\State $\hgcarg{,\bparen} \gets \calA(\tilde{\cx}^*; \cwcalibStar)$
\State $\hgmarg{,\bparen} \gets \calA(\tilde{\cx}^*; \cwmainStar)$ \label{line:CalcHgmInClusterBoot}
\EndFor
\State Select tuning matrix $\hat{\Omega}$ (e.g., using \ref{alg:FullBootTuningSubroutine})
\State $\htPPhomB \gets \hat{\Omega} \hgmarg{,\bparen} + (\htcarg{,\bparen}-\hat{\Omega} \hgcarg{,\bparen} )$ for $b=1,\dots, B$
\State \Return  $\mathcal{C}_j^{1-\alpha} \gets \big( \text{Quantile}_{\alpha/2}( \{\htPPhomB_j\}_{b=1}^B), \text{Quantile}_{1-\alpha/2}( \{\htPPhomB_j\}_{b=1}^B ) \big) \quad \forall_{j \in \{1,\dots,d\}}$

\end{algorithmic}
\end{varalgorithm}

We note that Algorithm \ref{alg:ClusterBootstrap} can also be sped up using the same convolution approach as in Algorithm \ref{alg:QuickConvolutionBootstrap} when $\hgm$ is asymptotically normal. In particular one can replace the relatively slow step in Line \ref{line:CalcHgmInClusterBoot} of Algorithm \ref{alg:ClusterBootstrap}, by independently drawing $\hgmarg{,\bparen} \sim \mathcal{N}\big(\hgm, \widehat{\cov}(\hgm) \big)$, where $\widehat{\cov}(\hgm)$ is the estimated covariance matrix of $\hgm$ that only needs to be calculated once outside of the for loop. To use this speedup, $\widehat{\cov}(\hgm)$ should appropriately account for the cluster sampling scheme (e.g., this can be done fairly quickly using the \texttt{vcovCL} function in R).

\subsection{Bootstrap for stratified sampling}\label{sec:StratifiedBootstrapDescription}

Suppose that the population of samples $(X_i,\tilde{X}_i)_{i=1}^M$ can be partitioned into $K$ disjoint strata $S_1,\dots,S_K$. In particular, let $S_1,\dots,S_K$ denote strata which form a partition of $\{1,\dots,M\}$ (i.e., for each $k$, $S_k \subset \{1,\dots,M\}$ satisfying $\cup_{k=1}^K S_k = \{1,\dots, M \}$ and $S_k \cap S_{k'} = \varnothing$ for all $k \neq k'$). We consider stratified sampling schemes where initially none of the samples $(X_i,\tilde{X}_i)_{i=1}^M$ are observed and then subsequently for each $k=1,\dots,K$ the investigator:

\begin{enumerate}
    \item Obtains a uniform random subsample without replacement of $\big((X_i,\tilde{X}_i) \big)_{i \in S_k}$ of size $n_k^{\bullet}$. For each $i \in S_k$ let $I_i^{\bullet}$ be an indicator of whether $(X_i,\tilde{X}_i)$ was observed in this subsample. 
    \item Obtains a uniform random subsample without replacement of $(\tilde{X}_i )_{i \in S_k}$ of size $n_k^{\circ}$ and forgoes collecting $X_i$ observations on this subsample. For each $i \in S_k$ let $I_i^{\circ}$ be an indicator of whether $\tilde{X}_i$ was observed in this subsample.  
\end{enumerate}
 
 According to the above sampling scheme, note that for each $k \in \{1,\dots,K\}$, the inverse probability weights are given by $W_i^{\bullet}=I_i^{\bullet}  \vert S_k \vert /n_k^{\bullet}$ and $W_i^{\circ}= I_i^{\circ}\vert S_k \vert /n_k^{\circ}$ for each $i \in S_k$. Letting $\cwcalib=(W_1^{\bullet},\dots,W_M^{\bullet})$ and $\cwmain=(W^{\circ}_1,\dots,W_M^{\circ})$, Algorithm \ref{alg:StratifiedBootstrap} gives a stratified bootstrap modification to Algorithm \ref{alg:FullPercentileBootstrap} that corrects confidence interval widths to account for the above sampling scheme. We remark that this version of the stratified bootstrap is only designed to work in settings where there is a small number of large strata.

\begin{varalgorithm}{6}
\caption{Predict-Then-Debias Bootstrap (stratified sampling setting)}\label{alg:StratifiedBootstrap}
\begin{algorithmic}[1]
\For{$b= 1,\dots,B$}
\For{$k=1,\dots,K$}
\State  Sample $i_1,\dots,i_{n_k^{\bullet}} \stackrel{\text{iid}}{\sim} \text{Unif}( \{i \in S_k  \ : I_i^{\bullet}=1\})$
\State  Sample $l_1,\dots,l_{n_k^{\circ}} \stackrel{\text{iid}}{\sim} \text{Unif}( \{i \in S_k  \ : I_i^{\circ}=1\})$
\State $\mathcal{I}_k^{\bullet} \gets (i_1,\dots,i_{n_k^{\bullet}})$ \quad \text{and} \quad $\mathcal{I}_k^{\circ} \gets (l_1,\dots,l_{n_k^{\circ}})$
\EndFor
\State $\mathcal{I}^{\bullet} \gets \text{concatenate}(\mathcal{I}_1^{\bullet},\dots,\mathcal{I}_K^{\bullet})$  
\State $\mathcal{I}^{\circ} \gets \text{concatenate}(\mathcal{I}_1^{\circ},\dots,\mathcal{I}_K^{\circ})$ 
\State $(\cwcalibStar, \cx^{\bullet,*},\tilde{\cx}^{\bullet,*}) \gets (\cwcalib_{\ci^{\bullet}},\cx_{\ci^{\bullet} \cdot},\tilde{\cx}_{\ci^{\bullet} \cdot})$
\State $(\cwmainStar,\tilde{\cx}^{\circ,*}) \gets (\cwmain_{\ci^{\circ}},\tilde{\cx}_{\ci^{\circ} \cdot})$
\State $\htcarg{,\bparen} \gets \calA(\cx^{\bullet,*}; \cwcalibStar)$, $\hgcarg{,\bparen} \gets \calA(\tilde{\cx}^{\bullet,*}; \cwcalibStar)$, \text{ and } $\hgmarg{,\bparen} \gets \calA(\tilde{\cx}^{\circ,*}; \cwmainStar)$ 
\EndFor
\State Select tuning matrix $\hat{\Omega}$ (e.g., using   \ref{alg:FullBootTuningSubroutine})
\State $\htPPhomB \gets \hat{\Omega} \hgmarg{,\bparen} + (\htcarg{,\bparen}-\hat{\Omega} \hgcarg{,\bparen} )$ for $b=1,\dots, B$
\State \Return  $\mathcal{C}_j^{1-\alpha} \gets \big( \text{Quantile}_{\alpha/2}( \{\htPPhomB_j\}_{b=1}^B), \text{Quantile}_{1-\alpha/2}( \{\htPPhomB_j\}_{b=1}^B ) \big) \quad \forall_{j \in \{1,\dots,d\}}$

\end{algorithmic}
\end{varalgorithm}

\section{Proofs of theoretical results from Section \ref{sec:TheoreticalSectionPointEst}}\label{sec:ProofOfPropositions}

In this appendix we prove the propositions displayed in Section \ref{sec:TheoreticalSectionPointEst}. The proofs of those propositions, as well as those of many other theoretical results in the paper, rely on the following lemma that establishes asymptotic normality of $\hat{\zeta}$. Before presenting this lemma and these proofs we display formulas from the main text that are regularly used in these proofs and we also present some helpful notation. In particular, recall that using Assumption \ref{assump:AsymptoticLinearity}, we let $\Psi,\tilde{\Psi} : \mathbb{R}^p \to \mathbb{R}^d$ be functions such that each component of $\big(\Psi(X),\tilde{\Psi}(\tilde{X}) \big)$ has mean $0$ and finite variance, and such that $$\sqrt{\ntot} \Bigg( \begin{bmatrix}
    \htc \\ \hgc \\ \hgm \end{bmatrix}  -\begin{bmatrix}
    \theta \\ \gamma \\ \gamma \end{bmatrix}- \frac{1}{\ntot} \sum_{i=1}^\ntot \begin{bmatrix}  W_i \Psi(X_i)  \\  W_i \tilde{\Psi}(\tilde{X}_i) \\   \bar{W}_i \tilde{\Psi}(\tilde{X}_i) \end{bmatrix} \Bigg) \xrightarrow{p} 0.$$ Further define
\begin{equation}\label{eq:zetaSigmaZetaDef}
    \hat{\zeta} \equiv \begin{bmatrix} \htc \\  \hgc \\ \hgm \end{bmatrix}, \quad  \zeta \equiv \begin{bmatrix} \theta \\ \gamma \\ \gamma \end{bmatrix}, \quad \text{and} \quad \Sigma_{\zeta} \equiv \begin{bmatrix}
    \SigTC & \SigTGC & 0
    \\ [\SigTGC]^\tran & \SigGC & 0
    \\ 0 & 0 & \SigGM
\end{bmatrix},\end{equation} where recall that each of the displayed blocks of $\Sigma_{\zeta}$ is a $d \times d$ matrix, and $\SigTC \equiv \var \big( W \Psi(X) \big)$, $\SigGC \equiv \var \big( W \tilde{\Psi}(\tilde{X}) \big)$, $\SigTGC \equiv \cov \big( W \Psi(X),W \tilde{\Psi}(\tilde{X}) \big),$ and  $\SigGM \equiv \var \big( \bar{W} \tilde{\Psi}(\tilde{X}) \big).$

\begin{lemma}\label{lemma:JointCLTzeta} Under Assumptions \ref{assump:SamplingLabelling} and \ref{assump:AsymptoticLinearity}, $\sqrt{\ntot}(\hat{\zeta}-\zeta) \xrightarrow{d} \mathcal{N}(0,\Sigma_{\zeta}).$
\end{lemma}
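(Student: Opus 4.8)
The plan is to reduce the claim to the multivariate central limit theorem for i.i.d.\ vectors, using Assumption~\ref{assump:AsymptoticLinearity} to replace $\sqrt{\ntot}(\hat{\zeta}-\zeta)$ by a normalized sum of i.i.d.\ terms. Define the $3d$-dimensional random vectors
$$U_i \equiv \bigl( W_i \Psi(X_i),\, W_i \tilde{\Psi}(\tilde{X}_i),\, \bar{W}_i \tilde{\Psi}(\tilde{X}_i) \bigr), \qquad i=1,\dots,\ntot.$$
By the IID part of Assumption~\ref{assump:SamplingLabelling}, and because $W_i$ and $\bar{W}_i$ are deterministic functions of $(I_i,\tilde{X}_i)$, the $U_i$ are i.i.d.; moreover the three stacked blocks of Assumption~\ref{assump:AsymptoticLinearity} say exactly that $\sqrt{\ntot}(\hat{\zeta}-\zeta) - \ntot^{-1/2}\sum_{i=1}^\ntot U_i \xrightarrow{p} 0$. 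Thus, once I establish a CLT for $\ntot^{-1/2}\sum_i U_i$, the lemma follows from Slutsky's lemma.

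The remaining work is to verify that $U_i$ has mean zero and covariance $\Sigma_{\zeta}$. For the mean, I will condition on $\tilde{X}$ and use the missing-at-random assumption $I \indep X \giv \tilde{X}$ together with the known sampling probability $\e[I\giv\tilde{X}]=\pi(\tilde{X})$. For the first block, $\e[W\Psi(X)\giv\tilde{X}] = \pi(\tilde{X})^{-1}\e[I\giv\tilde{X}]\,\e[\Psi(X)\giv\tilde{X}] = \e[\Psi(X)\giv\tilde{X}]$, so a further expectation gives $\e[W\Psi(X)] = \e[\Psi(X)] = 0$; the analogous computations give $\e[W\tilde{\Psi}(\tilde{X})] = \e[\tilde{\Psi}(\tilde{X})] = 0$ and, using $\e[\bar{W}\giv\tilde{X}] = 1$, also $\e[\bar{W}\tilde{\Psi}(\tilde{X})]=0$, where the zero means of $\Psi(X)$ and $\tilde{\Psi}(\tilde{X})$ come from Assumption~\ref{assump:AsymptoticLinearity}. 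The diagonal blocks of $\cov(U_i)=\e[U_iU_i^\tran]$ are then $\SigTC$, $\SigGC$, $\SigGM$ and the leading off-diagonal block coupling the two $W_i$-weighted terms is $\SigTGC$, all by definition.

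The crux is the two off-diagonal blocks coupling the complete-sample terms (those weighted by $W_i$) with the incomplete-sample term (weighted by $\bar{W}_i$). These vanish because $W_i\bar{W}_i = I_i(1-I_i)/[\pi(\tilde{X}_i)(1-\pi(\tilde{X}_i))] = 0$ identically, since $I_i \in \{0,1\}$; hence $\e[W\Psi(X)(\bar{W}\tilde{\Psi}(\tilde{X}))^\tran]=0$ and $\e[W\tilde{\Psi}(\tilde{X})(\bar{W}\tilde{\Psi}(\tilde{X}))^\tran]=0$, producing exactly the block structure of $\Sigma_{\zeta}$ in~\eqref{eq:zetaSigmaZetaDef}. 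Finiteness of these second moments---needed both for the covariance to be well defined and for the CLT to apply---follows from the overlap assumption, which bounds $W_i \le 1/a$ and $\bar{W}_i \le 1/(1-b)$, combined with the finite variances of $\Psi(X)$ and $\tilde{\Psi}(\tilde{X})$. The main thing to get right is this zero cross-covariance argument and the careful conditioning in the mean computation; once $\e[U_i]=0$, $\cov(U_i)=\Sigma_{\zeta}$, and finite second moments are in hand, the classical multivariate CLT gives $\ntot^{-1/2}\sum_{i=1}^\ntot U_i \xrightarrow{d}\mathcal{N}(0,\Sigma_{\zeta})$ and the proof concludes via Slutsky.
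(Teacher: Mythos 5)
Your proposal is correct and follows essentially the same route as the paper's proof: the same i.i.d.\ decomposition $U_i=\bigl(W_i\Psi(X_i),W_i\tilde{\Psi}(\tilde{X}_i),\bar{W}_i\tilde{\Psi}(\tilde{X}_i)\bigr)$, the same tower-property/missing-at-random computation showing $\e[U_i]=0$, the same observation that $I_i(1-I_i)=0$ kills the cross-covariance blocks between the $W$-weighted and $\bar{W}$-weighted terms, and the same conclusion via the multivariate CLT plus Slutsky. Your explicit verification of finite second moments via the overlap bounds $W_i\le 1/a$, $\bar{W}_i\le 1/(1-b)$ is a small point the paper leaves implicit, but it does not change the argument.
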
 \begin{proof}
    
Let $V_i \equiv \big(W_i \Psi(X_i), W_i \tilde{\Psi}(\tilde{X}_i), \bar{W}_i \tilde{\Psi}(\tilde{X}_i) \big)$, and let $V$ denote a random variable with the same distribution as $V_i$. Note that by Assumption \ref{assump:AsymptoticLinearity} and rearranging terms, \begin{equation}\label{eq:ALexpansionAll}
    \sqrt{\ntot}(\hat{\zeta}-\zeta)= \sqrt{\ntot} \Bigg( \begin{bmatrix}  \htc \\  \hgc \\  \hgm \end{bmatrix} - \begin{bmatrix}  \theta \\  \gamma \\  \gamma \end{bmatrix} \Bigg) = \frac{1}{\sqrt{\ntot}} \sum_{i=1}^\ntot V_i +o_p(1),
\end{equation} where
$o_p(1)$ denotes a vector of terms that converge to $0$ in probability as $\ntot \to \infty$. Note that under Assumption \ref{assump:SamplingLabelling}, $$\e[V] = \begin{bmatrix} \e \big[ \e[W \Psi(X) \giv \tilde{X}]  \big] \\ \e \big[ \e[W \tilde{\Psi}(\tilde{X}) \giv \tilde{X} ] \big] \\ \e\big[ \e[\bar{W} \tilde{\Psi}(\tilde{X}) \giv \tilde{X} ] \big] 
\end{bmatrix} = \begin{bmatrix} \e \big[ \Psi(X) \e[W  \giv \tilde{X}]  \big] \\ \e \big[ \tilde{\Psi}(\tilde{X}) \e[W \giv \tilde{X} ] \big] \\ \e\big[ \tilde{\Psi}(\tilde{X}) \e[\bar{W}  \giv \tilde{X} ] \big] 
\end{bmatrix}  =\begin{bmatrix} \e [ \Psi(X)]  \\ \e [ \tilde{\Psi}(\tilde{X})  ]  \\ \e[ \tilde{\Psi}(\tilde{X}) ]
\end{bmatrix} =0.$$ Above the first equality holds by the tower property. The second equality above holds because $W \indep X \giv \tilde{X}$ (by the missing at random assumption in Assumption \ref{assump:SamplingLabelling}, $I \indep X \giv \tilde{X}$ and $W$ is a deterministic function of $I$ and $\tilde{X}$). The third equality above holds by linearity of conditional expectation and by our definitions that $W=I/\pi(\tilde{X})$, $\bar{W}=(1-I)/(1-\pi(\tilde{X}))$, and $\pi(\tilde{x})=\mathbb{P}(I=1\giv \tilde{X}= \tilde{x}) \Rightarrow \pi(\tilde{X})= \e[I \giv \tilde{X}]$. The final equality above holds by Assumption \ref{assump:AsymptoticLinearity}.

Since $\e[V]=0$, by combining Equation \eqref{eq:ALexpansionAll}, the multivariate CLT, and Slutsky's lemma, it is clear that $\sqrt{\ntot}(\hat{\zeta}-\zeta) \xrightarrow{d} \mathcal{N}(0, \Sigma_V )$, where $$\Sigma_V = \begin{bmatrix}
    \var \big(W \Psi(X) \big) & \cov \big(W \Psi(X),W \tilde{\Psi}(\tilde{X}) \big) & \cov \big(W \Psi(X), \bar{W} \tilde{\Psi}(\tilde{X}) \big)
    \\  \cov \big(W \tilde{\Psi}(\tilde{X}),W \Psi(X) \big) & \var \big(W \tilde{\Psi}(\tilde{X}) \big) & \cov \big(W \tilde{\Psi}(\tilde{X}), \bar{W} \tilde{\Psi}(\tilde{X}) \big)
    \\   \cov \big( \bar{W} \tilde{\Psi}(\tilde{X}),W \Psi(X) \big) & \cov \big( \bar{W} \tilde{\Psi}(\tilde{X}), W \tilde{\Psi}(\tilde{X}) \big) & \var \big( \bar{W} \tilde{\Psi}(\tilde{X}) \big)
\end{bmatrix}.$$ Thus to complete the proof it remains to show $\Sigma_V=\Sigma_{\zeta}$ defined in Equation \eqref{eq:zetaSigmaZetaDef}.
Letting $B_1 \equiv \cov \big(W \Psi(X), \bar{W} \tilde{\Psi}(\tilde{X}) \big)$ and $B_2 \equiv \cov \big(W \tilde{\Psi}(\tilde{X}), \bar{W} \tilde{\Psi}(\tilde{X}) \big)$, and recalling the formulas for $\SigTC$, $\SigGC$,$\SigGM$, and $\SigTGC$ it is clear that $$\Sigma_V = \begin{bmatrix}
    \SigTC & \SigTGC & B_1
    \\ [ \SigTGC]^\tran & \SigGC & B_2
    \\ B_1^\tran & B_2^\tran & \SigGM.
\end{bmatrix}$$ Hence it suffices to show that $B_1=0$ and $B_2=0$. To do this recall that since $\e[V]=0$, $\e[W \Psi(X)]=0$,  $\e[W \tilde{\Psi}(\tilde{X})]=0$, and  $\e[\bar{W} \tilde{\Psi}(\tilde{X})]=0$. Hence, $$B_1=\e\Big[ W \Psi(X) \big[ \bar{W} \tilde{\Psi}(\tilde{X}) \big]^\tran \Big]=\e \Big[ \frac{I(1-I)}{\pi(\tilde{X})\big(1-\pi(\tilde{X})\big)} \Psi(X) \big[ \tilde{\Psi}(\tilde{X}) \big]^\tran \Big]=0$$ and $$B_2=\e\Big[ W \tilde{\Psi}(\tilde{X}) \big[ \bar{W} \tilde{\Psi}(\tilde{X}) \big]^\tran \Big]=\e \Big[ \frac{I(1-I)}{\pi(\tilde{X})\big(1-\pi(\tilde{X})\big)} \tilde{\Psi}(\tilde{X}) \big[ \tilde{\Psi}(\tilde{X}) \big]^\tran \Big]=0.$$ The last steps in the above two displays hold because
$I \in \{0,1\}$ always and therefore $I(1-I)=0$ always.

\end{proof}

\subsection{Proof of Proposition \ref{prop:PPEstimatorCLT}}

Fix $\Omega$ such that $\hat{\Omega} \xrightarrow{p} \Omega$. Letting $A_{\Omega}=\begin{bmatrix} I_{d \times d} & -\Omega & \Omega \end{bmatrix}$ and $A_{\hat{\Omega}}=\begin{bmatrix} I_{d \times d} & -\hat{\Omega} & \hat{\Omega} \end{bmatrix}$, observe that $A_{\hat{\Omega}}=A_{\Omega}+o_p(1)$. Note that because $\hat{\zeta}=(\htc,\hgc,\hgm)$, $\zeta=(\theta,\gamma,\gamma)$, and $\htPPhom=\hat{\Omega} \hgm + (\htc -\hat{\Omega} \hgc)$, $A_{\hat{\Omega}} \hat{\zeta}=\htPPhom$ and $A_{\hat{\Omega}} \zeta=\theta$. Thus applying Lemma \ref{lemma:JointCLTzeta} and Slutsky's lemma gives $$\sqrt{\ntot}(\htPPhom -\theta) =A_{\hat{\Omega}} \big(\sqrt{\ntot}(\hat{\zeta} -\zeta) \big)=A_{\Omega} \big(\sqrt{\ntot}(\hat{\zeta} -\zeta) \big)+o_p(1) \xrightarrow{d} \mathcal{N}(0, A_{\Omega} \Sigma_{\zeta} A_{\Omega}^\tran),$$
where the second equality follows because $\sqrt{\ntot}(\hat{\zeta} -\zeta) = O_p(1)$.
    By matrix multiplication and by Equation \eqref{eq:zetaSigmaZetaDef}, $$A_{\Omega} \Sigma_{\zeta} A_{\Omega}^\tran= \SigTC-\SigTGC \Omega^\tran -\Omega [ \SigTGC]^\tran +\Omega ( \SigGC + \SigGM) \Omega^\tran \equiv \SigPTD(\Omega).$$

\subsection{Proof of Proposition \ref{prop:MoreEfficientThanclassical}}

First note that $\sqrt{\ntot}(\htc-\theta) \xrightarrow{d} \mathcal{N}(0,\SigTC)$, by considering the first $d$ coordinates in the CLT from Lemma \ref{lemma:JointCLTzeta}. Now recall that by Formula \eqref{eq:OptOmegaAsymp}, $\Omega_{\text{opt}} \equiv \SigTGC (\SigGC + \SigGM)^{-1}$ and by Formula \eqref{eq:PPOmAsympVar}, $\SigPTD(\Omega) \equiv \SigTC-\SigTGC \Omega^\tran -\Omega [ \SigTGC]^\tran +\Omega ( \SigGC + \SigGM) \Omega^\tran$. Hence, $$\SigPTD(\Omega_{\text{opt}})=  \SigTC-\SigTGC (\SigGC+\SigGM)^{-1}[ \SigTGC]^\tran \equiv \Sigma_{\textnormal{TPTD}}.$$ Thus by Proposition \ref{prop:PPEstimatorCLT}, $\sqrt{\ntot} \big( \htPPhomOpt - \theta \big) \xrightarrow{d} \mathcal{N} (0,\Sigma_{\textnormal{TPTD}}).$ To complete the proof note that $\SigGC \succeq 0$ and $\SigGM \succeq 0$ and therefore $\SigTGC (\SigGC+\SigGM)^{-1}[ \SigTGC]^\tran \succeq 0$. Thus $\Sigma_{\textnormal{TPTD}} \preceq \SigTC$.

\section{Proofs of theoretical results from Section \ref{sec:BootstrapCIOverallSection}}\label{sec:BootstrapMethodProofs}

In this appendix, we prove Theorems \ref{theorem:FullPercentileBootstrapCIsValid} and \ref{theorem:GaussianConvBootCIsValid} which give guarantees that under certain assumptions, Algorithms \ref{alg:FullPercentileBootstrap} and \ref{alg:QuickConvolutionBootstrap} give asymptotically valid confidence intervals. Because the proofs are lengthy they are broken into 3 main parts:

\begin{enumerate}
    \item The first part is showing that the bootstrap is consistent for various pivots of interest. In particular, we refresh the reader with Assumption \ref{assump:ZetaBootstrapConsistency}, which gives Bootstrap consistency for bootstrapped pivots of the form $\sqrt{\ntot}(v^\tran \hat{\zeta}^*-v^\tran \hat{\zeta})$ where $v \in \mathbb{R}^{3d}$, and we also introduce some helpful notation for studying bootstrap consistency (we point the reader to Section \ref{sec:WhenDoWeHaveZetaBootstrapConsistency} for sufficient conditions under which Assumption \ref{assump:ZetaBootstrapConsistency} holds). Subsequently, in Theorem \ref{theorem:ConsistencyOfGaussianConvBoot} we prove bootstrap consistency for pivots that arise in Algorithm \ref{alg:QuickConvolutionBootstrap} that are approximately drawn from the bootstrap distribution, which is a critical step in the proof of Theorem \ref{theorem:GaussianConvBootCIsValid}.
    \item The second part provides auxiliary lemmas, including Lemma \ref{lemma:IgnoreOp1TermsForBootCI} that gives conditions under which the percentile bootstrap leads to asymptotically valid confidence intervals. A notable difference between Lemma \ref{lemma:IgnoreOp1TermsForBootCI} and standard results about when the percentile bootstrap is valid (e.g., Theorem 4.1 in \cite{ShaoAndTuTextbook}), is that Lemma \ref{lemma:IgnoreOp1TermsForBootCI} allows $o_p(1)$ terms to be ignored.
    \item The third part is proofs of asymptotic validity of the confidence intervals from Algorithm \ref{alg:FullPercentileBootstrap} (see Section \ref{sec:ProofOfFullBootPercentileCorollary}) and Algorithm \ref{alg:QuickConvolutionBootstrap} (see Section \ref{sec:ProofOfGaussianConvCIValidity}). These proofs piece together the specific algorithms and assumptions by applying bootstrap consistency results and the auxiliary lemma about asymptotic validity of percentile bootstrap confidence intervals.
\end{enumerate}

\subsection{Bootstrap consistency notation and results}

We now reintroduce the notions of the bootstrap distribution and of bootstrap consistency, using notation convenient to our setting. Recall that $V_i=(W_i,\bar{W}_i,X_i,\tilde{X}_i)$. We let $\hat{\mathbb{P}}_{\ntot}=\frac{1}{\ntot} \sum_{i=1}^\ntot \delta_{V_i}$ denote the empirical distribution of $V_i$ from the $\ntot$ samples, which can be thought of as a random distribution. Fixing $\hat{\mathbb{P}}_{\ntot}$, we draw $V_1^*,\dots,V_{\ntot}^* \stackrel{\text{iid}}{\sim} \hat{\mathbb{P}}_{\ntot}$ and use $\mathbb{P}_*(A \giv \hat{\mathbb{P}}_{\ntot})$ to denote the probability of an event that depends on the values of $V_1^*,\dots,V_{\ntot}^*$. After drawing $V_1^*,\dots,V_{\ntot}^* \stackrel{\text{iid}}{\sim} \hat{\mathbb{P}}_{\ntot}$, we define $(W_i^*,\bar{W}_i^*,X_i^*,\tilde{X}_i^*)=V_i^*$ to be the components of $V_i^*$ for each $i$, we define $$
\cwcalibStar=\begin{bmatrix}
    W_1^* \\ \vdots \\ W_{\ntot}^* \end{bmatrix}, 
\cwmainStar = \begin{bmatrix}
    \bar{W}_1^* \\ \vdots \\ \bar{W}_{\ntot}^* \end{bmatrix}, 
\cx^* = \begin{bmatrix}
   \horzbar & X_1^{*\tran} & \horzbar \\  & \vdots & \\ \horzbar & X_{\ntot}^{*\tran} & \horzbar \end{bmatrix},
\tilde{\cx}^* = \begin{bmatrix}
\horzbar & \tilde{X}_1^{*\tran} & \horzbar \\  & \vdots & \\ \horzbar & \tilde{X}_{\ntot}^{*\tran} & \horzbar \end{bmatrix},$$ and we define $$\htcarg{,*} = \calA(\cx^*; \cwcalibStar), \hgcarg{,*} = \calA(\tilde{\cx}^*; \cwcalibStar),\hgmarg{,*} = \calA(\tilde{\cx}^*; \cwmainStar), $$ and $\zeta^*=(\htcarg{,*}, \hgcarg{,*} ,\hgmarg{,*})$. We call the distribution of $\zeta^*$ generated by this procedure under a fixed $\hat{\mathbb{P}}_{\ntot}$ the bootstrap distribution of $\hat{\zeta}$. Note that even though many values of $X_i$ (and in turn rows of $\cx^*$) are unobserved, for any draw $\hat{\zeta}^*$ from the bootstrap distribution, $\hat{\zeta}^*$ is still observed and can be evaluated because whenever the $i$th row of $\cx^*$ is missing, the corresponding weight $W_i^*=0$.

We now introduce the notions of pivots and bootstrap consistency. Given a random variable (often called a pivot) $R_{\ntot}$ which depends on $\ntot$ samples of the data drawn from $\mathbb{P}$ and a procedure to randomly generate $R_{\ntot}^*$ that depends on the empirical distribution $\hat{\mathbb{P}}_{\ntot}$, the bootstrap distribution of $R_{\ntot}^*$ is said to be consistent (with respect to the sup-norm) if $$\sup\limits_{x \in \mathbb{R}} \vert \mathbb{P}_*( R_{\ntot}^* \leq x \giv \hat{\mathbb{P}}_{\ntot})- \mathbb{P}(R_{\ntot} \leq x) \vert \xrightarrow{p} 0.$$ When proving and leveraging bootstrap consistency results it is convenient to define the metric $\rho_{\infty}$ on the collection of CDFs such that for any (possibly random) CDFs $H_1$ and $H_2$, 
\begin{equation}\label{eq:rhoSupNorm}
    \rho_{\infty}(H_1,H_2)=\sup\limits_{x \in \mathbb{R}}\vert H_1(x)-H_2(x) \vert.
\end{equation} Therefore if $H_{\text{Boot}}$ and $H$ are the random CDF and CDF given by $H_{\text{Boot}}(x)=\mathbb{P}_*( R_{\ntot}^* \leq x \giv \hat{\mathbb{P}}_{\ntot})$ and $H(x)=\mathbb{P}(R_{\ntot} \leq x)$, respectively, then equivalently the bootstrap distribution of $R_{\ntot}^*$ is consistent if $\rho_{\infty}(H_{\text{Boot}},H)=o_p(1)$. Another way of restating Assumption \ref{assump:ZetaBootstrapConsistency}(i) is that the bootstrap distribution is consistent for $\sqrt{\ntot}(v^\tran \hat{\zeta}^*-v^\tran \hat{\zeta})$ for any arbitrary $v \in \mathbb{R}^{3d}$ (we defer sufficient conditions under which Assumption \ref{assump:ZetaBootstrapConsistency} holds to Section \ref{sec:WhenDoWeHaveZetaBootstrapConsistency}).

Assumption \ref{assump:ZetaBootstrapConsistency} along with Lemma \ref{lemma:IgnoreOp1TermsForBootCI}, proved in the next subsection, can be used to show that Algorithm \ref{alg:FullPercentileBootstrap} provides asymptotically valid confidence intervals. While the bootstrap consistency claim from Assumption \ref{assump:ZetaBootstrapConsistency} involves quantities generated in Algorithm \ref{alg:FullPercentileBootstrap}, $\hgmarg{,*}$ is drawn from a Gaussian distribution rather than from the actual bootstrap distribution in Algorithm \ref{alg:QuickConvolutionBootstrap}. In the next theorem, we prove a bootstrap consistency result for an approximate draw of the pivot $\sqrt{\ntot}(\htPPom-\theta)$ from the bootstrap distribution (which is generated in Algorithm \ref{alg:QuickConvolutionBootstrap} up to $o_p(1)$ terms). 

\begin{theorem}\label{theorem:ConsistencyOfGaussianConvBoot} For a fixed $\Omega$, let $R_{\ntot}=\sqrt{\ntot} (\htPPom -\theta)$ be a pivot and consider the approximate bootstrap pivot given by $ R_{\ntot}^*=\Omega L_{\gamma} Z + R_{\ntot,\Delta}^*$, where $$R_{\ntot,\Delta}^* = \sqrt{\ntot} \big( \htcarg{,*} -\Omega \hgcarg{,*} - (\htc -\Omega \hgc) \big),$$
where $L_{\gamma}$ is a matrix such that $L_{\gamma} L_{\gamma}^\tran = \SigGM$, and where $Z \sim \mathcal{N}(0,I_{d \times d})$ is independent of all data. Under Assumptions \ref{assump:SamplingLabelling}--\ref{assump:ZetaBootstrapConsistency} for each $j \in \{1,\dots,d\}$, $$\sup\limits_{x \in \mathbb{R}} \vert \mathbb{P}_*( [R_{\ntot}^*]_j \leq x \giv \hat{\mathbb{P}}_{\ntot})- \mathbb{P}([R_{\ntot}]_j \leq x) \vert \xrightarrow{p} 0.$$
\end{theorem}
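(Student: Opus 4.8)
The plan is to exploit the block structure of the joint limit in Lemma~\ref{lemma:JointCLTzeta}, which makes $\hgm$ asymptotically uncorrelated with $(\htc,\hgc)$, and then to observe that, conditionally on $\hat{\mathbb{P}}_{\ntot}$, the approximate bootstrap pivot $[R_{\ntot}^*]_j$ has a CDF equal to the convolution of an \emph{exact} Gaussian with the genuine bootstrap CDF of the $(\htc,\hgc)$-part. First I would set $A_{\Omega}=\begin{bmatrix} I_{d\times d} & -\Omega & \Omega\end{bmatrix}$, so that $A_\Omega\hat\zeta=\htPPom$ and $A_\Omega\zeta=\theta$, and define $v_j = A_\Omega^\tran e_j$ and $u_j=(e_j,\,-\Omega^\tran e_j,\, 0)\in\real^{3d}$. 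Then $[R_\ntot]_j=\sqrt{\ntot}\,v_j^\tran(\hat\zeta-\zeta)$ and $[R_{\ntot,\Delta}^*]_j=\sqrt{\ntot}\,u_j^\tran(\hat\zeta^*-\hat\zeta)$. Because $Z\sim\mathcal N(0,I)$ is independent of the data and of the resampling, the summand $e_j^\tran\Omega L_\gamma Z$ is, conditionally on $\hat{\mathbb{P}}_\ntot$, exactly $\mathcal N(0,\sigma_{1,j}^2)$ with $\sigma_{1,j}^2=e_j^\tran\Omega\SigGM\Omega^\tran e_j$, and it is conditionally independent of $[R_{\ntot,\Delta}^*]_j$. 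Hence the conditional CDF of $[R_\ntot^*]_j$ is $\Phi_{\sigma_{1,j}}\!*\!H^*_{\Delta,\ntot}$, where $\Phi_{\sigma}$ denotes the $\mathcal N(0,\sigma^2)$ CDF and $H^*_{\Delta,\ntot}$ is the bootstrap CDF of $[R_{\ntot,\Delta}^*]_j$.

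Next I would pin down the target. By Proposition~\ref{prop:PPEstimatorCLT} applied with the constant sequence $\hat\Omega=\Omega$, one has $[R_\ntot]_j\xrightarrow{d}\mathcal N(0,\sigma_j^2)$ with $\sigma_j^2=[\SigPTD(\Omega)]_{jj}$, and Assumption~\ref{assump:ZetaBootstrapConsistency}(ii) with $v=v_j$ guarantees $\sigma_j^2>0$, so the limit CDF $G_j:=\Phi_{\sigma_j}$ is continuous. A direct computation with the block-diagonal $\Sigma_\zeta$ of \eqref{eq:zetaSigmaZetaDef} gives $\sigma_j^2=\sigma_{1,j}^2+\sigma_{2,j}^2$, where $\sigma_{2,j}^2:=u_j^\tran\Sigma_\zeta u_j$ is the variance of the $(\htc,\hgc)$-contribution; equivalently $G_j=\Phi_{\sigma_{1,j}}*\Phi_{\sigma_{2,j}}$. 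By P\'olya's theorem, $\rho_{\infty}\big(\mathbb P([R_\ntot]_j\le\cdot),\,G_j\big)\to 0$ deterministically, so by the triangle inequality it suffices to show $\rho_{\infty}\big(\Phi_{\sigma_{1,j}}*H^*_{\Delta,\ntot},\,G_j\big)\xrightarrow{p}0$.

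The core of the argument is a convolution-contraction estimate: for any fixed CDF $F$ and any CDFs $H_1,H_2$, $\rho_{\infty}(F*H_1,F*H_2)\le\rho_{\infty}(H_1,H_2)$, since $(F*H)(x)=\int H(x-s)\,dF(s)$ and the integrand difference is bounded by $\rho_{\infty}(H_1,H_2)$ uniformly in $s$. Applying this with $F=\Phi_{\sigma_{1,j}}$ yields
$$\rho_{\infty}\big(\Phi_{\sigma_{1,j}}*H^*_{\Delta,\ntot},\,G_j\big)\le \rho_{\infty}\big(H^*_{\Delta,\ntot},H_{\Delta,\ntot}\big)+\rho_{\infty}\big(\Phi_{\sigma_{1,j}}*H_{\Delta,\ntot},\,G_j\big),$$
where $H_{\Delta,\ntot}$ is the CDF of $[R_{\ntot,\Delta}]_j=\sqrt{\ntot}\,u_j^\tran(\hat\zeta-\zeta)$. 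The first term is $o_p(1)$ by Assumption~\ref{assump:ZetaBootstrapConsistency}(i) with $v=u_j$. For the second term, Lemma~\ref{lemma:JointCLTzeta} gives $[R_{\ntot,\Delta}]_j\xrightarrow{d}\mathcal N(0,\sigma_{2,j}^2)$; adding the independent fixed $\mathcal N(0,\sigma_{1,j}^2)$ summand gives $\Phi_{\sigma_{1,j}}*H_{\Delta,\ntot}\to\Phi_{\sigma_{1,j}}*\Phi_{\sigma_{2,j}}=G_j$ in distribution, and since $G_j$ is continuous P\'olya's theorem upgrades this to $\rho_{\infty}$-convergence, which is deterministic. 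Combining these displays proves the claim.

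I expect the main obstacle to be arranging the bookkeeping so that P\'olya's theorem is always applied to a \emph{continuous} limiting CDF. This is precisely why I convolve with the fixed Gaussian \emph{before} taking limits in the second term: even in the degenerate case $\sigma_{2,j}^2=0$, where $H_{\Delta,\ntot}$ alone would converge to a point mass and uniform convergence would fail, the smoothed family $\Phi_{\sigma_{1,j}}*H_{\Delta,\ntot}$ still converges to the continuous $G_j$ (continuous because $\sigma_j^2>0$). Two structural facts make the argument clean and should be flagged explicitly: (i) $Z$ is an \emph{exact} Gaussian independent of everything, so $\Phi_{\sigma_{1,j}}$ is a genuinely fixed distribution and the contraction bound applies verbatim; and (ii) the block-diagonal form of $\Sigma_\zeta$, which simultaneously produces the additive decomposition $\sigma_j^2=\sigma_{1,j}^2+\sigma_{2,j}^2$ and justifies replacing the bootstrap of $\hgm$ by an independent Gaussian draw.
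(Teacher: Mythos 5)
Your proposal is correct and takes essentially the same route as the paper's proof: both represent the conditional CDF of $[R_{\ntot}^*]_j$ as the convolution of a fixed Gaussian with the bootstrap CDF of the complete-sample part (the paper via Fubini and integration against the Gaussian density, you via an explicit contraction lemma, which are the same estimate), both invoke Assumption \ref{assump:ZetaBootstrapConsistency}(i) with the vector $(e_j,-\Omega^\tran e_j,0)$ to control that part, and both conclude with Lemma \ref{lemma:JointCLTzeta}/Proposition \ref{prop:PPEstimatorCLT} and two applications of P\'olya's theorem plus the triangle inequality. Your one substantive addition is making explicit that Assumption \ref{assump:ZetaBootstrapConsistency}(ii) forces $\sigma_j^2>0$, so the limiting CDF is continuous; the paper's applications of P\'olya's theorem rely on this point implicitly.
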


\begin{proof}

Fix $\Omega \in \mathbb{R}^{d \times d}$ and $j \in \{1,\dots,d\}$. Define $u_j \in \mathbb{R}^d$ to be the $j$th row of the matrix $\Omega L_{\gamma}$ so that $[R_{\ntot}^*]_j=u_j^\tran Z+[R_{\ntot,\Delta}^*]_j$. Let $$\hat{H}_{\text{Boot}}(x) =\mathbb{P}_*( [R_{\ntot}^*]_j \leq x \giv \hat{\mathbb{P}}_{\ntot}) \quad \text{and} \quad \hat{H}_{\text{Boot},\Delta}(x) =\mathbb{P}_*( [R_{\ntot,\Delta}^*]_j \leq x \giv \hat{\mathbb{P}}_{\ntot})$$ be the Random CDFs of the bootstrap pivot (and pivot component). Letting $R_{\ntot,\Delta} \equiv \sqrt{\ntot} \big( \htc -\Omega \hgc - (\theta -\Omega \gamma) \big)$, also define $$H_{\ntot}(x) =\mathbb{P}( [R_{\ntot}]_j \leq x ) \quad \text{and} \quad H_{\ntot,\Delta}(x) =\mathbb{P}( [R_{\ntot,\Delta}]_j \leq x ).$$

   Let $\e_Z[\cdot]$ to be the expectation with respect to just the independent $Z$ (which conditions on $\hat{\mathbb{P}}_{\ntot}$), let $T(x,Z) \equiv \vert  \e_Z[  H_{\ntot,\Delta}(x-u_j^\tran Z)] - H_{\ntot}(x) \vert $ for all $x \in \mathbb{R}$, and let $\varphi : \mathbb{R}^d \to \mathbb{R}$ be the probability density function of a $\mathcal{N}(0,I_{d \times d})$ random vector. Then for each $x \in \mathbb{R}$, $$\begin{aligned}  \vert \hat{H}_{\text{Boot}}(x)-H_{\ntot}(x) \vert &
\leq  \vert \e_*[ I\{[R_{\ntot,\Delta}^*]_j \leq x-u_j^\tran Z \} \giv \hat{\mathbb{P}}_{\ntot}]-H_{\ntot}(x) \vert 
\\ & =   \vert  \e_Z [ \hat{H}_{\text{Boot},\Delta} (x-u_j^\tran Z)] -H_{\ntot}(x) \vert 
\\ & \leq   \vert   \e_Z [ \hat{H}_{\text{Boot},\Delta} (x-u_j^\tran Z)] - \e_Z[  H_{\ntot,\Delta}(x-u_j^\tran Z)] \vert + T(x,Z)
\\ & = \Big| \int_{z \in \mathbb{R}^d} \varphi(z) \big(\hat{H}_{\text{Boot},\Delta} (x-u_j^\tran z)-H_{\ntot,\Delta}(x-u_j^\tran z) \big) \rd z \Big| + T(x,Z)
\\ & \leq \int_{z \in \mathbb{R}^d} \varphi(z) \rho_{\infty} (\hat{H}_{\text{Boot},\Delta},H_{\ntot,\Delta}) \rd z + T(x,Z)
\\ & =  \rho_{\infty} (\hat{H}_{\text{Boot},\Delta},H_{\ntot,\Delta}) +T(x,Z).
\end{aligned}$$ Above the 2nd step holds by independence of $Z$ and Fubini's theorem, and recall that $\rho_{\infty}$ is defined at \eqref{eq:rhoSupNorm}. Hence taking the supremum over all $x$, $$\sup_{x \in \mathbb{R}} \vert \hat{H}_{\text{Boot}}(x)-H_{\ntot}(x) \vert \leq  \rho_{\infty} (\hat{H}_{\text{Boot},\Delta},H_{\ntot,\Delta}) + \sup_{x \in \mathbb{R}} T(x,Z)=o_p(1)+\sup_{x \in \mathbb{R}} T(x,Z).$$ Above the last step holds because letting $v_j \in \mathbb{R}^{3d}$ such that $v_j^\tran=e_j^\tran \begin{bmatrix}  I_{d \times d} & -\Omega & 0_{d \times d} \end{bmatrix}$, $$[R_{\ntot,\Delta}]_j=\sqrt{\ntot} v_j^\tran( \hat{\zeta}-\zeta) \quad \text{and} \quad [R_{\ntot,\Delta}^*]_j=\sqrt{\ntot} v_j^\tran ( \hat{\zeta}^*-\hat{\zeta}),$$ so by Assumption \ref{assump:ZetaBootstrapConsistency}, $\rho_{\infty}(\hat{H}_{\text{Boot},\Delta},H_{\ntot,\Delta}) \xrightarrow{p} 0$.

We now show that $\sup_{x \in \mathbb{R}} T(x,Z)=o(1)$. To do this note that since $Z$ is independent of the data $$H_{\ntot,\Delta}(x-u_j^\tran Z)=\e[ I \{[R_{\ntot,\Delta}]_j \leq x - u_j^\tran Z \} \giv Z ]=\e[  I\{ u_j^\tran Z + [R_{\ntot,\Delta}]_j  \leq x \} \giv Z ].$$ Also note that by applying Proposition \ref{prop:PPEstimatorCLT} to the case where $\hat{\Omega}=\Omega$ for all $\ntot$, $$R_{\ntot} \xrightarrow{d} \mathcal{N}(0,\Sigma_{\Omega}^{\circ}+\Sigma_{\Omega}^{\bullet}) \quad \text{where } \quad \Sigma_{\Omega}^{\circ} \equiv \Omega \SigGM \Omega^\tran \quad \text{and} \quad \Sigma_{\Omega}^{\bullet} \equiv \SigTC-\SigTGC \Omega^\tran -\Omega [ \SigTGC]^\tran +\Omega  \SigGC  \Omega^\tran.$$ Further, letting $B_{\Omega}=\begin{bmatrix} I_{d \times d} & -\Omega & 0 \end{bmatrix}$ and observing that $R_{\ntot,\Delta}=B_{\Omega} \big(\sqrt{\ntot} (\hat{\zeta}-\zeta) \big)$, by Lemma \ref{lemma:JointCLTzeta} and the continuous mapping theorem $R_{\ntot,\Delta} \xrightarrow{d} \mathcal{N}(0,B_{\Omega} \Sigma_{\zeta} B_{\Omega}^\tran)$, where by Formula \eqref{eq:zetaSigmaZetaDef}, $B_{\Omega} \Sigma_{\zeta} B_{\Omega}^\tran= \Sigma_{\Omega}^{\bullet}$. Since $Z$ is independent of $R_{\ntot,\Delta}$ for all $\ntot$ and $\Omega L_{\gamma} Z \sim \mathcal{N}(0,\Omega L_{\gamma} L_{\gamma}^\tran \Omega^\tran) =_{\text{dist}} \mathcal{N}(0,\Sigma_{\Omega}^{\circ})$, $ \Omega L_{\gamma} Z+R_{\ntot,\Delta} \xrightarrow{d} \mathcal{N}(0,\Sigma_{\Omega}^{\circ }+\Sigma_{\Omega}^{\bullet})$. Thus letting $\sigma_{j}^2 =[\Sigma_{\Omega}^{\circ}+\Sigma_{\Omega}^{ \bullet }]_{jj}$, it is clear that $[R_{\ntot}]_j \xrightarrow{d} \mathcal{N}(0,\sigma_j^2)$ and also $u_j^\tran Z+[R_{\ntot,\Delta}]_j \xrightarrow{d} \mathcal{N}(0,\sigma_j^2)$. Letting $H(\cdot)$ denote the CDF of $\mathcal{N}(0,\sigma_j^2)$, $\tilde{H}_{\ntot}$ denote the CDF of $u_j^\tran Z+[R_{\ntot,\Delta}]_j$ and recalling that $H_{\ntot}$ is the CDF of $[R_{\ntot}]_j$, by Polya's theorem (Theorem 11.2.9 in \cite{TSH}), $\tilde{H}_{\ntot}(x)$ and $H_{\ntot}(x)$ both converge to $H(x)$, uniformly in $x$. Combining these results,
$$\begin{aligned}
    \sup\limits_{x \in \mathbb{R}} T(x,Z) & = \sup\limits_{x \in \mathbb{R}} \vert \e_Z[  H_{\ntot,\Delta}(x-u_j^\tran Z)] - H_{\ntot}(x) \vert
    \\ & \leq \sup\limits_{x \in \mathbb{R}} \vert \e_Z[  H_{\ntot,\Delta}(x-u_j^\tran Z)] -H(x) \vert + \sup\limits_{x \in \mathbb{R}} \vert H(x) - H_{\ntot}(x) \vert
    \\ & = \sup\limits_{x \in \mathbb{R}} \vert \e_Z[  \e [I\{ u_j^\tran Z + [R_{\ntot,\Delta}]_j \leq x \} \giv Z ] ] -H(x) \vert + \sup\limits_{x \in \mathbb{R}} \vert H(x) - H_{\ntot}(x) \vert
    \\ & = \sup\limits_{x \in \mathbb{R}} \vert \mathbb{P} (u_j^\tran Z +[R_{\ntot,\Delta}]_j \leq x ) -H(x) \vert + \sup\limits_{x \in \mathbb{R}} \vert H(x) - H_{\ntot}(x) \vert
     \\ & = \sup\limits_{x \in \mathbb{R}} \vert \tilde{H}_{\ntot}(x) -H(x) \vert + \sup\limits_{x \in \mathbb{R}} \vert H(x) - H_{\ntot}(x) \vert
     \\ & = o(1),
\end{aligned}$$ where the last step holds from uniform convergence of $\tilde{H}_{\ntot}(\cdot)$ and $H_{\ntot}(\cdot)$ to $H(\cdot)$. Finally combining this with a previous result $$\sup\limits_{x \in \mathbb{R}} \vert \mathbb{P}_*( [R_{\ntot}^*]_j \leq x \giv \hat{\mathbb{P}}_{\ntot})- \mathbb{P}([R_{\ntot}]_j \leq x) \vert =\sup_{x \in \mathbb{R}} \vert \hat{H}_{\text{Boot}}(x)-H_{\ntot}(x) \vert \leq  o_p(1)+\sup_{x \in \mathbb{R}} T(x,Z)=o_p(1).$$
\end{proof}

Theorem \ref{theorem:ConsistencyOfGaussianConvBoot} along with Lemma \ref{lemma:IgnoreOp1TermsForBootCI}, proved in the next subsection, can be used to show that Algorithm \ref{alg:QuickConvolutionBootstrap} provides asymptotically valid confidence intervals.

\subsection{Lemmas for showing validity of percentile bootstrap}

In this section, we present two lemmas that, along with the bootstrap consistency results in the previous subsection, allow us to prove the validity of Algorithms \ref{alg:FullPercentileBootstrap} and \ref{alg:QuickConvolutionBootstrap}. The next lemma shows that Assumption \ref{assump:ZetaBootstrapConsistency} implies that $\sqrt{\ntot}(A \hat{\zeta}-A\zeta)$ and $\sqrt{\ntot}(A \hat{\zeta}^*-A\hat{\zeta})$, when viewed as sequences indexed by integers $\ntot >m_0$ for some $m_0$, are bounded in probability. Subsequently, Lemma \ref{lemma:IgnoreOp1TermsForBootCI} gives sufficient conditions under which the percentile bootstrap gives asymptotically valid confidence intervals.

\begin{lemma}\label{lemma:TightnessHelperLemma}
    Under Assumption \ref{assump:ZetaBootstrapConsistency}, for any fixed $A \in \mathbb{R}^{d \times 3d}$, $\sqrt{\ntot}(A \hat{\zeta}-A\zeta)=O_p(1)$ and $\sqrt{\ntot}(A \hat{\zeta}^*-A\hat{\zeta})=O_p(1)$, where $O_p(1)$ denotes a sequence of random vectors indexed by $\ntot \in \mathbb{N}$ that is bounded in probability for all $\ntot$ larger than some $m_0 \in \mathbb{N}$.
\end{lemma}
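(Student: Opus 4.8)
The plan is to reduce both claims to one coordinate of $A\hat{\zeta}$ at a time and then invoke Assumption~\ref{assump:ZetaBootstrapConsistency}. Fix $A \in \mathbb{R}^{d \times 3d}$ and $j \in \{1,\dots,d\}$, and set $v = A^\tran e_j \in \mathbb{R}^{3d}$, so that the $j$th coordinate of $\sqrt{\ntot}(A\hat{\zeta} - A\zeta)$ equals $\sqrt{\ntot}\, v^\tran(\hat{\zeta} - \zeta)$, and likewise the $j$th coordinate of $\sqrt{\ntot}(A\hat{\zeta}^* - A\hat{\zeta})$ equals $\sqrt{\ntot}\, v^\tran(\hat{\zeta}^* - \hat{\zeta})$. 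Since a random vector is $O_p(1)$ if and only if each of its finitely many coordinates is $O_p(1)$, it suffices to establish the two scalar statements for this fixed $v$, and then range over $j \in \{1,\dots,d\}$.

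For the first claim, Assumption~\ref{assump:ZetaBootstrapConsistency}(ii) gives that $\sqrt{\ntot}\, v^\tran(\hat{\zeta} - \zeta)$ converges in distribution to a proper limit. A sequence of real random variables that converges in distribution is tight, hence $O_p(1)$, which yields $\sqrt{\ntot}(A\hat{\zeta} - A\zeta) = O_p(1)$ immediately. The second claim is the substantive one, because $\sqrt{\ntot}\, v^\tran(\hat{\zeta}^* - \hat{\zeta})$ carries both the data randomness and the resampling randomness, and the $O_p(1)$ conclusion is with respect to the joint law. Write $H_{\ntot}(x) = \mathbb{P}(\sqrt{\ntot}\, v^\tran(\hat{\zeta} - \zeta) \le x)$ and $\hat{H}^*_{\ntot}(x) = \mathbb{P}_*(\sqrt{\ntot}\, v^\tran(\hat{\zeta}^* - \hat{\zeta}) \le x \giv \hat{\mathbb{P}}_{\ntot})$. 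By part (ii), $H_{\ntot}$ converges weakly to a continuous, strictly increasing CDF $H$, and Polya's theorem upgrades this to uniform convergence.

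Given $\err > 0$, I would choose $M$ with $H(M) > 1 - \err/8$ and $H(-M) < \err/8$, so that for all large $\ntot$ one has $1 - H_{\ntot}(M) < \err/6$ and $H_{\ntot}(-M) < \err/6$. Bounding $|\hat{H}^*_{\ntot}(x) - H_{\ntot}(x)|$ pointwise by $\rho_{\infty}(\hat{H}^*_{\ntot}, H_{\ntot})$, the conditional tail probability then satisfies
$$\mathbb{P}_*\big(|\sqrt{\ntot}\, v^\tran(\hat{\zeta}^* - \hat{\zeta})| > M \giv \hat{\mathbb{P}}_{\ntot}\big) \le \big(1 - \hat{H}^*_{\ntot}(M)\big) + \hat{H}^*_{\ntot}(-M) \le \big(1 - H_{\ntot}(M)\big) + H_{\ntot}(-M) + 2\rho_{\infty}(\hat{H}^*_{\ntot}, H_{\ntot}).$$
Finally I would integrate out the resampling randomness. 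Assumption~\ref{assump:ZetaBootstrapConsistency}(i) is precisely $\rho_{\infty}(\hat{H}^*_{\ntot}, H_{\ntot}) \xrightarrow{p} 0$, so for large $\ntot$ the event $\ce = \{\rho_{\infty}(\hat{H}^*_{\ntot}, H_{\ntot}) \le \err/6\}$ has probability at least $1 - \err/3$. Taking expectations over $\hat{\mathbb{P}}_{\ntot}$ via the tower property and splitting on $\ce$, the displayed conditional bound is at most $\err/3 + 2(\err/6) = 2\err/3$ on $\ce$ and is trivially at most $1$ off $\ce$, giving $\mathbb{P}(|\sqrt{\ntot}\, v^\tran(\hat{\zeta}^* - \hat{\zeta})| > M) \le 2\err/3 + \err/3 = \err$ for all large $\ntot$. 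Ranging over the $d$ coordinates gives $\sqrt{\ntot}(A\hat{\zeta}^* - A\hat{\zeta}) = O_p(1)$. The main obstacle is exactly this last bookkeeping step: the clean conditional tail bound holds only on a high-probability data event, since part (i) delivers convergence in probability rather than almost surely or in mean, so the tower-property argument must be arranged to absorb the low-probability complement of $\ce$ into the $\err$ budget.
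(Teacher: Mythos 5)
Your proof is correct and follows essentially the same route as the paper's: reduce to coordinates via $v = A^\tran e_j$, get the first claim from tightness implied by Assumption \ref{assump:ZetaBootstrapConsistency}(ii), and for the bootstrap claim condition on the high-probability event where the sup-norm deviation in part (i) is small, bound the conditional tail, and integrate out the data randomness by the tower property. The only cosmetic difference is that you pick the threshold $M$ through the limiting CDF and Polya's theorem, while the paper picks it directly from $\sqrt{\ntot}\,v^\tran(\hat{\zeta}-\zeta)=O_p(1)$; the epsilon bookkeeping differs slightly but both budgets close.
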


\begin{proof} First we will show that if $Y_{\ntot}=(Y_{\ntot}^{(1)},\dots,Y_{\ntot}^{(d)}) \in \mathbb{R}^d$ for $\ntot \in \mathbb{N}$ is a sequence of random vectors such that such that $Y_{\ntot}^{(j)}=O_p(1)$ for all $j \in \{1,\dots,d\}$, then $Y_{\ntot}=O_p(1)$. To see this, let $Y_{\ntot}=(Y_{\ntot}^{(1)},\dots,Y_{\ntot}^{(d)})$ for $\ntot \in \mathbb{N}$, where $Y_{\ntot}^{(j)}=O_p(1)$ for all $j \in \{1,\dots,d\}$. Fix $\epsilon>0$ and note that since $\epsilon/(2d)>0$ and since $Y_{\ntot}^{(j)}=O_p(1)$ for all $j \in \{1,\dots,d\}$, by definition of bounded in probability, there exists $m_1,\dots,m_d \in \mathbb{N}$, and $M_1,\dots,M_d \in \mathbb{R}$ such that $\sup_{\ntot > m_j } \mathbb{P}( \vert Y_{\ntot}^{(j)} \vert >M_j) < \epsilon/(2d)$ for all $j \in \{1,\dots,d\}$. Letting $m_1, \dots, m_d$ and $M_1, \dots, M_d$ be such numbers and defining $M=\sqrt{\sum_{j=1}^d M_j^2}$, and $m_0=\max_{j \in \{1,\dots,d\}} \{ m_j \}$ observe that for all $\ntot >m_0$, $$ \mathbb{P}( \vert \vert Y_{\ntot} \vert \vert_2 > M)=  \mathbb{P}\Big( \sum_{j=1}^d (Y_{\ntot}^{(j)})^2 > \sum_{j=1}^d M_j^2 \Big) \leq \mathbb{P} \Big( \bigcup\limits_{j=1}^d \{ \vert Y_{\ntot}^{(j)} \vert >M_j \} \Big) \leq \sum_{j=1}^d \mathbb{P}( \vert Y_{\ntot}^{(j)} \vert >M_j) < \frac{\epsilon}{2}.$$ Above the 2nd step follows from monotonicity of probability measure and the third step follows from the union bound and the final step holds because $\sup_{\ntot > m_j } \mathbb{P}( \vert Y_{\ntot}^{(j)} \vert >M_j) < \epsilon/(2d)$ for each $j \in \{1,\dots,d\}$. Since the above argument holds for all $\ntot \in \mathbb{N}$, we have found that $\sup_{\ntot > m_0} \mathbb{P}(  \vert \vert Y_{\ntot} \vert \vert_2 >M) \leq \epsilon/2 < \epsilon$. Hence we have shown that for any $\epsilon>0$ there exists an $m_0$ and $M$ such that $\sup_{\ntot > m_0} \mathbb{P}(  \vert \vert Y_{\ntot} \vert \vert_2 >M) < \epsilon$, implying by definition that $Y_{\ntot}=O_p(1)$.

Now fix $A \in \mathbb{R}^{d \times 3d}$ and let $v_j \in \mathbb{R}^{3d}$ denote the $j$th row of $A$ for $j \in \{1,\dots,d\}$. By Assumption \ref{assump:ZetaBootstrapConsistency}(ii), for each $j \in \{1,\dots,d\}$, $\big[ \sqrt{\ntot}( A \hat{\zeta}-A \zeta) \big]_j=\sqrt{\ntot} v_j^\tran (\hat{\zeta}-\zeta)$ converges in distribution. Thus $\big[ \sqrt{\ntot}( A \hat{\zeta}-A \zeta) \big]_j=O_p(1)$ for each $j \in \{1,\dots,d\}$, so by the result of the previous paragraph, $\sqrt{\ntot}( A \hat{\zeta}-A \zeta)=O_p(1)$.

Fix $j \in \{1,\dots,d\}$, and using $v_j$ defined in the previous paragraph define $R_{\ntot}^* \equiv \sqrt{\ntot} v_j^\tran (\hat{\zeta}^*-\hat{\zeta})$ and define $R_{\ntot} \equiv \sqrt{\ntot} v_j^\tran (\hat{\zeta}-\zeta)$. As argued in the previous paragraph $R_{\ntot}=O_p(1)$. To show that $R_{\ntot}^*=O_p(1)$ fix $\epsilon>0$. By Assumption \ref{assump:ZetaBootstrapConsistency}(i), $$\sup\limits_{x \in \mathbb{R}} \vert \mathbb{P}_*( R_{\ntot}^*  \leq x \giv \hat{\mathbb{P}}_{\ntot})- \mathbb{P}(R_{\ntot} \leq x) \vert \xrightarrow{p} 0.$$ Thus if we let $E_{\ntot,\epsilon/6}$ be the event that $$\sup\limits_{x \in \mathbb{R}} \vert \mathbb{P}_*( R_{\ntot}^*  \leq x \giv \hat{\mathbb{P}}_{\ntot})- \mathbb{P}(R_{\ntot} \leq x) \vert \leq \frac{\epsilon}{6},$$ there exists an $m_0 \in \mathbb{N}$ such that for all $\ntot >m_0$, $\mathbb{P}(E_{\ntot,\epsilon/6}) \geq 1-\epsilon/6$. It is easy to check that under the event $E_{\ntot,\epsilon/6}$, for any $x \in \mathbb{R}$, $$\mathbb{P}_*( \vert R_{\ntot}^* \vert  > x \giv \hat{\mathbb{P}}_{\ntot}) \leq \mathbb{P}( \vert R_{\ntot} \vert > x) + \frac{\epsilon}{3}.$$ Since $R_{\ntot}=O_p(1)$, there also exists an $M$ and $m_1 \geq m_0$ such that for all $\ntot > m_1$, $\mathbb{P}( \vert R_{\ntot} \vert > M ) < \epsilon/3$. Letting $m_1$ and $M$ be such numbers and observe that for all $\ntot >m_1$, $$\begin{aligned} \mathbb{P}( \vert R_{\ntot}^* \vert > M) & = \mathbb{P}( \{ \vert R_{\ntot}^* \vert > M \} \cap E_{\ntot,\epsilon/6})+ \mathbb{P}( \{ \vert R_{\ntot}^* \vert > M \} \cap E_{\ntot,\epsilon/6}^c) 
\\ & \leq \mathbb{P}( \{ \vert R_{\ntot}^* \vert > M \} \cap E_{\ntot,\epsilon/6}) + \frac{\epsilon}{6}
\\ & = \e[ \e_*[ I \{ \vert R_{\ntot}^* \vert > M \} I \{ E_{\ntot,\epsilon/6}\} \giv \hat{\mathbb{P}}_{\ntot}]] + \frac{\epsilon}{6}
\\ & = \e[ I \{ E_{\ntot,\epsilon/6}\} \e_*[ I \{ \vert R_{\ntot}^* \vert > M \}  \giv \hat{\mathbb{P}}_{\ntot}]] + \frac{\epsilon}{6}
\\ & = \e[ I \{ E_{\ntot,\epsilon/6}\} \mathbb{P}_*( \vert R_{\ntot}^* \vert  > M \giv \hat{\mathbb{P}}_{\ntot})] + \frac{\epsilon}{6}
\\ & \leq \e[ \mathbb{P}( \vert R_{\ntot} \vert > M) + \frac{\epsilon}{3}] + \frac{\epsilon}{6}
\\ & = \mathbb{P}( \vert R_{\ntot} \vert > M) + \frac{\epsilon}{2} < \frac{5 \epsilon}{6}.
\end{aligned}$$ Above the penultimate inequality holds because of the aforementioned upper bound on $\mathbb{P}_*( \vert R_{\ntot}^* \vert  > M \giv \hat{\mathbb{P}}_{\ntot})$ when the event $E_{\ntot,\epsilon/6}$ occurs. Taking the supremum of the above inequality over all $\ntot >m_1$, $\sup_{\ntot >m_1} \mathbb{P}( \vert R_{\ntot}^* \vert > M) < \epsilon$. Thus we have shown that for any fixed $\epsilon>0$, there exists numbers $m_1$ and $M$ such that $\sup_{\ntot >m_1} \mathbb{P}( \vert R_{\ntot}^* \vert > M) < \epsilon$, so by definition $R_{\ntot}^*=O_p(1)$. Recalling our definition of $R_{\ntot}^*$ it follows that for each $j \in \{1,\dots,d \}$, $\big[ \sqrt{\ntot}( A \hat{\zeta}^*-A \hat{\zeta}) \big]_j=\sqrt{\ntot} v_j^\tran (\hat{\zeta}^*-\hat{\zeta})=O_p(1)$. Since each component of the sequence of vectors $\sqrt{\ntot}( A \hat{\zeta}^*-A \hat{\zeta})$ is $O_p(1)$, by the result of the first paragraph $\sqrt{\ntot}( A \hat{\zeta}^*-A \hat{\zeta})=O_p(1)$. \end{proof}

In the following lemma, $r_{\ntot}$ is a rescaling constant that depends on $\ntot$ and satisfies $r_{\ntot} \to \infty$ as $\ntot \to \infty$. In many applications of interest $r_{\ntot} =\sqrt{\ntot}$; however, here we choose to state the lemma more generally.

\begin{lemma}\label{lemma:IgnoreOp1TermsForBootCI} Suppose $R_{\ntot}$ is a univariate pivot and that $R_{\ntot}^*$ is the bootstrapped (or approximate bootstrapped) version of $R_{\ntot}$ such that as $\ntot \to \infty$, $$\sup\limits_{x \in \mathbb{R}} \vert \mathbb{P}_*( R_{\ntot}^* \leq x \giv \hat{\mathbb{P}}_{\ntot})- \mathbb{P}(R_{\ntot} \leq x) \vert \xrightarrow{p} 0 \quad \text{and} \quad R_{\ntot} \xrightarrow{d} R_{\infty},$$ where $R_{\infty}$ has symmetric distribution with a continuous and strictly increasing CDF. Further suppose that $\hat{\eta}$ and $\hat{\eta}^*$ are an estimator and a bootstrap (or approximate bootstrap) draw of the estimator such that $r_{\ntot}(\hat{\eta}-\eta)=R_{\ntot}+o_p(1)$ and $r_{\ntot}(\hat{\eta}^*-\hat{\eta})=R_{\ntot}^*+o_p(1)$. Then, letting $\hat{\eta}^{(1)},\dots,  \hat{\eta}^{(B)}$ be IID draws from the bootstrap (or approximate bootstrap) distribution $\hat{\eta}^* \giv \hat{\mathbb{P}}_{\ntot}$, the empirical quantiles of this sequence provide asymptotically valid confidence intervals for $\eta$ in the sense that $$\lim\limits_{\ntot,B \to \infty} \mathbb{P} \Big( \eta \in \big( K_{\ntot,B}^{-1}(\alpha/2),K_{\ntot,B}^{-1}(1-\alpha/2)  \big) \Big) =1-\alpha \quad \text{where } K_{\ntot,B}(x)=\frac{1}{B} \sum_{b=1}^B I \{ \hat{\eta}^{(b)} \leq x \}.$$ 
\end{lemma}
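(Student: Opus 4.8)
The plan is to convert the coverage event into a statement about the pivot, replace the empirical bootstrap quantiles by their population (infinite-resample) counterparts, and then show that these bootstrap quantiles converge in probability to the symmetric quantiles of $R_{\infty}$, at which point a converging-bounds (Slutsky-type) argument delivers the nominal coverage. The distinctive feature here, relative to the textbook percentile-bootstrap validity theorem, is that bootstrap consistency is assumed for the \emph{idealized} pivots $R_{\ntot},R_{\ntot}^*$, whereas the interval is built from the \emph{actual} draws $\hat{\eta}^{(b)}$, which satisfy $r_{\ntot}(\hat\eta-\eta)=R_{\ntot}+o_p(1)$ and $r_{\ntot}(\hat\eta^*-\hat\eta)=R_{\ntot}^*+o_p(1)$. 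The key device is that an $o_p(1)$ perturbation cannot move a CDF that converges to a continuous limit.

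First I would dispatch the Monte Carlo limit. Writing $\hat{K}_{\ntot}(x)=\mathbb{P}_*(\hat\eta^*\le x\mid\hat{\mathbb{P}}_{\ntot})$ for the population bootstrap CDF, note that $K_{\ntot,B}$ is the empirical CDF of $B$ IID draws from $\hat{K}_{\ntot}$, so by Glivenko--Cantelli $\rho_{\infty}(K_{\ntot,B},\hat{K}_{\ntot})\to 0$ as $B\to\infty$ (conditionally on $\hat{\mathbb{P}}_{\ntot}$), and the empirical quantiles $K_{\ntot,B}^{-1}(\alpha/2),K_{\ntot,B}^{-1}(1-\alpha/2)$ converge to the population bootstrap quantiles $\hat{K}_{\ntot}^{-1}(\alpha/2),\hat{K}_{\ntot}^{-1}(1-\alpha/2)$. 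This reduces the target probability to $\mathbb{P}\big(\eta\in(\hat{K}_{\ntot}^{-1}(\alpha/2),\hat{K}_{\ntot}^{-1}(1-\alpha/2))\big)$; I would carry the joint $B,\ntot\to\infty$ bookkeeping at the very end, which is routine once the $\ntot$-asymptotics are established. Next, letting $\hat{c}_{\ntot}(p)$ denote the $p$-th bootstrap quantile of $r_{\ntot}(\hat\eta^*-\hat\eta)$ and using $\hat{K}_{\ntot}^{-1}(p)=\hat\eta+\hat{c}_{\ntot}(p)/r_{\ntot}$, the inclusion rearranges to
$$ -\hat{c}_{\ntot}(1-\alpha/2) < r_{\ntot}(\hat\eta-\eta) < -\hat{c}_{\ntot}(\alpha/2). $$
On the data side, $r_{\ntot}(\hat\eta-\eta)=R_{\ntot}+o_p(1)\xrightarrow{d}R_{\infty}$ by Slutsky.

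The hard part is the bootstrap side: showing $\hat{c}_{\ntot}(p)\xrightarrow{p}q(p)$, the $p$-quantile of $R_{\infty}$ (well defined since the limiting CDF $H$ is continuous and strictly increasing). Since $r_{\ntot}(\hat\eta^*-\hat\eta)=R_{\ntot}^*+\Delta_{\ntot}^*$ with $\Delta_{\ntot}^*=o_p(1)$ jointly over the data and the resampling, I would first transfer this into a conditional statement: by Fubini, $\e\big[\mathbb{P}_*(|\Delta_{\ntot}^*|>\epsilon\mid\hat{\mathbb{P}}_{\ntot})\big]\to0$, so $\mathbb{P}_*(|\Delta_{\ntot}^*|>\epsilon\mid\hat{\mathbb{P}}_{\ntot})\xrightarrow{p}0$ for each $\epsilon>0$. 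Sandwiching the bootstrap CDF of $R_{\ntot}^*+\Delta_{\ntot}^*$ between $\epsilon$-shifts of the bootstrap CDF of $R_{\ntot}^*$, then invoking bootstrap consistency (the hypothesis gives $\rho_{\infty}$-closeness of the bootstrap CDF of $R_{\ntot}^*$ to $\mathbb{P}(R_{\ntot}\le\cdot)$), Polya's theorem (so $\mathbb{P}(R_{\ntot}\le\cdot)\to H$ uniformly because $H$ is continuous), and uniform continuity of $H$, yields $\rho_{\infty}(\text{bootstrap CDF of }r_{\ntot}(\hat\eta^*-\hat\eta),\,H)\xrightarrow{p}0$. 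Uniform CDF convergence to a strictly increasing continuous limit then gives the quantile convergence $\hat{c}_{\ntot}(p)\xrightarrow{p}q(p)$. This $o_p(1)$-absorption on the bootstrap side is the main obstacle, precisely because it requires upgrading an unconditional $o_p(1)$ to a statement about conditional bootstrap CDFs.

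Finally I would assemble the pieces. By symmetry of $R_{\infty}$ we have $-q(1-\alpha/2)=q(\alpha/2)$ and $-q(\alpha/2)=q(1-\alpha/2)$, so the rearranged coverage event reads $a_{\ntot}<S_{\ntot}<b_{\ntot}$ with $a_{\ntot}=-\hat{c}_{\ntot}(1-\alpha/2)\xrightarrow{p}q(\alpha/2)$, $b_{\ntot}=-\hat{c}_{\ntot}(\alpha/2)\xrightarrow{p}q(1-\alpha/2)$, and $S_{\ntot}=r_{\ntot}(\hat\eta-\eta)\xrightarrow{d}R_{\infty}$. Applying the standard fact that $\mathbb{P}(a_{\ntot}<S_{\ntot}<b_{\ntot})\to H(b)-H(a)$ whenever $a_{\ntot}\xrightarrow{p}a$, $b_{\ntot}\xrightarrow{p}b$, and $S_{\ntot}\xrightarrow{d}S_{\infty}$ with $H$ continuous at $a,b$, the coverage converges to $H(q(1-\alpha/2))-H(q(\alpha/2))=(1-\alpha/2)-\alpha/2=1-\alpha$. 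Combining this with the $B\to\infty$ reduction completes the proof of the joint limit.
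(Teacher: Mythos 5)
Your proposal is correct and follows the same overall skeleton as the paper's proof: reduce everything to showing that the bootstrap CDF of the \emph{actual} pivot $r_{\ntot}(\hat{\eta}^*-\hat{\eta})$ converges in probability to the continuous, strictly increasing limit $H$, deduce convergence of the bootstrap quantiles, and finish with symmetry of $R_{\infty}$ plus a Slutsky-type converging-bounds argument. The executions differ in two instructive places. For the key $o_p(1)$-absorption step, the paper writes $\mathbb{P}_*(\hat{R}^*_{\ntot}\le x\giv\hat{\mathbb{P}}_{\ntot})$ as $H_{\text{Boot}}(x+D^*_{\ntot})$, i.e.\ it absorbs the perturbation $D^*_{\ntot}=R^*_{\ntot}-\hat{R}^*_{\ntot}$ as if it were a deterministic shift even though $D^*_{\ntot}$ is random under the resampling measure, and then controls the shifted CDF via $\rho_{\infty}$ bounds and continuity of $H$; your Fubini-plus-sandwich argument (upgrade the unconditional $o_p(1)$ to $\mathbb{P}_*(|\Delta^*_{\ntot}|>\epsilon\giv\hat{\mathbb{P}}_{\ntot})\xrightarrow{p}0$ by Markov's inequality, then bracket the perturbed bootstrap CDF between $\pm\epsilon$ shifts of the unperturbed one) is the rigorous version of exactly this maneuver, and is arguably cleaner. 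Conversely, your Monte Carlo step is the looser of the two: sending $B\to\infty$ at fixed $\ntot$ and asserting $K_{\ntot,B}^{-1}(p)\to\hat{K}_{\ntot}^{-1}(p)$ is delicate (the fixed-$\ntot$ bootstrap law is discrete, so its quantiles need not be unique), and an iterated limit does not by itself deliver the joint limit $\lim_{\ntot,B\to\infty}$ that the lemma asserts. The paper avoids this by proving convergence of the empirical bootstrap CDF to $H$ pointwise in probability jointly in $(\ntot,B)$ (strong law plus the CDF convergence already established) and only then invoking the quantile-convergence lemma for continuous, strictly increasing limits. Your ingredients repair the issue the same way: combine the Glivenko--Cantelli/DKW bound, which is uniform in the resampled distribution, with your uniform convergence of the pivot's bootstrap CDF to $H$, conclude $\rho_{\infty}$-convergence of the empirical pivot CDF to $H$ jointly in $(\ntot,B)$, and pass to quantiles afterward; so the deferred bookkeeping is indeed routine, but it must be organized as a joint limit comparing directly against $H$, never against the fixed-$\ntot$ bootstrap quantiles.
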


\begin{proof}
Let $\hat{R}_{\ntot} \equiv r_{\ntot}(\hat{\eta}-\eta)$ and $\hat{R}_{\ntot}^* \equiv r_{\ntot}(\hat{\eta}^*-\hat{\eta})$ and note that by assumption $\hat{R}_{\ntot} =R_{\ntot}+o_p(1)$ and $ \hat{R}_{\ntot}^*=R_{\ntot}^*+o_p(1)$. Also let $D_{\ntot}^*=R_{\ntot}^*-\hat{R}_{\ntot}^*$ and note that $D_{\ntot}^*=o_p(1)$. Now define the following CDFs $$H_{\text{Boot}}(x)=  \mathbb{P}_*(R_{\ntot}^* \leq x \giv \hat{\mathbb{P}}_{\ntot}),  \  \hat{H}_{\text{Boot}}(x) = \mathbb{P}_*(\hat{R}_{\ntot}^* \leq x \giv \hat{\mathbb{P}}_{\ntot}), \quad \text{and} \quad H_{\ntot}(x) =\mathbb{P}(R_{\ntot} \leq x),$$ where the first two CDFs are random CDFs that depend on the empirical data distribution $\hat{\mathbb{P}}_{\ntot}$. Letting $\rho_{\infty}(\cdot,\cdot)$ denote the metric defined at \eqref{eq:rhoSupNorm}, by assumption in the lemma statement $\rho_{\infty}(H_{\text{Boot}},H_{\ntot}) \xrightarrow{p} 0$. Also by assumption $R_{\ntot} \xrightarrow{d} R_{\infty}$ where $R_{\ntot}$ has CDF $H_{\ntot}$ and where, by assumption, $R_{\infty}$ has a continuous and strictly increasing CDF as well as a symmetric distribution. Letting $H$ be the CDF of $R_{\infty}$, by Polya's Theorem (Theorem 11.2.9 in \cite{TSH}) $H_{\ntot}(x)$ converges to $H(x)$ uniformly in $x$ as $\ntot \to \infty$ and hence $\rho_{\infty}(H_{\ntot},H)=o(1)$.

Now we will show $\hat{H}_{\text{Boot}}(x) \xrightarrow{p} H(x)$ for all $x \in \mathbb{R}$. To do this fix $x \in \mathbb{R}$ and note $$\begin{aligned} \vert \hat{H}_{\text{Boot}}(x)-H(x) \vert & \leq  \vert\hat{H}_{\text{Boot}}(x)-H_{\text{Boot}}(x) \vert + \vert H_{\text{Boot}}(x)-H_{\ntot}(x) \vert + \vert H_{\ntot}(x)-H(x) \vert 
\\ & \leq \vert\hat{H}_{\text{Boot}}(x)-H_{\text{Boot}}(x) \vert + \rho_{\infty} (H_{\text{Boot}},H_{\ntot} ) + \rho_{\infty} (H_{\ntot},H )
\\ & = \vert \mathbb{P}_*(\hat{R}_{\ntot}^* \leq x \giv \hat{\mathbb{P}}_{\ntot}) -  \mathbb{P}_*(R_{\ntot}^* \leq x \giv \hat{\mathbb{P}}_{\ntot} ) \vert +o_p(1) +o(1)
\\ & = \vert \mathbb{P}_*(R_{\ntot} \leq x+ D_{\ntot}^* \giv \hat{\mathbb{P}}_{\ntot}) -  \mathbb{P}_*(R_{\ntot} \leq x \giv \hat{\mathbb{P}}_{\ntot} ) \vert +o_p(1)
\\ & =  \vert H_{\text{Boot}}(x+D_{\ntot}^*)- H_{\text{Boot}}(x) \vert +o_p(1)
\\ & \leq  \vert H_{\text{Boot}}(x+D_{\ntot}^*)- H_{\ntot}(x+D_{\ntot}^*) \vert +\vert H_{\ntot}(x+D_{\ntot}^*)- H_{\ntot}(x) \vert 
\\ & \quad +\vert H_{\ntot}(x)- H_{\text{Boot}}(x)  \vert+o_p(1)
\\ & \leq 2 \rho_{\infty}(H_{\text{Boot}},H_{\ntot}) +\vert H_{\ntot}(x+D_{\ntot}^*)- H_{\ntot}(x) \vert  +o_p(1)
\\ & = \vert H_{\ntot}(x+D_{\ntot}^*)- H_{\ntot}(x) \vert +o_p(1)
\\ & \leq \vert H_{\ntot}(x+D_{\ntot}^*)-H(x+D_{\ntot}^*) \vert +\vert H(x+D_{\ntot}^*)-H(x) \vert  +\vert H(x)-H_{\ntot}(x) \vert +o_p(1)
\\ & \leq 2 \rho_{\infty}(H_{\ntot},H) +\vert H(x+D_{\ntot}^*)-H(x) \vert +o_p(1)
\\ & = o_p(1).
\end{aligned}$$ Above the last step holds because $H$ is continuous and $D_{\ntot}^*=o_p(1)$ and because as mentioned earlier $\rho_{\infty}(H_{\ntot},H)=o(1)$. Thus we have shown that $\hat{H}_{\text{Boot}}(x) \xrightarrow{p} H(x)$ for all $x \in \mathbb{R}$. 

Now define $\hat{R}_{\ntot}^{(b)}=r_{\ntot}(\hat{\eta}^{(b)}-\hat{\eta})$ for each $b=1,\dots,B$. Since $\hat{\eta}^{(1)},\dots,  \hat{\eta}^{(B)}$ are IID draws from the bootstrap distribution $\hat{\eta}^* \giv \hat{\mathbb{P}}_{\ntot}$, it is clear that $\hat{R}_{\ntot}^{(1)},\dots, \hat{R}_{\ntot}^{(B)}$ are IID draws from the bootstrap distribution  $\hat{R}_{\ntot}^* \giv \hat{\mathbb{P}}_{\ntot}$, which has CDF $\hat{H}_{\text{Boot}}$. Thus defining $$\hat{H}_{\text{Boot}}^{(B)}(x) \equiv\frac{1}{B} \sum_{b=1}^B I \{ \hat{R}_{\ntot}^{(b)} \leq x \},$$ it is clear that by the strong law of large numbers that as $B \to \infty$, $\hat{H}_{\text{Boot}}^{(B)}(x) \xrightarrow{a.s} \hat{H}_{\text{Boot}}(x)$ for all $x \in \mathbb{R}$. Letting $o_B(1)$ denote a term that converges to zero almost surely as $B \to \infty$, combining this with the previous result we get that for all $x \in \mathbb{R}$, $$\vert \hat{H}_{\text{Boot}}^{(B)}(x) 
 -H(x) \vert \leq \vert \hat{H}_{\text{Boot}}^{(B)}(x) 
 -\hat{H}_{\text{Boot}}(x) \vert + \vert \hat{H}_{\text{Boot}}(x) - H(x) \vert =o_B(1)+o_p(1).$$ Thus $\hat{H}_{\text{Boot}}^{(B)}(x) \xrightarrow{p} H(x)$ as $\ntot,B \to \infty$ for all $x$. Since $H$ is continuous and strictly increasing, by Lemma 11.2.1 in \cite{TSH}, as $\ntot,B \to \infty$, $[\hat{H}_{\text{Boot}}^{(B)}]^{-1}(\alpha) \xrightarrow{p} H^{-1}(\alpha)$ for all $\alpha \in (0,1)$.

To complete the proof, recall that $K_{\ntot,B}(x)=\frac{1}{B} \sum_{b=1}^B I \{ \hat{\eta}^{(b)} \leq x \}$, $\hat{R}_{\ntot} \equiv r_{\ntot}(\hat{\eta}-\eta)$ and $\hat{R}_{\ntot}^{(b)} \equiv r_{\ntot}(\hat{\eta}^{(b)}-\hat{\eta})$. Thus if we let $o_{p(\ntot,B)}(1)$ denote a terms that converge to $0$ in probability as $\ntot,B \to \infty$, $$\begin{aligned} \lim \limits_{\ntot,B \to \infty} \mathbb{P}(\eta \geq K_{\ntot,B}^{-1}(1-\alpha/2)) & = \lim \limits_{\ntot,B \to \infty} \mathbb{P} \big( - \hat{R}_{\ntot} \geq r_{\ntot} ( K_{\ntot,B}^{-1}(1-\alpha/2))-\hat{\eta}) \big)
\\ & =  \lim \limits_{\ntot,B \to \infty}\mathbb{P}\big( -\hat{R}_{\ntot} \geq  [\hat{H}_{\text{Boot}}^{(B)}]^{-1}(1-\alpha/2) \big)
\\ & = \lim \limits_{\ntot,B \to \infty}\mathbb{P} \big( -\hat{R}_{\ntot}  \geq  H^{-1}(1-\alpha/2) +o_{p(\ntot,B)}(1) \big)
\\ & = \lim \limits_{\ntot,B \to \infty}\mathbb{P} \big( \hat{R}_{\ntot} +o_{p(\ntot,B)}(1) \leq  H^{-1}(\alpha/2) \big)
\\ & =  H(H^{-1}(\alpha/2))=\alpha/2.
\end{aligned}$$ Above the third step follows from the previous result and the penultimate step follows from the assumption that $H$ is the CDF of a symmetric random variable. The final step above holds because by Slutsky's lemma and since $\hat{R}_{\ntot} +o_{p(\ntot,B)}=R_{\ntot}+o_p(1)+o_{p(\ntot,B)}$ converges to a random variable whose CDF is $H$ as $\ntot,B \to \infty$. A similar argument shows that $\lim_{\ntot,B \to \infty} \mathbb{P}(\eta \leq K_{\ntot,B}^{-1}(\alpha/2))=\alpha/2$. Combining this with the previous result $$\begin{aligned} \lim\limits_{\ntot,B \to \infty} \mathbb{P}\Big( \eta \in \big( K_{\ntot,B}^{-1}(\alpha/2),K_{\ntot,B}^{-1}(1-\alpha/2)  \big) \Big) & =\lim\limits_{\ntot,B \to \infty} \Big( 1-\mathbb{P}(\eta \leq K_{\ntot,B}^{-1}(\alpha/2))-\mathbb{P}(\eta \geq K_{\ntot,B}^{-1}(1-\alpha/2)) \Big)
\\ & = 1-\alpha/2 -\alpha/2=1-\alpha. \end{aligned}$$ 

\end{proof}

\subsection{Proofs of Theorems \ref{theorem:FullPercentileBootstrapCIsValid} and \ref{theorem:GaussianConvBootCIsValid}}

\subsubsection{Proof of Theorem \ref{theorem:FullPercentileBootstrapCIsValid}}\label{sec:ProofOfFullBootPercentileCorollary}

    Fix $j \in \{1,\dots,d\}$. Let $v_j \in \mathbb{R}^{3d}$ be the vector such that $v_j^\tran =e_j^\tran \begin{bmatrix}
        I_{d \times d} & -\Omega & \Omega
    \end{bmatrix}$. Observe that $v_j^\tran \zeta =\theta_j$, $v_j^\tran \hat{\zeta} =\htPPom_j$ and $v_j^\tran \hat{\zeta}^* =\htPPomStar_j$, with the latter being a draw from the bootstrap distribution conditional on $\hat{\mathbb{P}}_{\ntot}$. Let $R_{\ntot}=\sqrt{\ntot} v_j^\tran(\hat{\zeta}-\zeta)=\sqrt{\ntot}(\htPPom_j-\theta_j)$ and $R_{\ntot}^*=\sqrt{\ntot} v_j^\tran(\hat{\zeta}^*-\hat{\zeta})=\sqrt{\ntot}(\htPPomStar_j-\htPPom_j)$. Thus by Assumption \ref{assump:ZetaBootstrapConsistency}, $$\sup\limits_{x \in \mathbb{R}} \vert \mathbb{P}_*( R_{\ntot}^* \leq x \giv \hat{\mathbb{P}}_{\ntot})- \mathbb{P}(R_{\ntot} \leq x) \vert \xrightarrow{p} 0,$$ and moreover, $R_{\ntot}$ converges in distribution to a symmetrically distributed random variable with a continuous, strictly increasing CDF.

    Now define $\hat{R}_{\ntot} \equiv \sqrt{\ntot}(\htPPhom_j-\theta_j)$ and $A= \begin{bmatrix} 0_{d \times d} & -I_{d \times d} & I_{d \times d} \end{bmatrix}$ and note that
    $$\begin{aligned} \hat{R}_{\ntot} & = \sqrt{\ntot} e_j^\tran (\htPPhom-\theta)
    \\ & = R_{\ntot} +   \sqrt{\ntot} e_j^\tran (\hat{\Omega}-\Omega) (\hgm-\hgc)
    \\ & = R_{\ntot} +    e_j^\tran (\hat{\Omega}-\Omega)  \big(\sqrt{\ntot} (A \hat{\zeta}-A \zeta) \big)
    \\ & = R_{\ntot} +e_j^\tran o_p(1) O_p(1)
    \\ & = R_{\ntot} +o_p(1).
    \end{aligned}$$ Above the penultimate step holds by Lemma \ref{lemma:TightnessHelperLemma} and because $\hat{\Omega}=\Omega+o_p(1)$. Also define $\hat{R}_{\ntot}^* \equiv \sqrt{\ntot}(\htPPhomStar_j-\htPPhom_j)$, and note that
    $$\begin{aligned} \hat{R}_{\ntot}^* & = \sqrt{\ntot} e_j^\tran (\htPPhomStar-\htPPhom)
    \\ & = R_{\ntot}^* +   \sqrt{\ntot} e_j^\tran (\hat{\Omega}-\Omega) \big((\hgmarg{,*}-\hgcarg{,*}) -(\hgm-\hgc) \big)
    \\ & = R_{\ntot}^* +    e_j^\tran (\hat{\Omega}-\Omega)  \big(\sqrt{\ntot} (A \hat{\zeta}^*-A \hat{\zeta}) \big)
    \\ & = R_{\ntot}^* +e_j^\tran o_p(1) O_p(1)
    \\ & = R_{\ntot}^* +o_p(1).
    \end{aligned}$$ Above the penultimate step holds by Lemma \ref{lemma:TightnessHelperLemma} and because $\hat{\Omega}=\Omega+o_p(1)$.

    Now let $\htPPhomArg{(1)},\dots, \htPPhomArg{(B)}$ be IID draws from the distribution of $\htPPhomStar \giv \hat{\mathbb{P}}_{\ntot}, \hat{\Omega}$ that are computed in Algorithm \ref{alg:FullPercentileBootstrap}. Further let  
     $K_{\ntot,B}(x)=\frac{1}{B} \sum_{b=1}^B I \{ \htPPhomB_j \leq x \}$ be the empirical CDF of the $j$th component of the $B$ draws of $\htPPhomStar$ from the bootstrap distribution. Since $\hat{R}_{\ntot}^*=\sqrt{\ntot}(\htPPhomStar_j-\htPPhom_j)=R_{\ntot}^*+o_p(1)$, since $\hat{R}_{\ntot}=R_{\ntot}+o_p(1)$ with $R_{\ntot}$ converging in distribution to a symmetrically distributed random variable with a continuous, strictly increasing CDF, and since as mentioned earlier, $$\sup\limits_{x \in \mathbb{R}} \vert \mathbb{P}_*( R_{\ntot}^* \leq x \giv \hat{\mathbb{P}}_{\ntot})- \mathbb{P}(R_{\ntot} \leq x) \vert \xrightarrow{p} 0,$$ the conditions of Lemma \ref{lemma:IgnoreOp1TermsForBootCI} are met (with $r_{\ntot}=\sqrt{\ntot}$ and $\hat{\eta}^*= \htPPhomStar_j$). In particular, by Lemma \ref{lemma:IgnoreOp1TermsForBootCI}, $$\lim\limits_{\ntot,B \to \infty} \mathbb{P} \Big( \theta_j \in \big( K_{\ntot,B}^{-1}(\alpha/2),K_{\ntot,B}^{-1}(1-\alpha/2)  \big) \Big) =1-\alpha.$$ Observing that Algorithm \ref{alg:FullPercentileBootstrap} returns the confidence interval $$\mathcal{C}_j^{1-\alpha}=\big( K_{\ntot,B}^{-1}(\alpha/2),K_{\ntot,B}^{-1}(1-\alpha/2)  \big),$$ completes the proof.

\subsubsection{Proof of Theorem \ref{theorem:GaussianConvBootCIsValid}}\label{sec:ProofOfGaussianConvCIValidity}

 Fix $j \in \{1,\dots,d\}$. Using the same notation as Theorem \ref{theorem:ConsistencyOfGaussianConvBoot} let $$R_{\ntot}=\sqrt{\ntot} (\htPPom -\theta) \quad \text{and} \quad R_{\ntot}^* =\Omega L_{\gamma} Z + \sqrt{\ntot} \big( \htcarg{,*} -\Omega \hgcarg{,*} - (\htc -\Omega \hgc) \big)$$ where $Z \sim \mathcal{N}(0,I_{d \times d})$ is independent of all data and $L_{\gamma} L_{\gamma}^\tran=\SigGM$ gives a lower triangular Cholesky decomposition of $\SigGM$. By Theorem \ref{theorem:ConsistencyOfGaussianConvBoot},  $$\sup\limits_{x \in \mathbb{R}} \vert \mathbb{P}_*( [R_{\ntot}^*]_j \leq x \giv \hat{\mathbb{P}}_{\ntot})- \mathbb{P}([R_{\ntot}]_j \leq x) \vert \xrightarrow{p} 0.$$ Applying Proposition \ref{prop:PPEstimatorCLT} with a constant tuning matrix, $R_{\ntot} \xrightarrow{d} \mathcal{N}\big(0,\SigPTD(\Omega) \big)$.
    
    Now let $\mathcal{S}$ denote the set of symmetric matrices in $\mathbb{R}^{d \times d}$, let $\mathcal{S}_{++}^d$ denote the set of symmetric positive definite matrices in $\mathbb{R}^{d \times d}$, let $\mathcal{L}_{++}^d$ denote the set of lower triangular matrices in $\mathbb{R}^{d \times d}$ with positive diagonal entries. Since matrices in $\mathcal{S}_{++}^d$ have a unique Cholesky decomposition, define $f_{\text{Chol}}: \mathcal{S}_{++}^d \to \mathcal{L}_{++}^d$ to be the function such that for all $A \in \mathcal{S}_{++}^d$, $f_{\text{Chol}}(A)$ is the unique matrix in $\mathcal{L}_{++}^d$ satisfying $A=f_{\text{Chol}}(A)[f_{\text{Chol}}(A)]^\tran$. By Lemma 12.1.6 in \cite{schatzman2002NumericalAnalysisTextbook},  
    $f_{\text{Chol}}$ is a continuous function. Further, since $\SigGM \in \mathcal{S}_{++}^d$ (by assumption $\SigGM \succ 0$, and clearly $\SigGM$ is symmetric) and $L_{\gamma} L_{\gamma}^\tran = \SigGM$, $L_{\gamma}=f_{\text{Chol}}(\SigGM)$ is the unique Cholesky decomposition of $\SigGM$. 

    Now we will show that when $\hat{L}_{\gamma}$ is a lower triangular matrix returned in Algorithm \ref{alg:QuickConvolutionBootstrap} satisfying $\hat{L}_{\gamma}\hat{L}_{\gamma}^\tran =\hSMatGM$, $\sqrt{\ntot} \hat{L}_{\gamma} \xrightarrow{p} L_{\gamma}$. To do this fix $\epsilon>0$ and observe that since $f_{\text{Chol}}(\cdot)$ is continuous at $\SigGM$, there exists a $\delta > 0$, such that for any $M \in \mathcal{S}_{++}^d$, $$\vert \vert f_{\text{Chol}}(M) - f_{\text{Chol}}(\SigGM) \vert \vert > \epsilon \Rightarrow \vert \vert M - \SigGM \vert \vert \geq \delta.  $$ Thus letting $\lambda_d: \mathcal{S} \to \mathbb{R}$ be a function that gives the minimum eigenvalue of the input matrix, $$\begin{aligned}
        \mathbb{P} \big( \vert \vert \sqrt{\ntot} \hat{L}_{\gamma}-L_{\gamma} \vert \vert > \epsilon \big) & = \mathbb{P} \big( \vert \vert \sqrt{\ntot} \hat{L}_{\gamma}-L_{\gamma} \vert \vert > \epsilon,  \hSMatGM \in \mathcal{S}_{++}^d \big) + \mathbb{P} \big( \vert \vert \sqrt{\ntot} \hat{L}_{\gamma}-L_{\gamma} \vert \vert > \epsilon,  \hSMatGM \notin \mathcal{S}_{++}^d \big)
        \\ & \leq \mathbb{P} \big( \vert \vert f_{\text{Chol}}(\ntot \hSMatGM)- f_{\text{Chol}}(\SigGM) \vert \vert > \epsilon,  \hSMatGM \in \mathcal{S}_{++}^d \big) + \mathbb{P}(  \hSMatGM \notin \mathcal{S}_{++}^d)
        \\ & \leq \mathbb{P} \big( \vert \vert \ntot \hSMatGM- \SigGM \vert \vert \geq \delta,  \hSMatGM \in \mathcal{S}_{++}^d \big) + \mathbb{P}\big(  \lambda_d(\ntot \hSMatGM) =0  \big)
        \\ & \leq \mathbb{P} \big( \vert \vert \ntot \hSMatGM- \SigGM \vert \vert \geq \delta\big) + \mathbb{P} \big( \vert \lambda_d(\ntot \hSMatGM)-\lambda_d(\SigGM) \vert > \lambda_d(\SigGM)/2 \big).
    \end{aligned}$$ Since by assumption, $\hSMatGM$ returned in Algorithm \ref{alg:QuickConvolutionBootstrap} satisfies $\ntot \hSMatGM \xrightarrow{p} \SigGM$, and since $\lambda_d(\cdot)$ is also a continuous function $\lambda_d(\ntot \hSMatGM) \xrightarrow{p} \lambda_d(\SigGM)$. By the definition of convergence in probability and since $\delta>0$ and $\lambda_d(\SigGM)/2 >0$, taking the limit as $\ntot \to \infty$ of each side of the above inequality implies that $\lim_{\ntot \to \infty} \mathbb{P} \big( \vert \vert \sqrt{\ntot} \hat{L}_{\gamma}-L_{\gamma} \vert \vert > \epsilon \big)=0$. Since this argument holds for any fixed $\epsilon>0$, $\sqrt{\ntot} \hat{L}_{\gamma} \xrightarrow{p} L_{\gamma}$.
    
    
    Now define $\htPPhomCB{*} \equiv \hat{\Omega} \hgm + \hat{\Omega} \hat{L}_{\gamma} Z + \htcarg{,*} -\hat{\Omega} \hgcarg{,*}$ to be the estimator of $\theta$ in one bootstrap iteration of Algorithm \ref{alg:QuickConvolutionBootstrap} (this is calculated for each bootstrap iteration in Line \ref{line:CalcThetaPTDBootConvApprox} of Algorithm \ref{alg:QuickConvolutionBootstrap}), $$\hat{R}_{\ntot} \equiv \sqrt{\ntot}(\htPPhom-\theta) \quad \text{and} \quad \hat{R}_{\ntot}^* \equiv \sqrt{\ntot} (\htPPhomCB{*} -\htPPhom ).$$ Now we will show that $\hat{R}_{\ntot}=R_{\ntot}+o_p(1)$ and $\hat{R}_{\ntot}^*=R_{\ntot}^*+o_p(1)$. To do this note that $$\hat{R}_{\ntot}  = \sqrt{\ntot}(\htPPhom-\theta)  = R_{\ntot} + (\hat{\Omega}-\Omega) \big( \sqrt{\ntot}(\hgm-\hgc) \big) =R_{\ntot} +o_p(1) O_p(1) =R_{\ntot} +o_p(1),$$ where the last step holds because by Lemma \ref{lemma:TightnessHelperLemma}, $\sqrt{\ntot}(\hgm-\hgc)=\sqrt{\ntot}(A_1 \hat{\zeta}-A_1 \zeta)=O_p(1)$ for $A_1= \begin{bmatrix} 0_{d \times d} & -I_{d \times d} & I_{d \times d} \end{bmatrix}$. Meanwhile, letting $A_2= \begin{bmatrix} 0_{d \times d} & I_{d \times d} & 0_{d \times d} \end{bmatrix}$,
    $$\begin{aligned} \hat{R}_{\ntot}^* & = \sqrt{\ntot} \big( (\hat{\Omega}\hgm + \hat{\Omega} \hat{L}_{\gamma} Z + \htcarg{,*} -\hat{\Omega} \hgcarg{,*} ) -(\hat{\Omega} \hgm+ \htc-\hat{\Omega} \hgc) \big)
    \\ & = \sqrt{\ntot} \hat{\Omega} \hat{L}_{\gamma} Z + \sqrt{\ntot} (\htcarg{,*}-\htc)  -\sqrt{\ntot} \hat{\Omega}( \hgcarg{,*} - \hgc) 
    \\ & = R_{\ntot}^* +(\sqrt{\ntot} \hat{\Omega} \hat{L}_{\gamma} - \Omega L_{\gamma})Z-\sqrt{\ntot} (\hat{\Omega}-\Omega)( \hgcarg{,*} - \hgc) 
    \\ & = R_{\ntot}^* +\big( (\Omega +o_p(1))(\sqrt{\ntot} \hat{L}_{\gamma}) -\Omega L_{\gamma} \big) O_p(1)-(\hat{\Omega}-\Omega) \big( \sqrt{\ntot}(A_2 \hat{\zeta}^* - A_2\hat{\zeta}) \big) 
    \\ & = R_{\ntot}^* +\big( (\Omega +o_p(1))(\sqrt{\ntot} \hat{L}_{\gamma}) -\Omega L_{\gamma} \big) O_p(1) - o_p(1) O_p(1) 
     \\ & = R_{\ntot}^* +\big( (\Omega +o_p(1))(L_{\gamma}+o_p(1)) -\Omega L_{\gamma} \big) O_p(1) - o_p(1)
     \\ & = R_{\ntot}^* +o_p(1).\end{aligned}$$ Above, the penultimate equality holds because $\sqrt{\ntot} \hat{L}_{\gamma} \xrightarrow{p} L_{\gamma}$ and the third last equality holds by Lemma \ref{lemma:TightnessHelperLemma}.
    
    Now let $\htPPhomCB{(1)},\dots, \htPPhomCB{(B)}$ be IID draws from the distribution of $\htPPhomCB{*} \giv \hat{\mathbb{P}}_{\ntot},\hat{\Omega}$ that are computed in Line \ref{line:CalcThetaPTDBootConvApprox} of Algorithm \ref{alg:QuickConvolutionBootstrap}. Further, let  
     $$K_{\ntot,B}(x)=\frac{1}{B} \sum_{b=1}^B I \{ \htPPhomCB{(b)}_j \leq x \}$$ be the empirical CDF of the $j$th component of the $B$ draws of $\htPPhomCB{*}$ from the approximate bootstrap distribution. Since $[\hat{R}_{\ntot}^*]_j=\sqrt{\ntot}(\htPPhomCB{*}_j-\htPPhom_j)=[R_{\ntot}^*]_j+o_p(1)$, since $[\hat{R}_{\ntot}]_j= \sqrt{\ntot}(\htPPhom_j-\theta_j)=[R_{\ntot}]_j+o_p(1)$ with $[R_{\ntot}]_j$ converging in distribution to a symmetrically distributed random variable with a continuous, strictly increasing CDF, and since as mentioned earlier, $$\sup\limits_{x \in \mathbb{R}} \vert \mathbb{P}_*( [R_{\ntot}^*]_j \leq x \giv \hat{\mathbb{P}}_{\ntot})- \mathbb{P}([R_{\ntot}]_j \leq x) \vert \xrightarrow{p} 0,$$ the conditions of Lemma \ref{lemma:IgnoreOp1TermsForBootCI} are met (with $r_{\ntot}=\sqrt{\ntot}$ and $\hat{\eta}^*= \htPPhomCB{*}_j$). In particular, by Lemma \ref{lemma:IgnoreOp1TermsForBootCI}, $$\lim\limits_{\ntot,B \to \infty} \mathbb{P} \Big( \theta_j \in \big( K_{\ntot,B}^{-1}(\alpha/2),K_{\ntot,B}^{-1}(1-\alpha/2)  \big) \Big) =1-\alpha.$$ Observing that Algorithm \ref{alg:QuickConvolutionBootstrap} returns the confidence interval $$\mathcal{C}_j^{1-\alpha}=\big( K_{\ntot,B}^{-1}(\alpha/2),K_{\ntot,B}^{-1}(1-\alpha/2)  \big),$$ completes the proof.
   
\section{Elaboration on when assumptions are met}

\subsection{Regularity conditions for Assumption \ref{assump:AsymptoticLinearity} to hold for M-estimators}\label{sec:RegularityMestimation}

 In this section, we provide sufficient conditions under which Assumption \ref{assump:AsymptoticLinearity} holds in M-estimation settings. Moreover we provide an explicit formula for $\Psi(\cdot)$ and $\tilde{\Psi}(\cdot)$ in such settings (see Proposition \ref{prop:MestimatorsAsymptoticallyLinear}).

Throughout this section, we suppose $l_{\vartheta}(\cdot)$ is a loss function parameterized by $\vartheta \in \Theta$, and suppose the goal is to estimate $\theta= \argmin_{\vartheta \in \Theta} \e[l_{\vartheta}(X)]$. For convenience we define $L(\vartheta) = \e[l_{\vartheta}(X)]$, $\tilde{L}(\vartheta) = \e[l_{\vartheta}(\tilde{X})]$ and $\gamma=\argmin_{\vartheta \in \Theta} \tilde{L}(\vartheta)$. We find sufficient conditions under which the estimators given by
\begin{equation}\label{eq:WeightedMestimatorDef}
    \begin{bmatrix}
        \htc \\ \hgc \\ \hgm
    \end{bmatrix} =
    \begin{bmatrix}
        \argmin_{\vartheta \in \Theta} \bigl\{ \frac{1}{\ntot} \sum_{i=1}^\ntot W_i l_{\vartheta}(X_i) \bigr\}
        \\  \argmin_{\vartheta \in \Theta} \bigl\{ \frac{1}{\ntot} \sum_{i=1}^\ntot W_i l_{\vartheta}(\tilde{X}_i) \bigr\}
        \\  \argmin_{\vartheta \in \Theta} \bigl\{ \frac{1}{\ntot} \sum_{i=1}^\ntot \bar{W}_i l_{\vartheta}(\tilde{X}_i) \bigr\}
    \end{bmatrix}
\end{equation} satisfy the weighted asymptotic linearity assumption (Assumption \ref{assump:AsymptoticLinearity}). 

Some sufficient conditions include the following, slightly stronger modification of the smoothness assumption from \cite{PPI++} \begin{assumption}[Smooth Enough Loss]\label{assump:SmoothEnoughForAsymptoticLineariaty} The loss function $l_{\vartheta}(\cdot)$ satisfies

\begin{enumerate}[(i)] 
    \item $\vartheta \mapsto l_{\vartheta}(X)$ is differentiable at  $\vartheta=\theta$ for almost every $X$ while $\vartheta \mapsto l_{\vartheta}(\tilde{X})$ is differentiable at $\vartheta=\theta$ and at $\vartheta=\gamma$ for almost every $\tilde{X}$.
    \item $\vartheta \mapsto l_{\vartheta}(X)$ is locally Lipschitz around $\vartheta=\theta$ and $\vartheta \mapsto l_{\vartheta}(\tilde{X})$ is locally Lipschitz around both $\vartheta=\theta$ and $\vartheta=\gamma$. In particular, there is a neighborhood of $\theta$ where $l_{\vartheta}(x)$ is $M(x)$-Lipschitz in $\vartheta$ and $l_{\vartheta}(\tilde{x})$ is $\tilde{M}_1(\tilde{x})$-Lipschitz in $\vartheta$, and there is a neighborhood of $\gamma$ where $l_{\vartheta}(\tilde{x})$ is $\tilde{M}_2(\tilde{x})$-Lipschitz in $\vartheta$, such that $\e[M(X)^2+\tilde{M}_1(\tilde{X})^2+\tilde{M}_2(\tilde{X})^2]< \infty$.
    \item The population losses given by $L(\vartheta)$ and 
     $\tilde{L}(\vartheta)$ both admit a 2nd-order Taylor expansions about $\theta$, $\tilde{L}(\vartheta)$ also admits a 2nd-order Taylor expansion about $\gamma$, and the Hessians $\nabla^2 L(\theta)$ and $\nabla^2 \tilde{L}(\gamma)$ are nonsingular.
\end{enumerate}
\end{assumption} We remark that above, the assumptions on the smoothness of $\vartheta \mapsto l_{\vartheta}(\tilde{X})$ in a neighborhood of $\vartheta=\theta$ are unnecessary for proving the following proposition whereas the assumptions on the smoothness of $\vartheta \mapsto l_{\vartheta}(\tilde{X})$ in a neighborhood of $\vartheta=\gamma$ are unnecessary for the method in \cite{PPI++}, which we study in Appendix \ref{sec:PPI++ComparisonSection}. Given the similarity of these smoothness assumptions, and because they both need to be invoked at least once in this paper, they are included in the same assumption statement for ease of exposition.

\begin{proposition}\label{prop:MestimatorsAsymptoticallyLinear}
        Under Assumptions \ref{assump:SamplingLabelling} and \ref{assump:SmoothEnoughForAsymptoticLineariaty}, if $\htc$, $\hgc$, and $\hgm$ are given by Equation \eqref{eq:WeightedMestimatorDef} and $\htc \xrightarrow{p} \theta$, $\hgc \xrightarrow{p} \gamma$, and $\hgm \xrightarrow{p} \gamma$, then Assumption \ref{assump:AsymptoticLinearity} holds with \begin{equation}\label{eq:PsiForMestimator}
\Psi(x) = -[\nabla^2 L(\theta)]^{-1} \dot{l}_{\theta}(x) \quad \text{and} \quad \tilde{\Psi}(\tilde{x}) = -[\nabla^2 \tilde{L}(\gamma)]^{-1} \dot{l}_{\gamma}(\tilde{x}).
    \end{equation}
\end{proposition}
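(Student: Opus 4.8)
The plan is to recognize each of $\htc$, $\hgc$, and $\hgm$ as a weighted M-estimator and to apply the standard asymptotic-linearity theorem for M-estimators (e.g., Theorem 5.23 of \cite{VanderVaartTextbook}) after absorbing the inverse-probability weight into the observation. For $\htc$, I would treat the pairs $(W_i, X_i)$ as the i.i.d.\ observations (they are i.i.d.\ by Assumption \ref{assump:SamplingLabelling}(i)) and $\vartheta \mapsto W_i l_{\vartheta}(X_i)$ as the loss; analogously use $(W_i, \tilde X_i)$ with loss $W_i l_{\vartheta}(\tilde X_i)$ for $\hgc$ and $(\bar W_i, \tilde X_i)$ with loss $\bar W_i l_{\vartheta}(\tilde X_i)$ for $\hgm$. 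Each estimator in Equation \eqref{eq:WeightedMestimatorDef} is an exact minimizer of its weighted empirical criterion, and the unobserved rows of $X$ contribute nothing because they carry zero weight, so this reduction is legitimate.

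The first key computation is that the \emph{population} weighted loss coincides with the ordinary population loss. Using the tower property together with the missing-at-random and known-sampling-probability parts of Assumption \ref{assump:SamplingLabelling}, $\e[W \giv \tilde X] = \e[I \giv \tilde X]/\pi(\tilde X) = 1$ and, since $I \indep X \giv \tilde X$, also $\e[W \giv X, \tilde X] = 1$; hence $\e[W l_{\vartheta}(X)] = \e[l_{\vartheta}(X)] = L(\vartheta)$ and $\e[W l_{\vartheta}(\tilde X)] = \tilde L(\vartheta)$. Likewise $\e[\bar W \giv \tilde X] = \e[(1-I)\giv \tilde X]/(1-\pi(\tilde X)) = 1$, so $\e[\bar W l_{\vartheta}(\tilde X)] = \tilde L(\vartheta)$. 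Consequently each population weighted criterion inherits the second-order Taylor expansion and nonsingular Hessian guaranteed by Assumption \ref{assump:SmoothEnoughForAsymptoticLineariaty}(iii): namely $\nabla^2 L(\theta)$ for $\htc$ and $\nabla^2 \tilde L(\gamma)$ for both $\hgc$ and $\hgm$, with $\theta$, $\gamma$, $\gamma$ the respective minimizers (matching the assumed consistency).

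Next I would verify the remaining hypotheses of the theorem. Differentiability transfers immediately because the weight does not depend on $\vartheta$: the derivative of $W_i l_{\vartheta}(X_i)$ at $\theta$ is $W_i \dot l_{\theta}(X_i)$, using Assumption \ref{assump:SmoothEnoughForAsymptoticLineariaty}(i). For the square-integrable local Lipschitz envelope, the overlap assumption bounds the weights, $W \le 1/a$ and $\bar W \le 1/(1-b)$, so $\vartheta \mapsto W_i l_{\vartheta}(X_i)$ is $W_i M(X_i)$-Lipschitz near $\theta$ with $\e[(W M(X))^2] \le a^{-2}\e[M(X)^2] < \infty$; similarly $\vartheta \mapsto W_i l_{\vartheta}(\tilde X_i)$ and $\vartheta \mapsto \bar W_i l_{\vartheta}(\tilde X_i)$ are Lipschitz near $\gamma$ with square-integrable envelope $\tilde M_2$ (the $\tilde M_1$ envelope near $\theta$ is, as the remark after Assumption \ref{assump:SmoothEnoughForAsymptoticLineariaty} notes, needed only for the PPI++ analysis). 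With consistency assumed in the proposition, the M-estimator theorem (applied with the negated loss) then yields, for instance, $\sqrt{\ntot}\big(\htc - \theta\big) = -[\nabla^2 L(\theta)]^{-1}\,\ntot^{-1/2}\sum_{i=1}^{\ntot} W_i \dot l_{\theta}(X_i) + o_p(1)$, which is exactly Assumption \ref{assump:AsymptoticLinearity}(i) with $\Psi$ as in Equation \eqref{eq:PsiForMestimator}; the analogous applications give parts (ii) and (iii) with $\tilde\Psi$. The mean-zero and finite-variance requirements on $(\Psi(X),\tilde\Psi(\tilde X))$ then follow because the first-order conditions give $\e[\dot l_{\theta}(X)] = \nabla L(\theta) = 0$ and $\e[\dot l_{\gamma}(\tilde X)] = \nabla \tilde L(\gamma) = 0$, while differentiability plus the local Lipschitz bound forces $\|\dot l_{\theta}(x)\| \le M(x)$, ensuring finite second moments.

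The main obstacle is not any single calculation but the careful importation of the unweighted M-estimation asymptotics into the randomly-weighted, partially-missing setting: one must confirm that the stochastic-equicontinuity / empirical-process step underlying the theorem survives when the loss carries the random factor $W_i$. Because the weights are bounded (by overlap) and do not depend on $\vartheta$, this reduces to re-tracking the Lipschitz envelope through that argument, as done above, so no genuinely new empirical-process machinery is required; the substantive content is entirely in the population identity $\e[W l_\vartheta(X)] = L(\vartheta)$ and its analogues, which is precisely where Assumption \ref{assump:SamplingLabelling} does its work.
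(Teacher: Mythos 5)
Your proposal is correct and follows essentially the same route as the paper's proof: recast each of $\htc$, $\hgc$, $\hgm$ as an M-estimator for the weighted loss (the paper uses $m_{\vartheta}(Y)=-W l_{\vartheta}(X)$ with $Y=(W,X)$), use Assumption \ref{assump:SamplingLabelling} and the tower property to identify the population weighted criterion with $L$ (resp.\ $\tilde L$), bound the Lipschitz envelope by the overlap condition, and invoke Theorem 5.23 of \cite{VanderVaartTextbook}. Your explicit verification of the mean-zero and finite-variance requirements on $(\Psi(X),\tilde\Psi(\tilde X))$ is a detail the paper's proof leaves implicit, but it does not change the argument.
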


\begin{proof}
    We show that $\htc= \theta + \frac{1}{\ntot} \sum_{i=1}^\ntot W_i \Psi(X_i) +o_p(1/\sqrt{\ntot})$, and for brevity omit the proofs that $\hgc= \gamma + \frac{1}{\ntot} \sum_{i=1}^\ntot W_i \Psi(\tilde{X}_i) +o_p(1/\sqrt{\ntot})$ and $\hgm= \gamma + \frac{1}{\ntot} \sum_{i=1}^\ntot \bar{W}_i \Psi(\tilde{X}_i) +o_p(1/\sqrt{\ntot})$, which will hold by nearly identical arguments. To do this, let $Y=(W,X)$ and $Y_i=(W_i,X_i)$ for each $i$, and let $m_{\vartheta}(Y)=-Wl_{\vartheta}(X)$. Observe $\theta=\argmax_{\vartheta \in \Theta}\e[ m_{\vartheta}(Y)]$ and $\htc = \argmax_{\vartheta \in \Theta} \frac{1}{\ntot} \sum_{i=1}^\ntot m_{\vartheta}(Y_i)$. We now check that the following conditions from Theorem 5.23 of \cite{VanderVaartTextbook} hold: \begin{enumerate}[(i)]
        \item For almost every $Y$, $\vartheta \mapsto m_{\vartheta}(Y)$ is differentiable at $\vartheta=\theta$.
        \item There exists a function $\dot{m}$ of $Y$ with $\e[(\dot{m}(Y))^2]< \infty$ , such that for every $\vartheta_1$ and $\vartheta_2$ in a neighborhood of $\theta$, $\vert m_{\vartheta_1}(Y) -m_{\vartheta_2}(Y) \vert \leq \dot{m}(Y) \vert \vert \vartheta_1- \vartheta_2 \vert \vert$
        \item The map $\vartheta \mapsto \e[m_{\vartheta}(Y)]$ admits a 2nd-order Taylor expansion at its maximum $\theta$ and has nonsingular 2nd derivative matrix given by $V_{\theta}$.
        \item  $ \frac{1}{\ntot} \sum_{i=1}^\ntot m_{\htc}(Y_i) \geq \argmax_{\vartheta \in \Theta} \frac{1}{\ntot} \sum_{i=1}^\ntot m_{\vartheta}(Y_i)-o_p(1/\ntot)$
    \end{enumerate}
    Above (i) is a direct result of Assumption \ref{assump:SmoothEnoughForAsymptoticLineariaty}(i). Letting $M$ be the function given in Assumption \ref{assump:SmoothEnoughForAsymptoticLineariaty}(ii), (ii) holds by letting $\dot{m}(Y)= W M(X)$ because of Assumption \ref{assump:SmoothEnoughForAsymptoticLineariaty}(ii) and because by Assumption \ref{assump:SamplingLabelling} (iv),  $W \leq 1/\pi(\tilde{X}) \leq 1/a < \infty$ almost surely, so $\e[(\dot{m}(Y))^2] \leq \e[ M(X)^2]/a^2 < \infty$. Since by Assumption \ref{assump:SamplingLabelling} and the tower property $\e[m_{\vartheta}(Y)]=-\e[l_{\vartheta}(X)]=-L(\vartheta)$, (iii) above follows directly from the definition of $\theta$ and Assumption \ref{assump:SmoothEnoughForAsymptoticLineariaty}(iii), with $V_{\theta}=-\nabla^2 L(\theta)$. Finally (iv) holds trivially because $\htc = \argmax_{\vartheta \in \Theta} \frac{1}{\ntot} \sum_{i=1}^\ntot m_{\vartheta}(Y_i)$. Thus since all conditions are met and since $\htc \xrightarrow{p} \theta$, by Theorem 5.23 in \cite{VanderVaartTextbook}, $$\sqrt{\ntot} (\htc -\theta) = [\nabla^2 L(\theta)]^{-1} \frac{1}{\sqrt{\ntot}} \sum_{i=1}^\ntot -W_i \dot{l}_{\theta}(X_i)+o_p(1).$$ Rearranging terms $\htc= \theta + \frac{1}{\ntot} \sum_{i=1}^\ntot W_i \Psi(X_i) +o_p(1/\sqrt{\ntot})$. Similar arguments give the desired asymptotic linear expansions for $\hgc$ and $\hgm$.
\end{proof}

The above proposition guarantees that for M-estimators, Assumption \ref{assump:AsymptoticLinearity} holds under standard regularity conditions, provided that $\htc \xrightarrow{p} \theta$, $\hgc \xrightarrow{p} \gamma$, and $\hgm \xrightarrow{p} \gamma$. Consistency of $\htc, \hgc$, and $\hgm$ often holds. For example, in combination with Assumptions \ref{assump:SamplingLabelling} and \ref{assump:SmoothEnoughForAsymptoticLineariaty}, the following gives a sufficient (but not necessary) condition under which $\htc \xrightarrow{p} \theta$, $\hgc \xrightarrow{p} \gamma$, and $\hgm \xrightarrow{p} \gamma$ holds.

\begin{assumption}[Sufficient conditions for consistency]\label{assump:SufficientConditionsForConsistency} $\theta$ is the unique minimizer of $L(\vartheta)$ and $\gamma$ is the unique minimizer of $\tilde{L}(\vartheta)$. Further, one of the two following conditions holds \begin{enumerate}[(i)]
    \item $\vartheta \mapsto l_{\vartheta}(X)$ and $\vartheta \mapsto l_{\vartheta}(\tilde{X})$ are convex for almost every $(X,\tilde{X})$, or
    \item the parameter space $\Theta \subset \mathbb{R}^d$ is a compact set and both $L(\vartheta)$ and $\tilde{L}(\vartheta)$ are continuous functions.
\end{enumerate}

\end{assumption}

We note that consistency holds under Assumptions \ref{assump:SamplingLabelling}, \ref{assump:SmoothEnoughForAsymptoticLineariaty}, and \ref{assump:SufficientConditionsForConsistency}(i) based on a direct application of Proposition 1 in \cite{PPI++} for the special case where $\lambda=0$. Meanwhile under Assumptions \ref{assump:SamplingLabelling} and \ref{assump:SufficientConditionsForConsistency}(ii), consistency holds based on directly applying the results of Problem 5.27 in \cite{VanderVaartTextbook} and Theorem 5.7 in \cite{VanderVaartTextbook}.


\subsection{Sufficient conditions for Assumption \ref{assump:ZetaBootstrapConsistency}}\label{sec:WhenDoWeHaveZetaBootstrapConsistency}

\subsubsection{Z-estimators}\label{sec:ZestimatorsBootstrapConsistencyConditions}

Consider the Z-estimation setting where $\psi_{\vartheta}: \mathbb{R}^p \to \mathbb{R}^d$ is a class of functions parameterized by $\vartheta$ in the parameter space $\Theta \subseteq \mathbb{R}^d$ and that our estimand is the unique $\vartheta \in \Theta$ solving $\e[\psi_{\vartheta}(X)]=0$, while $\gamma$ is the unique $\vartheta \in \Theta$ solving $\e[\psi_{\vartheta}(\tilde{X})]=0$. Further suppose that  $\htc$, $\hgc$ and $\hgm$ are (weighted) Z-estimators estimators solving the estimating equations $$\frac{1}{\ntot}  \sum_{i=1}^\ntot W_i \psi_{\vartheta}(X_i)=0, \frac{1}{\ntot} \sum_{i=1}^\ntot W_i \psi_{\vartheta}(\tilde{X}_i)=0, \ \text{and } \frac{1}{\ntot}  \sum_{i=1}^\ntot \bar{W}_i \psi_{\vartheta}(\tilde{X}_i)=0,$$ respectively. In such settings $\hat{\zeta}=(\htc,\hgc,\hgm)$ is also a Z-estimator giving a solution to $\frac{1}{\ntot} \sum_{i=1}^\ntot \psi_{(\vartheta_1,\vartheta_2,\vartheta_3)}^{\text{(stack)}} (V_i)=0$, where $\psi_{(\vartheta_1,\vartheta_2,\vartheta_3)}^{\text{(stack)}} (V_i) = \big(W_i \psi_{\vartheta_1}(X_i), W_i \psi_{\vartheta_2}(\tilde{X}_i), \bar{W}_i \psi_{\vartheta_3}(\tilde{X}_i)  \big)$. Let $z=(\vartheta_1,\vartheta_2,\vartheta_3)$ be shorthand notation and define $\bar{\psi}(z)=\e[\psi_{z}^{\text{(stack)}} (V_i)]$ and observe that under Assumption \ref{assump:SamplingLabelling}(ii)-(iv), $\bar{\psi}(\zeta)=\e[(\psi_{\theta}(X),\psi_{\gamma}(\tilde{X}),\psi_{\gamma}(\tilde{X}))]=0$. 

Since $\hat{\zeta}$ is a Z-estimator, Assumption \ref{assump:ZetaBootstrapConsistency} can be shown using existing theory on bootstrap consistency of Z-estimators. Below we restate a simplified version of the sufficient conditions from Theorem 10.16 of \cite{KosorokEmpiricalProcessTextbook} with notation convenient for our setting. The simplifications occur because we only present a setting where we presume $\hat{\zeta}$ and $\hat{\zeta}^*$ are exact (as opposed to approximate) solutions to the estimating equations $\frac{1}{\ntot}\sum_{i=1}^\ntot \psi_{z}^{\text{(stack)}} (V_i)=0$ and $\frac{1}{\ntot}\sum_{i=1}^\ntot \psi_{z}^{\text{(stack)}} (V_i^*)=0$.

\begin{assumption}[Z-estimation sufficient conditions for bootstrap consistency] \label{assump:ZestimatorSufficientConds} Suppose that

\begin{enumerate}[(i)]
       \item For any sequence $\{z_n\} \in \Theta^3$, $\lim\limits_{n \to \infty} \bar{\psi}(z_n) = 0$ implies $\lim\limits_{n \to \infty} z_n   = \zeta$.
       \item  The function class $\mathcal{F}=\{ \psi_{z}^{\text{(stack)}} \ : \ z \in \Theta^3 \}$ is strong $\mathbb{P}$-Glivekno-Cantelli.
       \item  For some $\eta>0$, the function class $\mathcal{F}_{\eta}=\{ \psi_{z}^{\text{(stack)}} \ : \ z \in \Theta^3, \vert \vert z - \zeta\vert \vert \leq \eta  \}$ is $\mathbb{P}$-Donsker.
       \item $\lim\limits_{z \to \zeta} \e \Big[ \big| \big| \psi_{z}^{\text{(stack)}}(V_i) -\psi_{\zeta}^{\text{(stack)}} (V_i) \big| \big|_2 ^2 \Big]=0$ and  $\e \Big[ \big| \big| \psi_{\zeta}^{\text{(stack)}}(V_i)  \big| \big|_2 ^2 \Big] <\infty$.
       \item The function from $\mathbb{R}^{3d} \to \mathbb{R}^{3d}$ given by $z \mapsto \bar{\psi}(z)$ is differentiable at $\zeta$ with nonsingular derivative matrix $D_{\bar{\psi}}$.
   \end{enumerate}
\end{assumption} 
Above, (i) is an indefinability assumption for the unique zero $\zeta$ of the function $z \mapsto \bar{\psi}(z)$. The second assumption is a standard assumption that would typically be used to establish that $\hat{\zeta} \xrightarrow{p} \zeta$. The third assumption is one that would typically be used to establish asymptotic normality of $\hat{\zeta}$, and that holds under fairly mild regularity conditions such as when $z \mapsto \psi_{z}^{\text{(stack)}}(V_i)$ is locally Lipschitz in a neighborhood of $\zeta$ (e.g., see example 19.7 in \cite{VanderVaartTextbook}). The fourth assumption is a smoothness regularity condition and a bounded second moment regularity condition. For the fifth assumption
note that the derivative of $z \mapsto \bar{\psi}(z)$ will always be block diagonal with blocks of size $d \times d$. Therefore, assumptions of nonsingularity of $D_{\bar{\psi}}$ reduce to standard assumptions used when studying the asymptotic behavior of $\htc$, $\hgc$, and $\hgm$ separately.

The following proposition is simply a direct application of Theorem 10.16 in \cite{KosorokEmpiricalProcessTextbook}. We omit the proof of this proposition for brevity, because it merely involves translating notation between the texts and simplifying their assumptions to the setting where $\hat{\zeta}$ and $\hat{\zeta}^*$ are exact zeros rather than approximate zeros to the estimating equation. The proof also involves noting that the conclusion of their theorem is convergence of $\sqrt{\ntot}(\hat{\zeta}-\zeta )$ and $\sqrt{\ntot}(\hat{\zeta}^*-\hat{\zeta})$ to multivariate Gaussians, which is stronger than Assumption \ref{assump:ZetaBootstrapConsistency}. 
\begin{proposition}
    
[\cite{KosorokEmpiricalProcessTextbook}] Under Assumptions \ref{assump:SamplingLabelling} and \ref{assump:ZestimatorSufficientConds},
   Assumption \ref{assump:ZetaBootstrapConsistency} will hold.
\end{proposition}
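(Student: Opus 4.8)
The plan is to recognize $\hat{\zeta}=(\htc,\hgc,\hgm)$ and its bootstrap analogue $\hat{\zeta}^*$ as exact zeros of the stacked estimating equations $\frac{1}{\ntot}\sum_{i=1}^\ntot \psi_z^{\text{(stack)}}(V_i)=0$ and $\frac{1}{\ntot}\sum_{i=1}^\ntot \psi_z^{\text{(stack)}}(V_i^*)=0$, respectively, and then to invoke the master bootstrap theorem for Z-estimators (Theorem 10.16 of \cite{KosorokEmpiricalProcessTextbook}). First I would record that $\bar{\psi}(\zeta)=\e[\psi_{\zeta}^{\text{(stack)}}(V)]=0$: this is exactly the mean-zero computation carried out in the proof of Lemma \ref{lemma:JointCLTzeta}, which uses the missing-at-random and known-sampling-probability parts of Assumption \ref{assump:SamplingLabelling} together with the tower property. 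With $\zeta$ identified as a zero of $\bar{\psi}$, the five parts of Assumption \ref{assump:ZestimatorSufficientConds} are precisely the hypotheses of that theorem: identifiability (i), the Glivenko--Cantelli condition (ii) for consistency, the local Donsker condition (iii) for the functional central limit theorem, the $L^2$-continuity and second-moment bounds (iv), and nonsingularity of the derivative $D_{\bar{\psi}}$ at $\zeta$ (v). Verifying these is the bulk of the ``translation of notation'' the surrounding text alludes to, and it is essentially mechanical once the stacked estimator is set up.

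The conclusion of Theorem 10.16, translated to our notation, is the joint statement that $\sqrt{\ntot}(\hat{\zeta}-\zeta)\xrightarrow{d}\mathcal{N}(0,\Sigma_{\zeta})$ unconditionally, while, conditionally on the data $\hat{\mathbb{P}}_{\ntot}$, the bootstrap law of $\sqrt{\ntot}(\hat{\zeta}^*-\hat{\zeta})$ converges (in probability, in the bounded-Lipschitz metric on laws) to the same limit $\mathcal{N}(0,\Sigma_{\zeta})$, where $\Sigma_{\zeta}$ is the matrix of Equation \eqref{eq:zetaSigmaZetaDef}. This is strictly stronger than what Assumption \ref{assump:ZetaBootstrapConsistency} asks for. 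I would then fix $v\in\mathbb{R}^{3d}$ and apply the continuous linear functional $u\mapsto v^\tran u$, so that both $\sqrt{\ntot}\,v^\tran(\hat{\zeta}-\zeta)$ and the conditional bootstrap law of $\sqrt{\ntot}\,v^\tran(\hat{\zeta}^*-\hat{\zeta})$ share the common Gaussian limit $\mathcal{N}(0,v^\tran\Sigma_{\zeta} v)$. Part (ii) of Assumption \ref{assump:ZetaBootstrapConsistency} then follows immediately, since a nondegenerate univariate Gaussian is symmetric about $0$ with a continuous, strictly increasing CDF.

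For part (i) the remaining work is to upgrade the two weak-convergence statements to the sup-norm (Kolmogorov) statement in Assumption \ref{assump:ZetaBootstrapConsistency}(i). Writing $H_{\ntot}$ for the CDF of $\sqrt{\ntot}\,v^\tran(\hat{\zeta}-\zeta)$, $\hat{H}_{\text{Boot}}$ for the conditional bootstrap CDF of $\sqrt{\ntot}\,v^\tran(\hat{\zeta}^*-\hat{\zeta})$, and $\Phi_v$ for the continuous CDF of $\mathcal{N}(0,v^\tran\Sigma_{\zeta} v)$, I would use Polya's theorem to turn the convergence $H_{\ntot}\to\Phi_v$ into uniform convergence $\rho_{\infty}(H_{\ntot},\Phi_v)\to 0$. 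The bounded-Lipschitz convergence of the conditional law entails $\hat{H}_{\text{Boot}}(x)\xrightarrow{p}\Phi_v(x)$ at every $x$ (all points being continuity points of the continuous $\Phi_v$), and the standard monotone-CDF argument then upgrades this to $\rho_{\infty}(\hat{H}_{\text{Boot}},\Phi_v)\xrightarrow{p}0$. The triangle inequality $\rho_{\infty}(\hat{H}_{\text{Boot}},H_{\ntot})\le\rho_{\infty}(\hat{H}_{\text{Boot}},\Phi_v)+\rho_{\infty}(\Phi_v,H_{\ntot})\xrightarrow{p}0$ yields exactly Assumption \ref{assump:ZetaBootstrapConsistency}(i).

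The main obstacle is bookkeeping around modes of convergence rather than any deep step: Theorem 10.16 delivers bootstrap consistency in the bounded-Lipschitz metric on conditional laws (an outer-probability statement), whereas Assumption \ref{assump:ZetaBootstrapConsistency}(i) is phrased in the Kolmogorov metric, and bridging the two is precisely where continuity of the Gaussian limit CDF is essential. A secondary subtlety worth checking is the degenerate direction $v^\tran\Sigma_{\zeta} v=0$, for which the limit CDF is a step function and part (ii) would fail; ruling it out requires $\Sigma_{\zeta}\succ 0$, equivalently that the stacked score $\psi_{\zeta}^{\text{(stack)}}(V)$ has nondegenerate covariance. This is a mild condition implicit in the setting (and automatic for the nondegenerate estimands of interest), and I would state it explicitly so that $v^\tran\Sigma_{\zeta} v>0$ for all nonzero $v$.
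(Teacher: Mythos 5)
Your proposal is correct and takes exactly the route the paper intends: the paper omits the proof, stating only that the proposition is a direct application of Theorem 10.16 of Kosorok after translating notation, specializing to exact (rather than approximate) zeros of the stacked estimating equations, and noting that the theorem's conclusion---multivariate Gaussian limits for both $\sqrt{\ntot}(\hat{\zeta}-\zeta)$ and the conditional bootstrap law of $\sqrt{\ntot}(\hat{\zeta}^*-\hat{\zeta})$---is stronger than Assumption \ref{assump:ZetaBootstrapConsistency}. Your fleshed-out bridge from that stronger conclusion to the assumption (projecting along $v$, Polya's theorem, and the pointwise-to-uniform CDF upgrade, exactly the argument the paper itself uses at the end of the proof of Theorem \ref{theorem:HadamardDiffImpliesBootstrapConsitency}) is sound, and your observation that part (ii) additionally requires the nondegeneracy $v^\tran \Sigma_{\zeta} v > 0$ is a legitimate refinement of a condition the paper leaves implicit.
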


We note that Assumptions  \ref{assump:SamplingLabelling} and \ref{assump:ZestimatorSufficientConds} are not the most general assumptions under which Assumption \ref{assump:ZetaBootstrapConsistency} will hold for Z-estimators. To establish more general conditions under which Assumption \ref{assump:ZetaBootstrapConsistency} holds, including non-IID settings, we refer the reader to Chapter 13 of \cite{KosorokEmpiricalProcessTextbook}.

\subsubsection{When \texorpdfstring{$\htc$, $\hgc$, and $\hgm$ are}{} Hadamard differentiable functions}\label{sec:UsingHadamardDifferentiablityToShowConsistency}

Another way to show that Assumption \ref{assump:ZetaBootstrapConsistency} holds is by showing that each component estimator of $\hat{\zeta}$, namely $\htc$, $\hgc$, and $\hgm$, is a Hadamard differentiable function of an empirical distribution. In this section we make that condition more precise and prove that under such a condition (and other regularity conditions) Assumption \ref{assump:ZetaBootstrapConsistency} holds. For simplicity, we focus on cases where $\htc$, $\hgc$, and $\hgm$ are functions of weighted empirical cumulative distribution functions, although the approach can be generalized to settings where $\htc$, $\hgc$, and $\hgm$ are functions of distributions defined by empirical averages of each function in a function class. 

Define $\bar{\mathbb{R}}=\mathbb{R} \cup \{\infty\} \cup \{-\infty \}$ and let $\mathbb{D}$ be the space of bounded functions from $\bar{\mathbb{R}}^p \to \mathbb{R}$ equipped with the sup norm $\vert \vert \cdot \vert \vert_{\mathbb{D}}$ such that for $g \in \mathbb{D},$ $\vert \vert g \vert \vert_{\mathbb{D}} =\sup_{z \in \bar{\mathbb{R}}^{p}} \vert g(z) \vert$. Let  $\hat{F}_1,\hat{F}_2,\hat{F}_3 \in \mathbb{D}$ be the empirical weighted CDFs given by $$\hat{F}_1(a)=\frac{1}{\ntot} \sum_{i=1}^{\ntot} W_i I \{X_i \leq a \}, \quad \hat{F}_2(a)=\frac{1}{\ntot} \sum_{i=1}^{\ntot} W_i I \{\tilde{X}_i \leq a \}, \quad \hat{F}_3(a)=\frac{1}{\ntot} \sum_{i=1}^{\ntot} \bar{W}_i I \{\tilde{X}_i \leq a \}.$$ Note that throughout this section, including in the above formulas, vector inequalities are said to hold if and only if they hold pointwise. We similarly let $\hat{F}_1^*,\hat{F}_2^*,\hat{F}_3^* \in \mathbb{D}$ denote the weighted CDFs of the bootstrap draws which are given by $$\hat{F}^*_1(a)=\frac{1}{\ntot} \sum_{i=1}^{\ntot} W_i^* I \{X_i^* \leq a \}, \ \ \hat{F}^*_2(a)=\frac{1}{\ntot} \sum_{i=1}^{\ntot} W_i^* I \{\tilde{X}_i^* \leq a \}, \ \ \hat{F}_3^*(a)=\frac{1}{\ntot} \sum_{i=1}^{\ntot} \bar{W}_i^* I \{\tilde{X}_i^* \leq a \}.$$ It is also convenient to define $F_X,F_{\tilde{X}} \in \mathbb{D}$ to be the cumulative distribution functions of $X$ and $\tilde{X}$ given by $F_X(a)=\mathbb{P}(X \leq x)$ and $F_{\tilde{X}}(a)=\mathbb{P}(\tilde{X} \leq a)$.

The next assumption states that $\htc$, $\hgc$, $\hgm$, and their bootstrap counterparts can be written as a sufficiently smooth function of the above weighted empirical distributions. 
\begin{assumption}(Hadamard differentiable component estimators)\label{assump:HadamardDifferentiablityAssumption}
    There exists a space $\mathbb{D}_{\phi} \subseteq \mathbb{D}$ and function $\phi : \mathbb{D}_{\phi} \to \mathbb{R}^d$, 
     such that $\phi(F_X)=\theta$, $\phi(F_{\tilde{X}})=\gamma$, such that almost surely, $$\begin{bmatrix} \htc \\ \hgc \\ \hgm  \end{bmatrix} = \begin{bmatrix} \phi(\hat{F}_1) \\ \phi(\hat{F}_2) \\ \phi(\hat{F}_3)  \end{bmatrix} \quad \text{and} \quad \begin{bmatrix} \htcarg{,*} \\ \hgcarg{,*} \\ \hgmarg{,*} \end{bmatrix} = \begin{bmatrix} \phi(\hat{F}_1^*) \\ \phi(\hat{F}_2^*) \\ \phi(\hat{F}_3^*)  \end{bmatrix}.$$ Moreover $\phi$ is Hadamard differentiable at $F_X$ and $F_{\tilde{X}}$ tangentially to the subspace $\mathbb{C}_0 \subseteq \mathbb{D}$, where $\mathbb{C}_0$ denotes the set of continuous functions from $\bar{\mathbb{R}}^p \to \mathbb{R}$. We denote the Hadamard derivates of $\phi$ at $F_X$ and $F_{\tilde{X}}$ with $\phi_{F_X}'(\cdot)$ and $\phi_{F_{\tilde{X}}}'(\cdot)$.
\end{assumption} To make the above assumption more concrete, note that many estimators of interest are Hadamard differentiable functions of weighted empirical CDFs. For example, when $X$ and $\tilde{X}$ are univariate (i.e., $p=1$) and both have continuous CDFs, and when $\phi(\cdot)$ gives the $q$th empirical quantile (in the sense that for any nondecreasing $F \in \mathbb{D}$ we have $\lim \limits_{x \uparrow \phi(F)} F(x) \leq q \leq F(\phi(F))$), then $\phi$ will be Hadamard differentiabile at $F_X$ and $F_{\tilde{X}}$ with respect to the subspace $\mathbb{C}_0$. This claim about Hadamard differentiability of quantiles is a direct result of Lemma 21.3 from \cite{VanderVaartTextbook}. We refer readers to other chapters of \cite{VanderVaartTextbook} as well as \cite{VDVAndWellnerTextbook} and \cite{KosorokEmpiricalProcessTextbook} for Hadamard differentiability statements for other estimators such as trimmed means, among other estimators.

We now introduce an assumption that $X$ and $\tilde{X}$ have continuous CDFs (e.g., this will hold when $\mathbb{P}_X$ and $\mathbb{P}_{\tilde{X}}$ have no point masses and finite probability density everywhere). This assumption can be loosened when necessary (e.g., to settings where $F_X$ and $F_{\tilde{X}}$ are continuous at relevant points or neighborhoods), but the assumption stated below allows for easier exposition and a cleaner characterization of the space in which the limiting empirical processes of interest belong.

\begin{assumption}[$X$ and $\tilde{X}$ have continuous CDFs] \label{assump:XandTildeXDensity} $X$ and $\tilde{X}$ have cummulative distribution functions $F_X(\cdot)$ and $F_{\tilde{X}}(\cdot)$, which are continuous everywhere.
\end{assumption}

\begin{theorem}\label{theorem:HadamardDiffImpliesBootstrapConsitency}
    Under Assumptions \ref{assump:SamplingLabelling}, \ref{assump:HadamardDifferentiablityAssumption}, and \ref{assump:XandTildeXDensity}, Assumption \ref{assump:ZetaBootstrapConsistency} holds.
\end{theorem}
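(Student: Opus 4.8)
The plan is to recognize $\hat{\zeta}$ and $\hat{\zeta}^*$ as Hadamard-differentiable transformations of weighted empirical distribution functions and then run the functional delta method together with its bootstrap analogue. Concretely, stack the three weighted empirical CDFs into $\hat{\mathbf{F}} = (\hat{F}_1, \hat{F}_2, \hat{F}_3)$ with population target $\mathbf{F} = (F_X, F_{\tilde{X}}, F_{\tilde{X}})$. Assumption \ref{assump:HadamardDifferentiablityAssumption} writes $\hat{\zeta} = \Phi(\hat{\mathbf{F}})$ and $\hat{\zeta}^* = \Phi(\hat{\mathbf{F}}^*)$ for the product map $\Phi(G_1, G_2, G_3) = (\phi(G_1), \phi(G_2), \phi(G_3))$, which is Hadamard differentiable at $\mathbf{F}$ tangentially to $\mathbb{C}_0^3$ with block-diagonal derivative $\Phi'_{\mathbf{F}} = \mathrm{diag}(\phi'_{F_X}, \phi'_{F_{\tilde{X}}}, \phi'_{F_{\tilde{X}}})$. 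So I would first establish a joint functional CLT and a matching bootstrap functional CLT at the level of $\hat{\mathbf{F}}$, then transport both through $\Phi$, and finally read off the two parts of Assumption \ref{assump:ZetaBootstrapConsistency}.

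First I would prove the joint functional CLT $\sqrt{\ntot}(\hat{\mathbf{F}} - \mathbf{F}) \rightsquigarrow \mathbb{G}$ in $(\mathbb{D}^3, \|\cdot\|_{\mathbb{D}})$ for a tight mean-zero Gaussian process $\mathbb{G}$. The centering is correct because, writing $W = I/\pi(\tilde{X})$ and $\bar{W} = (1-I)/(1-\pi(\tilde{X}))$, the missing-at-random and known-sampling-probability parts of Assumption \ref{assump:SamplingLabelling} together with the tower property give $\e[W \mid \tilde{X}] = \e[\bar{W} \mid \tilde{X}] = 1$ and $W \indep X \mid \tilde{X}$, so $\e[W I\{X \le a\}] = F_X(a)$ and $\e[W I\{\tilde{X} \le a\}] = \e[\bar{W} I\{\tilde{X} \le a\}] = F_{\tilde{X}}(a)$; each coordinate of $\hat{\mathbf{F}}$ is unbiased for its target. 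The Donsker property follows because the classes $\{x \mapsto I\{x \le a\}\}$ are VC hence Donsker, the overlap part of Assumption \ref{assump:SamplingLabelling} makes the weights bounded ($0 \le W \le 1/a$ and $0 \le \bar{W} \le 1/(1-b)$), multiplication by a fixed bounded envelope preserves the Donsker property, and the IID part of Assumption \ref{assump:SamplingLabelling} lets me invoke the Donsker theorem jointly across the three stacked processes. Assumption \ref{assump:XandTildeXDensity} guarantees $\mathbf{F}$ is continuous, so the limit $\mathbb{G}$ concentrates on $\mathbb{C}_0^3$, matching the tangency set in Assumption \ref{assump:HadamardDifferentiablityAssumption}.

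Next I would invoke consistency of the empirical bootstrap for Donsker classes (e.g., \cite{VDVAndWellnerTextbook}; see also Chapter~10 of \cite{KosorokEmpiricalProcessTextbook}), which yields the conditional statement that $\sqrt{\ntot}(\hat{\mathbf{F}}^* - \hat{\mathbf{F}})$ converges to the same $\mathbb{G}$ given the data, in probability with respect to the bounded-Lipschitz metric. Applying the functional delta method (e.g., Theorem~20.8 of \cite{VanderVaartTextbook}) gives $\sqrt{\ntot}(\hat{\zeta} - \zeta) = \sqrt{\ntot}(\Phi(\hat{\mathbf{F}}) - \Phi(\mathbf{F})) \xrightarrow{d} \Phi'_{\mathbf{F}}(\mathbb{G})$, a mean-zero Gaussian vector in $\mathbb{R}^{3d}$ with some covariance $\Sigma_\zeta$ (this also recovers Lemma \ref{lemma:JointCLTzeta}); and the delta method for the bootstrap (e.g., Theorem~23.9 of \cite{VanderVaartTextbook}, or Theorem~12.1 of \cite{KosorokEmpiricalProcessTextbook}) transfers the bootstrap functional CLT through $\Phi$, so that conditionally on the data $\sqrt{\ntot}(\hat{\zeta}^* - \hat{\zeta})$ converges in probability, in the bounded-Lipschitz metric, to the same limit $\Phi'_{\mathbf{F}}(\mathbb{G})$.

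Finally I would assemble Assumption \ref{assump:ZetaBootstrapConsistency}. For a fixed $v \in \mathbb{R}^{3d}$, continuity of $x \mapsto v^\tran x$ gives $\sqrt{\ntot}\, v^\tran(\hat{\zeta} - \zeta) \xrightarrow{d} \mathcal{N}(0, v^\tran \Sigma_\zeta v)$ and the same conditional limit for the bootstrap pivot; part (ii) then follows since a centered Gaussian is symmetric and, whenever $v^\tran \Sigma_\zeta v > 0$, has a continuous, strictly increasing CDF. Part (i) is where the main work lies: the bootstrap conclusion is naturally phrased in the bounded-Lipschitz metric (and with outer expectations, owing to measurability issues of the bootstrap weights), whereas Assumption \ref{assump:ZetaBootstrapConsistency}(i) is stated in the Kolmogorov sup-norm. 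I would bridge this with a Polya-type argument: because the limit CDF $G$ of $v^\tran \Phi'_{\mathbf{F}}(\mathbb{G})$ is continuous, conditional weak convergence in probability upgrades to $\sup_{x} |\mathbb{P}_*(\sqrt{\ntot}\, v^\tran(\hat{\zeta}^* - \hat{\zeta}) \le x \mid \hat{\mathbb{P}}_{\ntot}) - G(x)| \xrightarrow{p} 0$, while Polya's theorem gives $\sup_{x} |\mathbb{P}(\sqrt{\ntot}\, v^\tran(\hat{\zeta} - \zeta) \le x) - G(x)| \to 0$, and a triangle inequality yields (i). The main obstacle I anticipate is precisely this last bridge---carefully converting the conditional (outer-probability, bounded-Lipschitz) bootstrap convergence into the uniform Kolmogorov statement---together with confirming that the relevant limiting variances $v^\tran \Sigma_\zeta v$ are nonzero, so that the continuity and strict-monotonicity requirement in (ii) is met.
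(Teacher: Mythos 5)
Your proposal is correct and follows essentially the same route as the paper's proof: establish that the weighted indicator class is Donsker, invoke the functional CLT and its bootstrap analogue, verify that the limiting Gaussian process lies in $\mathbb{C}_0^3$ via Assumption \ref{assump:XandTildeXDensity}, push everything through the Hadamard derivative of the (stacked) map $\phi$ composed with the linear projection $v^\tran$, and close the gap between conditional weak convergence and the Kolmogorov sup-norm statement with Polya's theorem. The only differences are cosmetic---the paper proves the Donsker property by an explicit bracketing-number bound rather than your VC-plus-bounded-multiplier preservation argument, and it composes with $v^\tran$ before applying the delta method rather than after---and your flag about possible degeneracy of $v^\tran \Sigma_\zeta v$ is a point the paper's own proof also leaves implicit.
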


\begin{proof}

Throughout the proof, for any set $S$, we define $l^{\infty}(S)$ to be the space of functions $g: S \to \mathbb{R}$ that have finite sup norm. Note for example that $\mathbb{D}=l^{\infty}(\bar{\mathbb{R}}^p)$. The proof is lengthy, so we break it into sequences of steps as follows: 
\begin{enumerate}
    \setlength{\itemsep}{1pt}
    \setlength{\parskip}{1pt}
    \item defining a function class $\mathcal{F}$ and showing it is $\mathbb{P}$-Donsker,
    \item applying existing functional CLTs from \cite{VanderVaartTextbook} for empirical processes and empirical bootstrap processes,
    \item showing that the limiting distribution of such processes almost surely corresponds to a triplet of elements in $\mathbb{C}_0$ (using Assumption \ref{assump:XandTildeXDensity}),
    \item establishing Hadamard differentiability of relevant functionals (using Assumption \ref{assump:HadamardDifferentiablityAssumption}),
    \item and applying the functional delta method to establish Assumption \ref{assump:ZetaBootstrapConsistency}.
\end{enumerate}

 \paragraph{Defining function class $\mathcal{F}$ and showing it is $\mathbb{P}$-Donsker:} As in the main text, let $V_i=(W_i,\bar{W}_i,X_i,\tilde{X}_i)$ and $V_i^*=(W_i^*,\bar{W}_i^*,X_i^*,\tilde{X}_i^*)$ for each $i \in \{1,\dots, \ntot \}$. Now for each $a \in \bar{\mathbb{R}}^p$ define $f_{a,1}$, $f_{a,2}$,$f_{a,3} : \mathbb{R}^{2p+2} \to \mathbb{R}$ to be the functions given by $$f_{a,1}(V)=W I \{ X \leq a \}, \quad  f_{a,2}(V)=W I \{ \tilde{X} \leq a \}, \quad \text{and} \quad f_{a,3}(V)=\bar{W} I \{ \tilde{X} \leq a \}$$ where $V=(W,\bar{W},X,\tilde{X})$. Now for each $j \in \{1,2,3\}$ define $\mathcal{F}_j \equiv \{ f_{a,j} \ : \ a \in \bar{\mathbb{R}}^p \}$ and $\mathcal{F} \equiv \mathcal{F}_1 \cup \mathcal{F}_2 \cup \mathcal{F}_3$. 
 
 Now define $c=\max\{\vert \vert W \vert \vert_{\infty}, \vert \vert \bar{W} \vert \vert_{\infty} \}$. Note that by Assumption \ref{assump:SamplingLabelling}, $c< \infty$. Hence, $\mathcal{F}$ will have a finite envelope function that takes on the constant $c$. 
    
    Now fix $\epsilon>0$. Consider the case where $j=1$, and we will show that $\mathcal{F}_j$ can be covered by $(2c^2/\epsilon^2)^p$ $\epsilon$-brackets. To do this, note that for some $m< 2c^2/\epsilon^2$,  we can pick for each $k \in \{1,\dots,p\}$, $t_0^{(k)},t_1^{(k)},\dots,t_m^{(k)}$ such that $-\infty=t_0^{(k)} < t_1^{(k)}<\dots < t_m^{(k)}=\infty$ and such that for each $r \in \{1,\dots,m\}$, $\mathbb{P}( t_{r-1}^{(k)} < X^{(k)} \leq t_r^{(k)}, ) < \epsilon^2/c^2$. We can thus consider $m^d$ brackets indexed by $\bm{r}=(r_1,\dots,r_p) \in \{1,\dots,m\}^p$ given by $[f_{a_l(\bm{r}),j},f_{a_u(\bm{r}),j}]$, where $a_u(\bm{r})=(t_{r_1}^{(1)},\dots,t_{r_p}^{(p)})$ and $a_l(\bm{r})=(t_{r_1-1}^{(1)},\dots,t_{r_p-1}^{(p)})$. It is easy to check that such brackets cover $\mathcal{F}_j$. Moreover, for any $\bm{r} \in \{1,\dots,m\}^p$, $$\e [(f_{a_u(\bm{r}),j}-f_{a_l(\bm{r}),j})^2] \leq c^2 \mathbb{P} ( a_l(\bm{r}) <X  \leq a_u(\bm{r}) ) \leq c^2 \mathbb{P} ( t_{r_1-1}^{(1)} <X^{(1)} \leq t_{r_1}^{(1)} ) < \epsilon^2.$$ Thus for $j=1$, $\mathcal{F}_j$ can be covered by at most $(2c^2/\epsilon^2)^p$ $\epsilon$-brackets. For $j=2$ and $j=3$, an identical argument that replaces $X$ with $\tilde{X}$ shows that $\mathcal{F}_j$ can be covered by at most $(2c^2/\epsilon^2)^p$ $\epsilon$-brackets. Hence $\mathcal{F}=\mathcal{F}_1 \cup \mathcal{F}_2 \cup \mathcal{F}_3$ can be covered by at most $3 (2c^2/\epsilon^2)^p$ $\epsilon$-brackets.

    For any $\epsilon >0$ the $\epsilon$-bracketing number of $\mathcal{F}$, thus satisfies $N_{[]}(\epsilon, \mathcal{F},L_2(\mathbb{P}))< 3 (2c^2)^p \epsilon^{-2p}$. The bracketing integral therefore satisfies $$J_{[]}(1,\mathcal{F},L_2(\mathbb{P})) = \int_0^1 \sqrt{\log\big(N_{[]}(\epsilon, \mathcal{F},L_2(\mathbb{P})) \big)} \rd \epsilon < \int_0^1 \sqrt{\log(3(2c^2)^p) + 2p \log(1/\epsilon)} \rd \epsilon < \infty.$$ Hence by Theorem 19.5 in \cite{VanderVaartTextbook}, $\mathcal{F}$ is $\mathbb{P}$-Donsker.

 \paragraph{Applying function central limit theorems:}   Now for each $f \in \mathcal{F}$, define $$\mathbb{G}_{\ntot} f = \sqrt{\ntot}\big( \frac{1}{\ntot} \sum_{i=1}^\ntot f(V_i)-\e[f(V_i)]\big) \quad \text{and} \quad \mathbb{G}_{\ntot}^* f = \sqrt{\ntot}\big( \frac{1}{\ntot} \sum_{i=1}^\ntot f(V_i^*)-\frac{1}{\ntot}  \sum_{i=1}^\ntot f(V_i)\big),$$ which gives the empirical process and the bootstrap empirical processes indexed by $f \in \mathcal{F}$. Since $\mathcal{F}$ is $\mathbb{P}$-Donsker, the empirical processes $\{\mathbb{G}_{\ntot} f \ : \ f \in \mathcal{F} \}$ converges in distribution to a tight, mean $0$ Gaussian Process, call it $T$, taking values in $l^{\infty}(\mathcal{F})$. We choose to contruct $T$ to have continuous sample paths almost surely (in a sense defined in the next part of the proof). Moreover since $\mathcal{F}$ is $\mathbb{P}$-Donsker and has finite envelope function, if we view $\mathbb{G}_{\ntot}^*$ as a random element of $l^{\infty}(\mathcal{F})$, we can apply Theorem 23.7 in \cite{VanderVaartTextbook} to get that $\mathbb{G}_{\ntot}^*$ converges to $T$ in the following sense:
    \begin{equation}\label{eq:EmpiricalProcessBootstrapConvergence} \sup\limits_{h \in \text{BL}_1(l^{\infty}(\mathcal{F}))} \Big| \mathbb{E} \big[h \big( \mathbb{G}_{\ntot}^* \big) \giv \hat{\mathbb{P}}_{\ntot} \big] -\mathbb{E}[h(T)] \Big| \xrightarrow{p} 0. \end{equation}
Above $\text{BL}_1(l^{\infty}(\mathcal{F}))$ is the set of all $h: l^{\infty}(\mathcal{F}) \to [-1,1]$ that are uniformly Lipschitz.

\paragraph{Showing that $T$ almost surely corresponds to a triplet of elements of $\mathbb{C}_0$:} Before applying the functional delta method to the above convergences for $\mathbb{G}_{\ntot}$ and $\mathbb{G}_{\ntot}^*$ we must first characterize a subset of $l^{\infty}(\mathcal{F})$ to which $T$ is constructed to belong. To do this for each $j \in \{1,2,3\}$, define $\Phi_j: l^{\infty}(\mathcal{F}) \to \mathbb{D}$ such that for all $a \in \bar{\mathbb{R}}^p$ and $G \in l^{\infty}(\mathcal{F})$, $[\Phi_j(G)](a)=G(f_{a,j})$. Now let $\Phi : l^{\infty}(\mathcal{F}) \to \mathbb{D} \times \mathbb{D} \times \mathbb{D}$ be a one-to-one correspondence given by $\Phi(G)=\big(\Phi_1(G),\Phi_2(G),\Phi_3(G) \big)$ and define $(T_1,T_2,T_3) \equiv \Phi(T)$, so that for each $j \in \{1,2,3\}$ and $a \in \bar{\mathbb{R}}^p$, $T_j(a)=T(f_{a,j})$. We will now show that $T_1$ is almost surely a continuous function. To do this let $\Omega_T$ be the probability space on which $T$ is defined and let $T_1(\cdot)(\omega)$ and $T(\cdot)(\omega)$ be the specific realizations of $T_1$ and $T$ at a fixed $\omega$. Note that by Lemma 18.15 in \cite{VanderVaartTextbook}, $T$ can be constructed so that for almost every $\omega \in \Omega_T$, $f \mapsto T(f)(\omega)$ is uniformly continuous with respect to the semimetric given by $\rho(f,f')=\sqrt{\var(T(f)-T(f'))}$. We choose such a construction of $T$. Now fix $\omega \in \Omega_T$ for which $f \mapsto T(f)(\omega)$ is uniformly $\rho$-continuous (which can be done for almost every $\omega$). Next fix $\epsilon>0$ and $a_0 \in \bar{\mathbb{R}}^p$. Since $f \mapsto T(f)(\omega)$ is uniformly $\rho$-continuous there exists a $\delta_1$ such that whenever $\var(T(f)-T(f')) < \delta_1^2$, $\vert T(f)(\omega)-T(f')(\omega) \vert < \epsilon$. Fixing such a $\delta_1$, and noting that since by Assumption \ref{assump:XandTildeXDensity}, $F_X$ is continuous at $a_0$, there exists a $\delta_2$ such that for all $a \in \bar{\mathbb{R}}^p$ whenever $\vert \vert a-a_0 \vert \vert_2 < \delta_2$, $\vert F_X(a)-F_X(a_0) \vert < \delta_1^2/c^2$. With such a choice of $\delta_2$, observe that if $\vert \vert a-a_0 \vert \vert_2 < \delta_2$, $$\var(T(f_{a_0,1})-T(f_{a,1}))=\var\big(W (I\{X \leq a \} - I\{X \leq a_0 \} ) \big) \leq c^2 \vert F_X(a)-F_{X}(a_0) \vert < \delta_1^2,$$ where the first step holds because $\big( T(f_{a_0,1}), T(f_{a,1}) \big)$ has the same distribution as the limiting distribution of $$\sqrt{\ntot} \Big( \frac{1}{\ntot} \sum_{i=1}^{\ntot} \begin{bmatrix} W_i I \{X_i \leq a_0 \} \\ W_i I \{X_i \leq a \} \end{bmatrix} - \begin{bmatrix}
    F_X(a_0) \\ F_X(a)
\end{bmatrix}\Big).$$ It follows by an earlier result that when $\vert \vert a-a_0 \vert \vert_2 < \delta_2$, $\vert T_1(a_0)(\omega)-T_1(a)(\omega)\vert=\vert T(f_{a_0,1})(\omega)-T(f_{a,1})(\omega) \vert < \epsilon$. Since this argument holds for any fixed $a_0$ and $\epsilon>0$, $T_1(\cdot)(\omega)$ is continuous. Further the argument holds for almost every $\omega$, so $T_1$ is almost surely a continuous function on $\bar{\mathbb{R}}^p$ (i.e., $\mathbb{P}(T_1 \in \mathbb{C}_0)=1$). Nearly identical arguments involving $\tilde{X}$ instead of $X$ (and $W$ instead of $\bar{W}$) show that $T_2$ (and $T_3$) are almost surely in $\mathbb{C}_0$. Thus $T$ is a tight, mean zero Gaussian Process such that the natural correspondence map $\Phi(T)$ is almost surely a triplet of 3 functions in $\mathbb{C}_0$.

\paragraph{Establishing Hadamard differentiability of relevant functionals:} Now fix $v \in \mathbb{R}^{3d}$ and let $v_1,v_2,v_3 \in \mathbb{R}^d$ be its components such that $v=(v_1,v_2,v_3)$. Let $\phi: \mathbb{D}_{\phi} \to \mathbb{R}^d$ be the function satisfying Assumption \ref{assump:HadamardDifferentiablityAssumption}. Further let $\phi_v : \mathbb{D}_{\phi}^3   \to \mathbb{R}$ be the function given by $$\phi_v(G_1,G_2,G_3)=v_1^\tran \phi  (G_1) +v_2^\tran \phi (  G_2 )+v_3^\tran \phi (G_3),$$ and we will develop a Hadamard differentiability of $\phi_v \circ \Phi : l^{\infty}(\mathcal{F}) \to \mathbb{R}$ result. To do this let $\bar{G} \in l^{\infty}(\mathcal{F})$ be the function satisfying $\bar{G}(f)=\e[f(V_i)]$ for each $f \in \mathcal{F}$. Now note that $\Phi : l^{\infty}(\mathcal{F}) \to \mathbb{D}^3$ is Hadamard differentiable at $\bar{G}$, with respect to any subspace of $l^{\infty}(\mathcal{F})$. This is because $\Phi (\cdot)$ is a continuous, linear map so for any sequence $(h_t)_{t >0 }$ of elements of $l^{\infty}(\mathcal{F})$ such that $h_t \to h$ as $t \downarrow 0$, $$\lim\limits_{t \downarrow 0} \frac{\Phi(\bar{G} +t h_t) -\Phi(\bar{G})}{t}=\lim\limits_{t \downarrow 0} \Phi(h_t)=\Phi(h).$$ Further the Hadamard derivative of $\Phi$ at $\bar{G}$ is given by $\Phi_{\bar{G}}'(h)=\Phi(h)$ for any $h \in l^{\infty}(\mathcal{F})$. Also observe that by Assumption \ref{assump:SamplingLabelling} and definition of $\Phi_j$ for $j \in \{1,2,3\}$, $\Phi_1(\bar{G})=F_X$ and $\Phi_2(\bar{G})=\Phi_3(\bar{G})=F_{\tilde{X}}$ and for convenience define $\bar{F}=(F_X,F_{\tilde{X}},F_{\tilde{X}})$. Now let $(h_t)_{t > 0} =\big( h_{1t},h_{2t}, h_{3t} \big)_{t >0}$ be any sequence in $\mathbb{D}^3$ that converges to some $h= \big(h_1,h_2,h_3 \big) \in \mathbb{C}_0 \times \mathbb{C}_0 \times \mathbb{C}_0$ as $t \downarrow 0$, and observe that $$\begin{aligned} \lim\limits_{t \downarrow 0} \frac{\phi_v(\bar{F}+th_t)-\phi_v(\bar{F})}{t}  & = v_1^\tran \lim\limits_{t \downarrow 0} \frac{ \phi(F_X+th_{1t})-\phi(F_X)}{t}
    \\ & \quad  +v_2^\tran \lim\limits_{t \downarrow 0} \frac{\phi(F_{\tilde{X}}+th_{2t})-\phi(F_{\tilde{X}})}{t} +v_3^\tran \lim\limits_{t \downarrow 0} \frac{\phi(F_{\tilde{X}}+th_{3t})-\phi(F_{\tilde{X}})}{t}  
    \\ & = v_1^\tran \phi_{F_X}'(h_1)+v_2^\tran \phi_{F_{\tilde{X}}}'(h_2)+v_3^\tran \phi_{F_{\tilde{X}}}'(h_3), \end{aligned}$$ where the last step holds by Assumption \ref{assump:HadamardDifferentiablityAssumption}. Thus for $h=(h_1,h_2,h_3)$, letting $\phi_v' (h)= v_1^\tran \phi_{F_X}'(h_1)+v_2^\tran \phi_{F_{\tilde{X}}}'(h_2)+v_3^\tran \phi_{F_{\tilde{X}}}'(h_3)$, the above result shows that $\phi_v$ is Hadamard differentiable at $\bar{F}=(F_X,F_{\tilde{X}},F_{\tilde{X}})=\Phi(\bar{G})$ tangentially to the subspace $\mathbb{C}_0^3$ with Hadamard derivative $\phi_v'$. Now define $\Phi_{(v)}=\phi_v \circ \Phi : l^{\infty}(\mathcal{F}) \to \mathbb{R}$. By the chain rule for Hadamard differentiability (e.g., Theorem 20.9 in \cite{VanderVaartTextbook}), $\Phi_{(v)}$ is Hadamard differentiable at $\bar{G}$ tangentially to the subspace $\{h \in l^{\infty}(\mathcal{F}) : \Phi_{\bar{G}}'(h) \in \mathbb{C}_0^3 \}=\{h \in l^{\infty}(\mathcal{F}) : \Phi(h) \in \mathbb{C}_0^3 \}$ with Hadamard derivative at $\bar{G}$ denoted by $\Phi_{(v)}'$.

    \paragraph{Applying the functional delta method to establish Assumption \ref{assump:ZetaBootstrapConsistency}:} Now we can apply the functional delta method to obtain the desired result. In particular, for $f \in \mathcal{F}$ define $\mathbb{P}_{\ntot} f=\frac{1}{\ntot} \sum_{i=1}^{\ntot} f(V_i)$, $\mathbb{P}_{\ntot}^* f=\frac{1}{\ntot} \sum_{i=1}^{\ntot} f(V_i^*)$ and recall that $\bar{G}(f)=\e[f(V_i)]$. $\mathbb{P}_{\ntot}$, $\mathbb{P}_{\ntot}^*$, $\mathbb{G}_{\ntot}$, and $\mathbb{G}_{\ntot}^*$ can all be viewed as random elements of $l^{\infty}(\mathcal{F})$ where $\mathbb{G}_{\ntot}=\sqrt{\ntot}(\mathbb{P}_{\ntot}-\bar{G})$ and $\mathbb{G}_{\ntot}^*=\sqrt{\ntot}(\mathbb{P}_{\ntot}^*-\mathbb{P}_{\ntot})$. Now recall that the empirical process $\{\mathbb{G}_{\ntot} f \ : \ f \in \mathcal{F} \}$ converges to a tight mean $0$ Gaussian process $T$, which we showed earlier almost surely satisfies $\Phi(T) \in \mathbb{C}_0^3$. Since $\Phi_{(v)}$ is Hadamard differentiable at $\bar{G}$ tangentially to the subspace $\{h \in l^{\infty}(\mathcal{F}) : \Phi(h) \in \mathbb{C}_0^3 \}$, and $T$ is almost surely in this subspace, by the functional delta method (e.g., Theorem 20.8 in \cite{VanderVaartTextbook}) $$\sqrt{\ntot} \big( \Phi_{(v)}(\mathbb{P}_{\ntot})-\Phi_{(v)}(\bar{G}) \big) \xrightarrow{d} \Phi_{(v)}'(T).$$ Recalling that the empirical process $\mathbb{G}_{\ntot}^*=\sqrt{\ntot}(\mathbb{P}_{\ntot}^*-\mathbb{P}_{\ntot})$ satisfies the convergence in Equation \eqref{eq:EmpiricalProcessBootstrapConvergence} and $T$ is almost surely in the subspace $\{h \in l^{\infty}(\mathcal{F}) : \Phi(h) \in \mathbb{C}_0^3 \}$, we can apply the functional delta method for the bootstrap (Theorem 23.9 in \cite{VanderVaartTextbook}) to get $$\mathbb{P}_* \big[ \sqrt{\ntot}\big( \Phi_{(v)}(\mathbb{P}_{\ntot}^*)- \Phi_{(v)}(\mathbb{P}_{\ntot}) \big) \leq x \giv \hat{\mathbb{P}}_{\ntot} \big] \xrightarrow{p} \mathbb{P}(\Phi_{(v)}'(T) \leq x ) \quad \text{for all} \quad x \in \mathbb{R}.$$ 
    
    To complete the proof, we simplify the above convergence results. Note that since $\Phi_{(v)}' : l^{\infty}(\mathcal{F}) \to \mathbb{R}$ is a continuous linear map and since $T$ is a tight Gaussian process taking values in $l^{\infty}(\mathcal{F})$, by Lemma 3.10.8 in \cite{VDVAndWellnerTextbook}, $\Phi_{(v)}'(T)$ is normally distributed. Since $T$ is mean zero and $\Phi_{(v)}'$ is linear, $\e[\Phi_{(v)}'(T)]=0$ (see Page 523 in \cite{VDVAndWellnerTextbook} for a more formal justification). Thus $\Phi_{(v)}'(T) \stackrel{\text{dist}}{=} \mathcal{N}(0,\sigma_v^2)$ for some $\sigma_v^2 \geq 0$. Also observe that since $\Phi(\bar{G})=(F_X,F_{\tilde{X}},F_{\tilde{X}})$, $\Phi(\mathbb{P}_{\ntot})=(\hat{F}_1,\hat{F}_2,\hat{F}_3)$, and $\Phi(\mathbb{P}_{\ntot}^*)=(\hat{F}_1^*,\hat{F}_2^*,\hat{F}_3^*)$, and since $\Phi_{(v)}=\phi_v \circ \Phi$, we can apply the formulas from Assumption \ref{assump:HadamardDifferentiablityAssumption} and the definition of $\phi_v$, $\zeta$, $\hat{\zeta}$, and $\hat{\zeta}^*$ to get that $\Phi_{(v)}(\bar{G})=v^\tran \zeta$, $\Phi_{(v)}(\mathbb{P}_{\ntot})=v^\tran \hat{\zeta}$, and $\Phi_{(v)}(\mathbb{P}_{\ntot}^*)=v^\tran \hat{\zeta}^*$. Hence combining these simplifications with earlier convergence results, $$\sqrt{\ntot} v^\tran (\hat{\zeta}-\zeta)  \xrightarrow{d} \mathcal{N}(0,\sigma_v^2) \quad \text{and} \quad \mathbb{P}_* \big( \sqrt{\ntot} v^\tran  (\hat{\zeta}^*-\hat{\zeta} ) \leq x \giv \hat{\mathbb{P}}_{\ntot} \big) \xrightarrow{p} \mathbb{P} \big( \mathcal{N}(0,\sigma_v^2) \leq x \big) \ \ \ \forall_{x \in \mathbb{R}}.$$ Next let $H(\cdot)$ be the CDF of $\mathcal{N}(0,\sigma_v^2)$, and $$H_{\ntot}(x) \equiv \mathbb{P}(\sqrt{\ntot} v^\tran (\hat{\zeta}-\zeta) \leq x) \quad \text{and} \quad \hat{H}_{\ntot,\text{Boot}}(x) =\mathbb{P}_* \big( \sqrt{\ntot} v^\tran  (\hat{\zeta}^*-\hat{\zeta} ) \leq x \giv \hat{\mathbb{P}}_{\ntot} \big).$$ Note that by Polya's Theorem (Theorem 11.2.9 in \cite{TSH}), since $\sqrt{\ntot} v^\tran (\hat{\zeta}-\zeta)  \xrightarrow{d} \mathcal{N}(0,\sigma_v^2)$ with $\mathcal{N}(0,\sigma_v^2)$ having a continuous CDF $H$, $H_{\ntot}(x)$ converges to $H(x)$ uniformly in $x$. Further since $\hat{H}_{\ntot,\text{Boot}}(x) \xrightarrow{p} H(x)$ for all $x \in \mathbb{R}$ where $H$ is continuous, as stated on page 339 in \cite{VanderVaartTextbook}, $\sup_{x \in \mathbb{R}} \vert \hat{H}_{\ntot,\text{Boot}}(x)-H(x) \vert=o_p(1)$. To complete the proof observe that $$\begin{aligned}
     \sup\limits_{x \in \mathbb{R}} \vert \hat{H}_{\ntot,\text{Boot}}(x)-H_{\ntot}(x) \vert \leq \sup\limits_{x \in \mathbb{R}} \vert \hat{H}_{\ntot,\text{Boot}}(x)-H(x) \vert +\sup\limits_{x \in \mathbb{R}} \vert H(x)-H_{\ntot}(x) \vert 
     = o_p(1)+o(1).
\end{aligned}$$ Thus we have shown $\sqrt{\ntot} v^\tran (\hat{\zeta}-\zeta)  \xrightarrow{d} \mathcal{N}(0,\sigma_v^2)$ and
$$\sup_{x \in \mathbb{R}} | \mathbb{P}_* ( \sqrt{\ntot} v^\tran  (\hat{\zeta}^*-\hat{\zeta} ) \leq x \giv \hat{\mathbb{P}}_{\ntot} ) -\mathbb{P}(\sqrt{\ntot} v^\tran (\hat{\zeta}-\zeta) \leq x ) |  \xrightarrow{p} 0.$$ Since the above argument holds for any fixed $v \in \mathbb{R}^{3d}$, Assumption \ref{assump:ZetaBootstrapConsistency} holds.  
\end{proof}

\section{Efficiency comparisons with alternative estimators}\label{sec:AsymptoticVarianceComparisonsAppendix}

In this appendix, we prove the claims in Section \ref{sec:EfficiencyCalculations} about how the asymptotic variance of $\htPPhomOpt$ relates to alternatives. For ease of comparison, we only consider settings where the complete sample is a uniform random sample (i.e., where the labelling weights always satisfy $\pi(\tilde{X})=\pi_L$ for some constant $\pi_L \in (0,1)$). In Section \ref{sec:ComparisonWithClassicalEstimator}, we derive a formula for the ratio of the asymptotic variance of $\htPPhomOpt$ with that of $\htc$ for the case where $d=1$. In Section \ref{sec:PPI++ComparisonSection}, we formally introduce the PPI++ estimator from \cite{PPI++}, and derive a formula for the ratio of the asymptotic variance of $\htPPhomOpt$ with that of the PPI++ estimator. In that section we also show that there are some settings where the PPI++ estimator is less efficient than $\htPPhomOpt$ and other settings where the opposite is true (see Remark \ref{remark:PPI++VersusPTDQuantile} and Equation \eqref{eq:PPI++VersusTPPARE}). Finally, in Section \ref{sec:PTDusingGammaHatAll}, we show that for any $d \geq 1$, the optimally tuned PTD estimator has the same asymptotic variance as an optimally tuned variant of the PTD estimator which uses all data (as opposed to just the incomplete data) to estimate $\gamma$.

\subsection{Comparison with classical estimator}\label{sec:ComparisonWithClassicalEstimator}

\begin{proposition}\label{prop:AREUnivariateTPPvsClassical}
    Suppose $d=1$ and each point has an equal probability $\pi_L$ of being labelled (i.e., $\mathbb{P}(I=1 \giv \tilde{X})=\pi_L$ for all $\tilde{X}$). Under Assumptions \ref{assump:SamplingLabelling} and \ref{assump:AsymptoticLinearity}, if $\hat{\Omega}_{\textnormal{opt}} \xrightarrow{p} \Omega_{\textnormal{opt}}$, then $\sqrt{\ntot}(\htPPhomOpt-\theta) \xrightarrow{d} \mathcal{N}(0,\sigma_{\textnormal{TPTD}}^2)$ and $\sqrt{\ntot}(\htc-\theta) \xrightarrow{d} \mathcal{N}(0,\sigma_{\textnormal{classical}}^2)$, where $$\frac{\sigma_{\textnormal{TPTD}}^2}{\sigma_{\textnormal{classical}}^2}=1-(1-\pi_L) \cdot \corr^2(\Psi(X),\tilde{\Psi}(\tilde{X})).$$
\end{proposition}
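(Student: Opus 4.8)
The plan is to start from the variance formula already established in Proposition~\ref{prop:MoreEfficientThanclassical} and then evaluate its scalar ingredients under the uniform-labeling assumption. Specializing Proposition~\ref{prop:MoreEfficientThanclassical} to $d=1$, every covariance matrix becomes a scalar, and it gives directly that $\sqrt{\ntot}(\htc-\theta)\xrightarrow{d}\mathcal{N}(0,\SigTC)$ and $\sqrt{\ntot}(\htPPhomOpt-\theta)\xrightarrow{d}\mathcal{N}(0,\sigma_{\textnormal{TPTD}}^2)$ with
$$\sigma_{\textnormal{TPTD}}^2=\SigTC-\frac{(\SigTGC)^2}{\SigGC+\SigGM},\qquad \sigma_{\textnormal{classical}}^2=\SigTC.$$
Hence the target reduces to showing that $(\SigTGC)^2/\big(\SigTC(\SigGC+\SigGM)\big)=(1-\pi_L)\,\corr^2(\Psi(X),\tilde{\Psi}(\tilde{X}))$, so the entire proof is the evaluation of these four scalars.

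For the evaluation I would use $W=I/\pi_L$, $\bar{W}=(1-I)/(1-\pi_L)$, the facts $I^2=I$ and $(1-I)^2=(1-I)$, and the mean-zero property from Assumption~\ref{assump:AsymptoticLinearity} together with $\e[W\Psi(X)]=\e[W\tilde{\Psi}(\tilde{X})]=\e[\bar{W}\tilde{\Psi}(\tilde{X})]=0$ (established in the proof of Lemma~\ref{lemma:JointCLTzeta}). Writing $v_\Psi=\var(\Psi(X))$ and $v_{\tilde\Psi}=\var(\tilde{\Psi}(\tilde{X}))$, and conditioning on $\tilde{X}$ with $\e[I\mid\tilde{X}]=\pi_L$, the tower property yields $\SigTC=\e[W^2\Psi(X)^2]=v_\Psi/\pi_L$, $\SigGC=v_{\tilde\Psi}/\pi_L$, and $\SigGM=v_{\tilde\Psi}/(1-\pi_L)$; summing the last two gives $\SigGC+\SigGM=v_{\tilde\Psi}/\big(\pi_L(1-\pi_L)\big)$.

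The one term needing care is the cross-covariance $\SigTGC=\e[W^2\Psi(X)\tilde{\Psi}(\tilde{X})]$, since it mixes $\Psi(X)$ and $\tilde{\Psi}(\tilde{X})$. Here I would condition on $\tilde{X}$ and invoke the missing-at-random assumption $I\indep X\mid\tilde{X}$ to factor $\e[I\Psi(X)\mid\tilde{X}]=\e[I\mid\tilde{X}]\,\e[\Psi(X)\mid\tilde{X}]=\pi_L\,\e[\Psi(X)\mid\tilde{X}]$; pulling the $\tilde{X}$-measurable factor $\tilde{\Psi}(\tilde{X})$ back inside and applying the tower property once more gives $\SigTGC=\cov(\Psi(X),\tilde{\Psi}(\tilde{X}))/\pi_L$. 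This factorization is the only genuinely non-mechanical step.

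Substituting these four expressions, the $\pi_L$ prefactors combine so that
$$\frac{(\SigTGC)^2}{\SigTC(\SigGC+\SigGM)}=\frac{\cov^2(\Psi(X),\tilde{\Psi}(\tilde{X}))/\pi_L^2}{(v_\Psi/\pi_L)\big(v_{\tilde\Psi}/(\pi_L(1-\pi_L))\big)}=(1-\pi_L)\,\frac{\cov^2(\Psi(X),\tilde{\Psi}(\tilde{X}))}{v_\Psi\,v_{\tilde\Psi}}=(1-\pi_L)\,\corr^2\big(\Psi(X),\tilde{\Psi}(\tilde{X})\big),$$
and subtracting from $1$ delivers the claimed ratio. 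The computation is essentially bookkeeping, so the main obstacle is simply ensuring the conditional-independence factorization in $\SigTGC$ is carried out correctly.
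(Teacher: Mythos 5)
Your proposal is correct and takes essentially the same route as the paper's proof: both specialize Proposition~\ref{prop:MoreEfficientThanclassical} to $d=1$ and then evaluate the four scalars $\SigTC$, $\SigGC$, $\SigGM$, $\SigTGC$ under the uniform-labeling assumption, arriving at the identical algebraic simplification. The only cosmetic difference is in the cross-covariance step: the paper invokes the fact that $I$ is independent of $(X,\tilde{X})$ (a consequence of missing-at-random combined with the constant propensity), whereas you derive the same factorization $\SigTGC=\cov(\Psi(X),\tilde{\Psi}(\tilde{X}))/\pi_L$ by conditioning on $\tilde{X}$ and applying the tower property, which makes that step slightly more explicit.
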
 \begin{proof}
    By Proposition \ref{prop:MoreEfficientThanclassical} and the formulas for the submatrices of $\Sigma_{\zeta}$, $\sqrt{\ntot}(\htPPhomOpt-\theta) \xrightarrow{d} \mathcal{N}(0,\sigma_{\textnormal{TPTD}}^2)$ and $\sqrt{\ntot}(\htc-\theta) \xrightarrow{d} \mathcal{N}(0,\sigma_{\textnormal{classical}}^2)$ where $\sigma_{\textnormal{classical}}^2=\var(W \Psi(X))$ and $$\sigma_{\text{TPTD}}^2=\var(W \Psi(X))-\frac{\cov^2(W \Psi(X), W \tilde{\Psi}(\tilde{X}))}{\var(W \tilde{\Psi}(\tilde{X}))+\var(\bar{W} \tilde{\Psi}(\tilde{X}))}.$$ By assumption, in this setting $W=I/\pi_L$ and $\bar{W}=(1-I)/(1-\pi_L)$, and $I \sim \text{Bernoulli}(\pi_L)$ and is independent of $X$ and $\tilde{X}$, and recall $\e[\Psi(X)]=\e[\tilde{\Psi}(\tilde{X})]=0$ by Assumption \ref{assump:AsymptoticLinearity}. Thus $\sigma_{\text{classical}}^2=\var(\Psi(X))/\pi_L$ and $$\begin{aligned} \sigma_{\text{TPTD}}^2 & = \frac{1}{\pi_L } \var( \Psi (X)) -\frac{\cov^2( \Psi (X), \tilde{\Psi}(\tilde{X}))/\pi_L^2}{\var( \tilde{\Psi} (\tilde{X}))/\pi_L+\var( \tilde{\Psi} (\tilde{X}))/(1-\pi_L)} 
    \\ & = \frac{\var(\Psi(X))}{\pi_L} \Big(1- \frac{\cov^2( \Psi (X), \tilde{\Psi} (\tilde{X}))}{ \var( \Psi (X)) \var(\tilde{\Psi} (\tilde{X})) (1+\pi_L/(1-\pi_L)) } \Big).
    \\ & =  \sigma_{\text{classical}}^2 \Big(1- (1-\pi_L) \corr^2( \Psi (X), \tilde{\Psi} (\tilde{X}))  \Big).
\end{aligned}$$ 
\end{proof}


\subsection{Comparison with PPI++}\label{sec:PPI++ComparisonSection}

In this section, we study the asymptotic efficiency of $\htPPhomOpt$ relative to the the estimator from \cite{PPI++} with an optimally chosen tuning parameter $\lambda$, which we denote by $\htPPIplus$. We find that $\htPPIplus$ is not asymptotically equivalent to $\htPPhomOpt$, even when both $\htPPIplus$ and $\htPPhomOpt$ are well defined and asymptotically normal. In Remark \ref{remark:PPI++VersusPTDQuantile} we present an example where $\htPPhomOpt$ is more efficient than $\htPPIplus$ and an example where $\htPPhomOpt$ is less efficient than $\htPPIplus$. We start by presenting the definition of $\htPPIplus$ and show (in Proposition \ref{prop:PPI++Efficiency}) that in the setting of Proposition \ref{prop:AREUnivariateTPPvsClassical} and under similar regularity conditions, $\sqrt{\ntot}(\htPPIplus-\theta) \xrightarrow{d} \mathcal{N}(0,\sigma_{\text{PPI++}}^2)$ where $$\frac{\sigma_{\text{PPI++}}^2}{\sigma_{\text{classical}}^2}=1-(1-\pi_L) \cdot \corr^2(\Psi(X),\Psi(\tilde{X})).$$ 

Let $l_{\vartheta}(\cdot)$ be a loss function satisfying the smoothness Assumption \ref{assump:SmoothEnoughForAsymptoticLineariaty}, which is slightly stronger than the ``Smooth Enough Loss" assumption from \cite{PPI++}. \cite{PPI++} considers solving the $M$-estimation tasks of estimating $\theta=\argmin_{\vartheta \in \mathbb{R}^d} \e[l_{\vartheta}(X)]$ by using estimatiors of the form $$\hat{\theta}_{\lambda}^{\text{PP}} = \argmin_{\vartheta \in \mathbb{R}^d} \Big( \frac{1}{\nc} \sum_{i=1}^\ntot I_i l_{\vartheta}(X_i) + \lambda \cdot \big( \frac{1}{\nm} \sum_{i=1}^\ntot (1-I_i) l_{\vartheta}(\tilde{X}_i) -\frac{1}{\nc} \sum_{i=1}^\ntot I_i l_{\vartheta}(\tilde{X}_i) \big)  \Big).$$

\cite{PPI++} shows that under certain conditions, if $\hat{\lambda}=\lambda+o_p(1)$, $\hat{\theta}_{\hat{\lambda}}^{\text{PP}}$ follows a CLT with asymptotic variance that depends on $\lambda$, and they define $\lambda_*$ to be the $\lambda$ minimizing the trace of the asymptotic variance of $\hat{\theta}_{\hat{\lambda}}^{\text{PP}}$. We define $\htPPIplus$ to be an optimally tuned version of $\hat{\theta}_{\hat{\lambda}}^{\text{PP}}$ in the sense that  $\htPPIplus=\hat{\theta}_{\hat{\lambda}_*}^{\text{PP}}$, where $\hat{\lambda}_* \xrightarrow{p} \lambda_*$.

\begin{proposition}\label{prop:PPI++Efficiency}
    Suppose $d=1$ and each point has an equal probability $\pi_L$ of being labelled (i.e., $\mathbb{P}(I=1 \giv \tilde{X})=\pi_L$ for all $\tilde{X}$). Under Assumptions \ref{assump:SamplingLabelling}, \ref{assump:AsymptoticLinearity}, and \ref{assump:SmoothEnoughForAsymptoticLineariaty}, and under the assumptions that $\hat{\lambda}_* \xrightarrow{p} \lambda_*$, $\htPPIplus \xrightarrow{p} \theta$, and $\htc \xrightarrow{p} \theta$, $\sqrt{\ntot}(\htPPIplus-\theta) \xrightarrow{d} \mathcal{N}(0,\sigma_{\textnormal{PPI++}}^2)$ and $\sqrt{\ntot}(\htc-\theta) \xrightarrow{d} \mathcal{N}(0,\sigma_{\textnormal{classical}}^2)$, where $$\frac{\sigma_{\textnormal{PPI++}}^2}{\sigma_{\textnormal{classical}}^2}=1-(1-\pi_L) \cdot \corr^2(\Psi(X),\Psi(\tilde{X})).$$

\end{proposition}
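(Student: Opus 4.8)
The plan is to treat $\htPPIplus = \hat{\theta}_{\hat{\lambda}_*}^{\text{PP}}$ as a one-dimensional Z-estimator solving $G_{\ntot}(\vartheta) = 0$, where $G_{\ntot}$ is the derivative of the PPI++ objective, and to compute its asymptotic variance through the sandwich expansion $\sqrt{\ntot}(\htPPIplus - \theta) = -[\nabla^2 L(\theta)]^{-1}\sqrt{\ntot}G_{\ntot}(\theta) + o_p(1)$. The first key observation is that the effective Hessian is exactly $\nabla^2 L(\theta) = \e[\ddot{l}_{\theta}(X)]$, the same as for $\htc$: differentiating the objective twice and taking limits (using $\nc/\ntot \to \pi_L$, $\frac{\nm}{\ntot} \to 1-\pi_L$, and $I \indep (X,\tilde{X})$ under uniform sampling), the two debiasing terms both converge to $\e[\ddot{l}_{\theta}(\tilde{X})]$ and cancel, leaving only $\e[\ddot{l}_{\theta}(X)]$. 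Since $\hat{\lambda}_* \xrightarrow{p}\lambda_*$ and the limiting variance is continuous in $\lambda$, it suffices to work with $\lambda$ fixed at $\lambda_*$. This parallels the roles played by Proposition~\ref{prop:MestimatorsAsymptoticallyLinear} and Lemma~\ref{lemma:JointCLTzeta} in the earlier proofs; the difference is that PPI++ debiases the loss rather than the estimator, so the relevant score involves $\dot{l}_{\theta}$ applied to \emph{both} $X$ and $\tilde{X}$.

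The second step is to compute the limiting variance of $\sqrt{\ntot}G_{\ntot}(\theta)$. Writing $G_{\ntot}(\theta) = A_{\ntot} + \lambda B_{\ntot}$ with $A_{\ntot} = \frac{1}{\nc}\sum_i I_i \dot{l}_{\theta}(X_i)$ and $B_{\ntot} = \frac{1}{\nm}\sum_i (1-I_i)\dot{l}_{\theta}(\tilde{X}_i) - \frac{1}{\nc}\sum_i I_i \dot{l}_{\theta}(\tilde{X}_i)$, I would exploit the self-normalization identities $\sum_i I_i = \nc$ and $\sum_i(1-I_i)=\nm$ to subtract $\mu \equiv \e[\dot{l}_{\theta}(\tilde{X})]$ from each proxy term without changing $B_{\ntot}$, so that $B_{\ntot}$ becomes an exact difference of two mean-zero self-normalized averages of $c_i \equiv \dot{l}_{\theta}(\tilde{X}_i)-\mu$. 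Using $\e[\dot{l}_{\theta}(X)]=0$, $I \indep (X,\tilde{X})$, and $I(1-I)=0$, a multivariate CLT together with Slutsky on the scaling factors $\ntot/\nc \to 1/\pi_L$ and $\frac{\ntot}{\nm}\to \frac{1}{1-\pi_L}$ yields $\var(\sqrt{\ntot}A_{\ntot})\to \sigma_X^2/\pi_L$, $\var(\sqrt{\ntot}B_{\ntot})\to \sigma_{\tilde{X}}^2/[\pi_L(1-\pi_L)]$, and cross term $\cov(\sqrt{\ntot}A_{\ntot},\sqrt{\ntot}B_{\ntot})\to -\sigma_{X\tilde{X}}/\pi_L$, where $\sigma_X^2 = \var(\dot{l}_{\theta}(X))$, $\sigma_{\tilde{X}}^2=\var(\dot{l}_{\theta}(\tilde{X}))$, and $\sigma_{X\tilde{X}}=\cov(\dot{l}_{\theta}(X),\dot{l}_{\theta}(\tilde{X}))$. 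Hence the limiting variance of the score is
$$V(\lambda) = \frac{1}{\pi_L}\Big[\sigma_X^2 - 2\lambda\sigma_{X\tilde{X}} + \frac{\lambda^2}{1-\pi_L}\sigma_{\tilde{X}}^2\Big].$$

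The third step minimizes $V(\lambda)$, giving $\lambda_* = (1-\pi_L)\sigma_{X\tilde{X}}/\sigma_{\tilde{X}}^2$ and $V(\lambda_*) = \frac{\sigma_X^2}{\pi_L}\big[1 - (1-\pi_L)\sigma_{X\tilde{X}}^2/(\sigma_X^2\sigma_{\tilde{X}}^2)\big]$, so that $\sigma_{\textnormal{PPI++}}^2 = [\nabla^2 L(\theta)]^{-2}V(\lambda_*)$. To finish, I would invoke the formula $\Psi(x) = -[\nabla^2 L(\theta)]^{-1}\dot{l}_{\theta}(x)$ from Equation~\eqref{eq:PsiForMestimator}: since $\Psi$ is a fixed scalar multiple of $\dot{l}_{\theta}$, scale-invariance of correlation gives $\corr(\dot{l}_{\theta}(X),\dot{l}_{\theta}(\tilde{X})) = \corr(\Psi(X),\Psi(\tilde{X}))$, and, as in the proof of Proposition~\ref{prop:AREUnivariateTPPvsClassical}, the classical variance is $\sigma_{\textnormal{classical}}^2 = [\nabla^2 L(\theta)]^{-2}\sigma_X^2/\pi_L$. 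The common factors $[\nabla^2 L(\theta)]^{-2}$ and $\sigma_X^2/\pi_L$ cancel in the ratio, delivering $\sigma_{\textnormal{PPI++}}^2/\sigma_{\textnormal{classical}}^2 = 1 - (1-\pi_L)\corr^2(\Psi(X),\Psi(\tilde{X}))$. The crucial structural point is that because PPI++ debiases the loss, the correlation appearing is between $\Psi(X)$ and $\Psi(\tilde{X})$ (the $\theta$-influence function evaluated at the proxy) rather than $\tilde{\Psi}(\tilde{X})$ (the $\gamma$-influence function), which is exactly what distinguishes this result from Proposition~\ref{prop:AREUnivariateTPPvsClassical}.

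The main obstacle I anticipate is rigorously justifying the asymptotic linear expansion of the loss-debiased estimator --- in particular the Hessian cancellation and the negligibility of the remainder --- under only the mild smoothness of Assumption~\ref{assump:SmoothEnoughForAsymptoticLineariaty} together with the assumed consistency $\htPPIplus \xrightarrow{p}\theta$; the variance bookkeeping, though the longest part, is routine once the self-normalization trick removes the $\mu$-bias terms and the independence $I\indep(X,\tilde{X})$ together with $I(1-I)=0$ eliminates the cross-moments.
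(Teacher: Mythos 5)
Your proposal is correct in substance and reaches exactly the right variance formula, but it takes a genuinely different route from the paper. The paper does not re-derive the asymptotic distribution of $\htPPIplus$ at all: it verifies the hypotheses of Theorem 1 of \cite{PPI++} (the only nontrivial one being $\nc/(\nm) \to r = \pi_L/(1-\pi_L)$, handled by the strong law and continuous mapping), invokes that theorem as a black box to get the CLT for $\sqrt{\nc}(\htPPIplus - \theta)$ with a variance quadratic in $\lambda$, substitutes the minimizing $\lambda_*$, rescales from $\sqrt{\nc}$ to $\sqrt{\ntot}$, and then uses Equation \eqref{eq:PsiForMestimator} from Proposition~\ref{prop:MestimatorsAsymptoticallyLinear} together with Lemma~\ref{lemma:JointCLTzeta} (for $\sigma^2_{\textnormal{classical}}$) to convert gradients into $\Psi(X)$, $\Psi(\tilde{X})$ and cancel common factors---the same final translation step you perform. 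You instead rebuild the CLT from scratch: a Z-estimator sandwich expansion, the observation that the two debiasing Hessian terms cancel in the limit so the effective Hessian is $\nabla^2 L(\theta)$, the self-normalization trick to center the proxy scores at $\mu = \e[\dot{l}_{\theta}(\tilde{X})] \neq 0$, and explicit covariance bookkeeping using $I \indep (X,\tilde{X})$ and $I(1-I)=0$; your limiting score variance $V(\lambda)$, optimal $\lambda_* = (1-\pi_L)\sigma_{X\tilde{X}}/\sigma^2_{\tilde{X}}$, and $V(\lambda_*)$ all agree with what the paper extracts from the cited theorem. What each approach buys: the paper's proof is short and its rigor is inherited from \cite{PPI++}, whose Theorem 1 was proved under essentially the smoothness conditions of Assumption~\ref{assump:SmoothEnoughForAsymptoticLineariaty}; your proof is self-contained and makes the structure transparent (in particular why the correlation involves $\Psi(\tilde{X})$ rather than $\tilde{\Psi}(\tilde{X})$), but the step you flag yourself---justifying the asymptotic linear expansion and remainder negligibility when the loss is only locally Lipschitz and differentiable at a point, not pointwise twice differentiable (e.g., quantile-type losses)---is precisely the nontrivial content of the theorem the paper cites, so in a fully rigorous write-up you would either reproduce an argument in the style of Theorem 5.23 of \cite{VanderVaartTextbook} adapted to the three-term PPI++ criterion, or end up citing \cite{PPI++} anyway.
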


\begin{proof}
    Note that by assumption, all of the necessary conditions in Theorem 1 of \cite{PPI++} are met with the exception of the condition that $\nc/(\nm) \to r$ for some constant $r \geq 0$. Note that by the strong law of large numbers, $$\big( \frac{\nc}{\ntot}, \frac{\nm}{\ntot}\big) = \big(\frac{1}{\ntot} \sum_{i=1}^\ntot I_i, \frac{1}{\ntot} \sum_{i=1}^\ntot (1-I_i) \big) \xrightarrow{a.s.} \big(\pi_L,1-\pi_L \big).$$  Hence by the continuous mapping theorem, $\nc/(\nm) \xrightarrow{a.s.} \pi_L/(1-\pi_L)= r$, so with probability 1, Theorem 1 of \cite{PPI++} is applicable. Hence by  Theorem 1 of \cite{PPI++}, $$\sqrt{\nc} (\htPPIplus -\theta) \xrightarrow{d} \mathcal{N} \Bigg(0, \frac{r \lambda_*^2 \cov \big( \nabla_{\vartheta} l_{\vartheta}( \tilde{X}) \big|_{\vartheta=\theta} \big) +\cov \big( \nabla_{\vartheta} l_{\vartheta}(X) \big|_{\vartheta=\theta} - \lambda_* \nabla_{\vartheta} l_{\vartheta}( \tilde{X}) \big|_{\vartheta=\theta} \big) }{[\nabla_{\vartheta}^2  \e[l_{\vartheta}(X)] \big|_{\vartheta=\theta} ]^2} \Bigg),$$ where $\nc=\pi_L \ntot +o_p(1)$ is the size of the complete sample. Since $\lambda_*$ minimizes the asymptotic variance across all possible $\lambda$, and the asymptotic variance is quadratic in $\lambda$, $$\lambda_* = \frac{\cov \big( \nabla_{\vartheta} l_{\vartheta}(X) \big|_{\vartheta=\theta}, \nabla_{\vartheta} l_{\vartheta}( \tilde{X}) \big|_{\vartheta=\theta} \big)}{(1+r) \cov \big( \nabla_{\vartheta} l_{\vartheta}( \tilde{X}) \big|_{\vartheta=\theta} \big) }.$$ Plugging in this value for $\lambda_*$ and noting $\sqrt{\ntot}=\sqrt{\nc/\pi_L} +o_p(1)$,  $\sqrt{\ntot} ( \htPPIplus -\theta) \xrightarrow{d} \mathcal{N}(0,\sigma_{\text{PPI++}}^2)$, where $$\sigma_{\text{PPI++}}^2 = \frac{1}{\pi_L [\nabla_{\vartheta}^2  \e[l_{\vartheta}(X)] \big|_{\vartheta=\theta} ]^2} \Big(\cov \big( \nabla_{\vartheta} l_{\vartheta}(X) \big|_{\vartheta=\theta})- \frac{\cov^2 \big( \nabla_{\vartheta} l_{\vartheta}(X) \big|_{\vartheta=\theta}, \nabla_{\vartheta} l_{\vartheta}( \tilde{X}) \big|_{\vartheta=\theta} \big)}{(1+r) \cov \big( \nabla_{\vartheta} l_{\vartheta}( \tilde{X}) \big|_{\vartheta=\theta} \big) } \Big).$$

    Now by Proposition \ref{prop:MestimatorsAsymptoticallyLinear}, $$\Psi(X)= -\frac{ \nabla_{\vartheta} l_{\vartheta}(X) \big|_{\vartheta=\theta}}{\nabla_{\vartheta}^2  \e[l_{\vartheta}(X)] \big|_{\vartheta=\theta}} \quad \text{and} \quad \Psi(\tilde{X}) = -\frac{ \nabla_{\vartheta} l_{\vartheta}(\tilde{X}) \big|_{\vartheta=\theta}}{\nabla_{\vartheta}^2  \e[l_{\vartheta}(X)] \big|_{\vartheta=\theta}}.$$ Hence distributing the factor of $1/[\nabla_{\vartheta}^2  \e[l_{\vartheta}(X)] \big|_{\vartheta=\theta}]^2 = [\nabla_{\vartheta}^2  \e[l_{\vartheta}(X)] \big|_{\vartheta=\theta}]^2/[\nabla_{\vartheta}^2  \e[l_{\vartheta}(X)] \big|_{\vartheta=\theta}]^4$ and using scaling properties of covariances, $$\begin{aligned}
    \sigma_{\text{PPI++}}^2 & = \frac{1}{\pi_L} \Big( \cov(\Psi(X)) - \frac{\cov^2(\Psi(X), \Psi(\tilde{X}))}{(1+r) \cov(\Psi(\tilde{X}) )} \Big)
    \\ & = \frac{\var( \Psi(X))}{\pi_L}  \Big(1- \frac{\corr^2(\Psi(X), \Psi(\tilde{X}))}{(1+r) }  \Big).
\end{aligned}$$ Finally by Lemma \ref{lemma:JointCLTzeta}, $\sqrt{\ntot}(\htc-\theta) \xrightarrow{d} \mathcal{N}(0,\sigma_{\text{classical}}^2)$, where $\sigma_{\text{classical}}^2=\var(W\Psi(X))= \var(\Psi(X))/\pi_L$. Combining this with the previous result and recalling $r=\pi_L/(1-\pi_L) \Rightarrow (1+r)^{-1} = 1-\pi_L$, $$\sigma_{\textnormal{PPI++}}^2= \sigma_{\textnormal{classical}}^2 \Big(1-(1-\pi_L) \cdot \corr^2(\Psi(X),\Psi(\tilde{X})) \Big).$$
\end{proof}

One consequence of Proposition \ref{prop:PPI++Efficiency} is that the asymptotic variance $\htPPIplus$ therefore does not depend on $\tilde{\Psi}(\cdot)$ whereas the asymptotic variance of $\htPPhomOpt$ does. Also note that when the conditions of Propositions \ref{prop:AREUnivariateTPPvsClassical} and \ref{prop:PPI++Efficiency} are both met, the asymptotic relative efficiency between $\htPPIplus$ and $\htPPhomOpt$ is given by \begin{equation}\label{eq:PPI++VersusTPPARE}
    \frac{\sigma_{\text{PPI++}}^2}{\sigma_{\text{TPTD}}^2} = \frac{1-(1-\pi_L) \cdot \corr^2(\Psi(X),\Psi(\tilde{X}))}{1-(1-\pi_L) \cdot \corr^2(\Psi(X),\tilde{\Psi}(\tilde{X}))}.
\end{equation}

The following remark shows that the asymptotic efficiency ratio in Equation \eqref{eq:PPI++VersusTPPARE} is smaller than $1$ in some settings and larger than $1$ in other settings. This finding contradicts the claims in Proposition 1 of \cite{MiaoLuNeurIPS} that $\htPPhomOpt$ is always more efficient than the PPI++ estimator from \cite{PPI++}. At the time of this writing, we think that  Proposition 1 in \cite{MiaoLuNeurIPS} is incorrect.

\begin{remark}[Comparison of $\htPPIplus$ and $\htPPhomOpt$ for quantile estimation]\label{remark:PPI++VersusPTDQuantile} Let $F$ and $\tilde{F}$ denote the CDFs of $X$ and $\tilde{X}$, respectively, and for a fixed $q \in (0,1)$, suppose our estimand $\theta=F^{-1}(q)$ is the $q$th quantile of $X$. Further suppose that $F$ is differentiable at $\theta$ with positive derivative and that similarly $\tilde{F}$ is differentiable at $\tilde{F}^{-1}(q)$ with positive derivative. By Corollary 21.5 in \cite{VanderVaartTextbook} it is clear that for some finite constants $\tilde{c}_0$, $\tilde{c}_1 \neq 0$, $c_0$, and $c_1 \neq 0$ that just depend on $F$ and $\tilde{F}$ but not the observed data, the functions $\Psi,\tilde{\Psi}: \mathbb{R} \to \mathbb{R}$ given by $\Psi(x)=c_0+c_1 I\{ x \leq F^{-1}(q) \}$ and $\tilde{\Psi}(x)=\tilde{c}_0+\tilde{c}_1 I\{ x \leq \tilde{F}^{-1}(q) \}$ satisfy Assumption \ref{assump:AsymptoticLinearity}. Therefore in this setting Equation \eqref{eq:PPI++VersusTPPARE}, simplifies to $$\frac{\sigma_{\text{PPI++}}^2}{\sigma_{\text{TPTD}}^2} = \frac{1-(1-\pi_L) \cdot \corr^2 \big(I\{X \leq F^{-1}(q)\},I\{ \tilde{X} \leq F^{-1}(q)\} \big)}{1-(1-\pi_L) \cdot \corr^2\big(I\{X \leq F^{-1}(q)\} , I\{ \tilde{X} \leq \tilde{F}^{-1}(q)\} \big)}.$$
Note that if $\tilde{X}=g(X)$ for a fixed monotone strictly increasing function $g$, then $X \leq F^{-1}(q) \Leftrightarrow \tilde{X} \leq \tilde{F}^{-1}(q)$ in which case $\corr^2\big(I\{X \leq F^{-1}(q)\} , I\{ \tilde{X} \leq \tilde{F}^{-1}(q)\}\big)=1$, but $\corr^2\big(I\{X \leq F^{-1}(q)\} , I\{ \tilde{X} \leq F^{-1}(q)\} \big) \leq 1$ and hence $\sigma_{\text{TPTD}}^2 \leq \sigma_{\text{PPI++}}^2$ (with a strict a inequality for most nontrivial cases of strictly increasing functions $g$). Meanwhile if $X \sim \mathcal{N}(0,0.25^2)$, $\tilde{X}=X+E$ where $E \indep X$ with $E \sim \mathcal{N}(0,1)$ and $q=0.99$, we present very strong numerical evidence in Figure \ref{fig:PPI++MoreEfficientExample} that $\sigma_{\text{PPI++}}^2< \sigma_{\text{TPTD}}^2$.
\end{remark}

\begin{figure}[t]
    \centering 
    \includegraphics[width=0.95 \hsize]{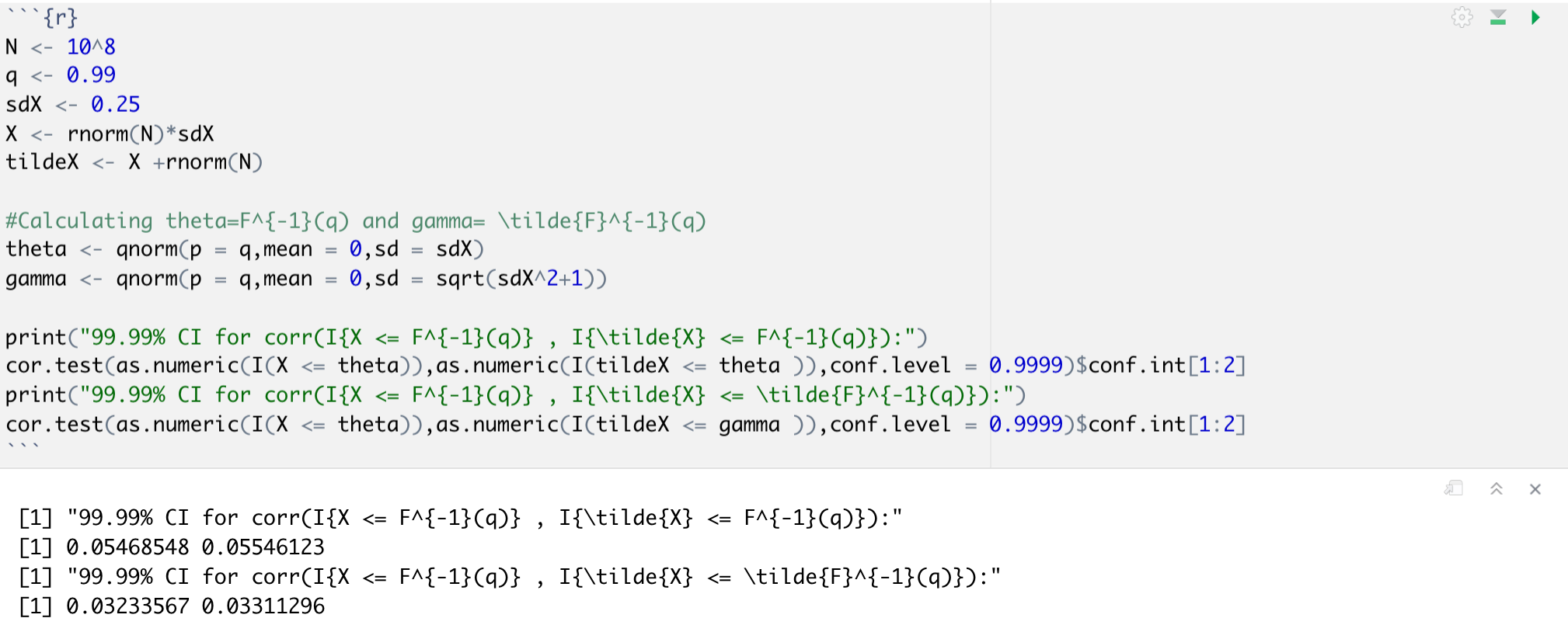}
    \caption{Letting $X \sim \mathcal{N}(0,0.25^2)$, $\tilde{X}=X+E$ where $E \indep X$ with $E \sim \mathcal{N}(0,1)$ and $q=0.99$, the following R code generates $10^8$ samples of $(X,\tilde{X})$ and uses these $10^8$ samples to construct 99.99\% confidence intervals for $\corr(I\{X \leq F^{-1}(q) \}, I\{\tilde{X} \leq F^{-1}(q) \})$ and $\corr(I\{X \leq F^{-1}(q) \}, I\{ \tilde{X} \leq \tilde{F}^{-1}(q) \})$. These confidence intervals are far apart giving strong numerical evidence that $\corr(I\{X \leq F^{-1}(q) \}, I\{\tilde{X} \leq F^{-1}(q) \})>\corr(I\{X \leq F^{-1}(q) \}, I\{ \tilde{X} \leq \tilde{F}^{-1}(q) \})$, which by Equation \eqref{eq:PPI++VersusTPPARE} further implies that $\sigma_{\text{PPI++}}^2 < \sigma_{\text{TPTD}}$.}
    \label{fig:PPI++MoreEfficientExample}
\end{figure}

\subsection{Comparison with PTD using \texorpdfstring{$\hga$}{}}\label{sec:PTDusingGammaHatAll}

In this section, we consider the efficiency of PTD estimators that use all samples rather than just the incomplete samples to construct the first term of the PTD estimator. This variant of the PTD estimator was used in \cite{ChenAndChen2000}, \cite{YangAndDing2020}, \cite{kremers2021generalsimplerobustmethod},  and \cite{gronsbell2024PromotesCC}, among others. In particular, let $\hga=\calA(\tilde{\cx}; \bm{1})$ (i.e., let $\hga$ be the unweighted estimator of $\gamma$ that uses all samples $(\tilde{X}_i)_{i=1}^{\ntot}$) and consider PTD estimators of the form
\begin{equation}\label{eq:PTDwithGammaHatAll}
    \htPPhom_a= \hat{\Omega} \hga + (\htc - \hat{\Omega} \hgc ).
\end{equation} To study the asymptotic efficiency of this estimator we introduce an asymptotic linearity assumption for $\hga$ that will generally hold in settings where Assumption \ref{assump:AsymptoticLinearity} holds.

\begin{assumption}[Asymptotic linearity of $\hga$]\label{assump:AsymptoticLinearityGammaHatAll} Suppose $\sqrt{\ntot} \Big( \hga -\gamma - \frac{1}{\ntot} \sum_{i=1}^\ntot  \tilde{\Psi}(\tilde{X}_i) \Big) \xrightarrow{p} 0$, where $\tilde{\Psi}:\mathbb{R}^p \to \mathbb{R}^d$ is the same function as the one introduced in Assumption \ref{assump:AsymptoticLinearity}.
\end{assumption}

The following proposition shows that in the special case where the labelling weights are constant (i.e., the when the complete sample is a uniform random subsample), $\htPPhom_a$ with an optimally chosen tuning matrix $\hat{\Omega}$ will have the same asymptotic variance as that of optimally tuned PTD estimator studied in the main text. In particular, the asymptotic variance given in the following proposition matches that of $\htPPhomOpt$, given in Proposition \ref{prop:MoreEfficientThanclassical}.

\begin{proposition}
    Suppose that Assumption \ref{assump:SamplingLabelling} holds with $\pi(\tilde{X})=\pi_L \in (0,1)$ for all $\tilde{X}$ and that $\htc$, $\hgc$, and $\hga$ are all asymptotically linear in the sense of Assumptions \ref{assump:AsymptoticLinearity} and \ref{assump:AsymptoticLinearityGammaHatAll}. If $\hat{\Omega} \xrightarrow{p} \Omega$, then $\sqrt{\ntot}( \htPPhom_a-\theta) \xrightarrow{d} \mathcal{N} \big(0,\Sigma_{\textnormal{PTD},a}(\Omega) \big)$, where $\Sigma_{\textnormal{PTD},a}(\Omega)$ is quadratic in $\Omega$. Moreover, letting $\Omega_{\textnormal{opt},a}$ be the matrix minimizing the diagonal entries of $\Sigma_{\textnormal{PTD},a}(\Omega)$, $$\Sigma_{\textnormal{PTD},a}(\Omega_{\textnormal{opt},a})=\SigTC - \SigTGC (\SigGC+\SigGM)^{-1} [ \SigTGC]^\tran.$$
\end{proposition}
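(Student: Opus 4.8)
The plan is to mirror the proof of Proposition~\ref{prop:PPEstimatorCLT}: first establish a joint central limit theorem for the stacked estimator $\hat{\eta}_a \equiv (\htc, \hgc, \hga)$, then apply the linear map $A_\Omega = \begin{bmatrix} I_{d\times d} & -\Omega & \Omega \end{bmatrix}$ to read off the quadratic form, and finally optimize over $\Omega$ one row at a time. First I would stack the three asymptotic linear expansions from Assumptions~\ref{assump:AsymptoticLinearity} and \ref{assump:AsymptoticLinearityGammaHatAll}, writing $\sqrt{\ntot}(\hat\eta_a - \eta) = \ntot^{-1/2}\sum_i U_i + o_p(1)$ with $U_i = \big(W_i\Psi(X_i),\, W_i\tilde\Psi(\tilde X_i),\, \tilde\Psi(\tilde X_i)\big)$ and $\eta = (\theta,\gamma,\gamma)$. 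Exactly as in Lemma~\ref{lemma:JointCLTzeta}, the missing-at-random assumption together with $\e[\Psi(X)] = \e[\tilde\Psi(\tilde X)] = 0$ gives $\e[U] = 0$ (the third block uses $\e[W\mid\tilde X]=1$), so the multivariate CLT and Slutsky's lemma yield $\sqrt{\ntot}(\hat\eta_a - \eta)\xrightarrow{d}\mathcal N(0,\Sigma_\eta)$ with $\Sigma_\eta = \var(U)$.

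The heart of the argument is computing the blocks of $\Sigma_\eta$ in the constant-$\pi_L$ regime. Using $W = I/\pi_L$, the MAR assumption $I\indep X\mid\tilde X$, and the identities $\e[W\mid\tilde X]=1$ and $\e[W^2\mid\tilde X]=1/\pi_L$, I would condition on $\tilde X$ and invoke the tower property to obtain the three complete-sample blocks $\SigTC = \pi_L^{-1}\var(\Psi(X))$, $\SigGC = \pi_L^{-1}\var(\tilde\Psi(\tilde X))$, and $\SigTGC = \pi_L^{-1}\cov(\Psi(X),\tilde\Psi(\tilde X))$, along with the three all-data blocks $\cov(W\Psi(X),\tilde\Psi(\tilde X)) = \cov(\Psi(X),\tilde\Psi(\tilde X)) = \pi_L\SigTGC$, $\cov(W\tilde\Psi(\tilde X),\tilde\Psi(\tilde X)) = \var(\tilde\Psi(\tilde X)) = \pi_L\SigGC$, and $\var(\tilde\Psi(\tilde X)) = \pi_L\SigGC$. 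This writes every entry of $\Sigma_\eta$ in terms of $\SigTC,\SigGC,\SigTGC$ and the scalar $\pi_L$.

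Next, applying $A_{\hat\Omega} = A_\Omega + o_p(1)$ as in Proposition~\ref{prop:PPEstimatorCLT} gives $\sqrt{\ntot}(\htPPhom_a - \theta)\xrightarrow{d}\mathcal N\big(0,\Sigma_{\textnormal{PTD},a}(\Omega)\big)$ with $\Sigma_{\textnormal{PTD},a}(\Omega) = A_\Omega\Sigma_\eta A_\Omega^\tran$. Carrying out the block multiplication, the all-data contributions in the third block cancel, leaving $\Sigma_{\textnormal{PTD},a}(\Omega) = \SigTC - (1-\pi_L)\big(\Omega[\SigTGC]^\tran + \SigTGC\Omega^\tran - \Omega\SigGC\Omega^\tran\big)$, which is manifestly quadratic in $\Omega$. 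Since $[\Sigma_{\textnormal{PTD},a}(\Omega)]_{jj}$ depends only on the $j$th row of $\Omega$ and is a convex quadratic in it, minimizing gives $\Omega_{\textnormal{opt},a} = \SigTGC\SigGC^{-1}$; substituting and simplifying yields $\Sigma_{\textnormal{PTD},a}(\Omega_{\textnormal{opt},a}) = \SigTC - (1-\pi_L)\SigTGC\SigGC^{-1}[\SigTGC]^\tran$.

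The final step, and the one place where the constant-$\pi_L$ assumption is essential, is the identity $(1-\pi_L)\SigGC^{-1} = (\SigGC + \SigGM)^{-1}$: since $\SigGC = \pi_L^{-1}\var(\tilde\Psi(\tilde X))$ and $\SigGM = (1-\pi_L)^{-1}\var(\tilde\Psi(\tilde X))$, their sum equals $[\pi_L(1-\pi_L)]^{-1}\var(\tilde\Psi(\tilde X))$, so both sides equal $\pi_L(1-\pi_L)[\var(\tilde\Psi(\tilde X))]^{-1}$. This converts the expression into $\SigTC - \SigTGC(\SigGC + \SigGM)^{-1}[\SigTGC]^\tran$, matching the asymptotic variance of $\htPPhomOpt$ from Proposition~\ref{prop:MoreEfficientThanclassical}. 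I expect the main obstacle to be the careful bookkeeping of factors of $\pi_L$ versus $1-\pi_L$ across the weighted and unweighted covariance blocks, and spotting the $\SigGC$-versus-$(\SigGC+\SigGM)$ identity; the CLT and the row-wise optimization are routine given the machinery already developed for Propositions~\ref{prop:PPEstimatorCLT} and \ref{prop:MoreEfficientThanclassical}.
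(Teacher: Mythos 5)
Your proposal is correct and follows essentially the same route as the paper's proof: a joint CLT for the stacked estimator $(\htc,\hgc,\hga)$ obtained by the argument of Lemma~\ref{lemma:JointCLTzeta}, application of the linear map $\begin{bmatrix} I & -\Omega & \Omega\end{bmatrix}$ together with Slutsky's lemma, and row-wise minimization of the resulting quadratic form. The only cosmetic difference is ordering: you substitute the constant-$\pi_L$ block identities before optimizing, whereas the paper optimizes the general quadratic form first (obtaining $\Omega_{\textnormal{opt},a}=(\SigTGC-A_1)(\SigGC-A_2)^{-1}$ with $A_1=\cov(W\Psi(X),\tilde\Psi(\tilde X))$, $A_2=\var(\tilde\Psi(\tilde X))$) and specializes to constant $\pi_L$ at the end; under constant weights the two computations coincide, since $\SigTGC-A_1=(1-\pi_L)\SigTGC$ and $\SigGC-A_2=(1-\pi_L)\SigGC$ give $\Omega_{\textnormal{opt},a}=\SigTGC\SigGC^{-1}$, matching your answer.
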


\begin{proof}
    Let $\hat{\zeta}^a=(\htc,\hgc,\hga)$ and note that by an identical argument to that seen in the proof of Lemma \ref{lemma:JointCLTzeta}, $\sqrt{\ntot}(\hat{\zeta}^a-\zeta) \xrightarrow{d} \mathcal{N}(0,\Sigma_a)$ where $$\Sigma_a=\begin{bmatrix}
    \SigTC & \SigTGC & A_1 \\
    [ \SigTGC]^\tran & \SigGC & A_2 \\
    A_1^\tran & A_2 & A_2
    \end{bmatrix},$$ where $A_1=\cov \big( W \Psi(X), \tilde{\Psi}(\tilde{X}) \big)$ and $A_2=\cov \big( W \tilde{\Psi}(\tilde{X}), \tilde{\Psi}(\tilde{X}) \big)=\var(\tilde{\Psi}(\tilde{X}))$ (the simplification for $A_2$ follows from the tower property, Assumption \ref{assump:SamplingLabelling}, and the fact that $\e[\tilde{\Psi}(\tilde{X})]=0$). Letting $B_{\hat{\Omega}}=\begin{bmatrix}
        I_{d \times d} & -\hat{\Omega} & \hat{\Omega}
    \end{bmatrix}$ and $B_{\Omega}=\begin{bmatrix}
        I_{d \times d} & -\Omega & \Omega
    \end{bmatrix}$, the previous result and Slutsky's lemma give $$\sqrt{\ntot}(\htPPhom_a -\theta) =B_{\hat{\Omega}} \big(\sqrt{\ntot}(\hat{\zeta}^a -\zeta) \big)=B_{\Omega} \big(\sqrt{\ntot}(\hat{\zeta}^a -\zeta) \big)+o_p(1) O_p(1) \xrightarrow{d} \mathcal{N}\big(0, \Sigma_{\textnormal{PTD},a}(\Omega) \big),$$ where $\Sigma_{\textnormal{PTD},a}(\Omega)=B_{\Omega} \Sigma_a B_{\Omega}^\tran$. Therefore, the asymptotic variance of $\htPPhom_a$ is a function of $\Omega$ given by $$\Sigma_{\textnormal{PTD},a}(\Omega)= \SigTC +(A_1-\SigTGC) \Omega^\tran + \Omega (A_1-\SigTGC)^\tran + \Omega (\SigGC-A_2) \Omega^\tran.$$ For each $j \in \{1,\dots,d\}$, note that $[\Sigma_{\textnormal{PTD},a}(\Omega)]_{jj}$ only depends on the $j$th row of $\Omega$, which we denote by $\Omega_{j \cdot} \in \mathbb{R}^d$. Further, $[\Sigma_{\textnormal{PTD},a}(\Omega)]_{jj}$ is a quadratic function in $\Omega_{j \cdot}$ that is minimized when $\Omega_{j \cdot}=(\SigGC -A_2)^{-1} (\SigTGC-A_1)^\tran e_j$. Hence, letting $$
    \Omega_{\textnormal{opt},a} \equiv (\SigTGC-A_1) (\SigGC -A_2)^{-1},$$ and setting $\Omega=\Omega_{\text{opt},a}$ simultaneously minimizes each diagonal entry of $\Sigma_{\textnormal{PTD},a}(\Omega)$. Now observe that \begin{equation}\label{eq:VoptGammaALLGeneral}
        \Sigma_{\textnormal{PTD},a}(\Omega_{\text{opt},a})  = \SigTC - (\SigTGC-A_1) (\SigGC -A_2)^{-1} (\SigTGC-A_1)^\tran.
    \end{equation} This formula further simplifies since we assume $\pi(\tilde{X})=\pi_L \in (0,1)$ for all $\tilde{X}$. In particular, by the tower property, Assumption \ref{assump:SamplingLabelling}, and the fact that $\tilde{\Psi}(\tilde{X})$ is mean $0$, $$\SigGC=\var(W \tilde{\Psi}(\tilde{X}))=\e[(I/\pi_L^2) \tilde{\Psi}(\tilde{X})  \tilde{\Psi}(\tilde{X})^\tran]=\var( \tilde{\Psi}(\tilde{X}))/\pi_L =A_2/\pi_L.$$ Identical arguments show that $\SigGM=A_2/(1-\pi_L)$ and that $\SigTGC=A_1/\pi_L$. Thus, $(\SigTGC-A_1)=(1-\pi_L)\SigTGC$ and $(\SigGC-A_2)=(1-\pi_L)^2 (\SigGC + \SigGM)$. Plugging these into formula \eqref{eq:VoptGammaALLGeneral} gives the desired result. 
\end{proof}

    \section{Further details on data-based experiments}\label{sec:FurtherDetailsOnExperiments}

    \subsection{AlphaFold error-in-response regressions}\label{sec:AlphaFoldExpsDetails}

We test our method for a logistic regression model that assesses whether there is an interaction between two types of posttranslational modifications (PTMs) as predictors of whether a protein region is an internally disordered region (IDR).  To do this, we used a dataset of $M=10{,}802$ samples that originated from \cite{bludau2022structural}, was used in previous prediction-powered inference studies \citep{OriginalPPI,PPI++}, and was downloaded from Zenodo \citep{PPIZenodo}. Each sample in the dataset had 3 PTMs which we denote by the vector $Z$, an indicator of IDR which we call $Y_{\text{IDR}} \in \{0,1\}$, and a prediction of $Y_{\text{IDR}}$ based on AlphaFold \citep{AlphaFoldPaper}, which we call $\tilde{Y}_{\text{IDR}} \in \{0,1\}$.

We suppose the investigator is interested in estimating the parameters $\theta=(\beta_0,\beta_1,\beta_2,\beta_3)$ that give the best fit to the following logistic regression model $$\mathbb{P}\big(Y_{\text{IDR}}=1 \giv 
 Z \big) = \frac{\exp \big( \beta_0+\beta_1 Z_{\text{Acet}}+\beta_2 Z_{\text{Ubiq}}+\beta_3 Z_{\text{Acet}} Z_{\text{Ubiq}}  \big)}{1+\exp \big( \beta_0+\beta_1 Z_{\text{Acet}}+\beta_2 Z_{\text{Ubiq}}+\beta_3 Z_{\text{Acet}} Z_{\text{Ubiq}}  \big)},$$ where $Z_{\text{Ubiq}} \in \{0,1\}$ and $Z_{\text{Acet}} \in  \{0,1\}$ are indicators of whether the sample was subject to ubiquitination and acetylation, respectively. We further suppose that the investigator only has the budget to obtain $\ntot=7{,}500$ samples and $\sim$ $1{,}000$ measurements of $Y_{\text{IDR}}$ and is most interested in estimating the interaction coefficient $\beta_3$. Since only about 11\% of samples had $Z_{\text{Acet}}=1$ selecting the complete sample via uniform random sampling would lead to a stark imbalance in the number of complete samples with each of the four possible values of $(Z_{\text{Ubiq}},Z_{\text{Acet}}) \in \{0,1\}^2$ and would therefore be suboptimal for $\beta_3$ estimation. Instead, in our simulations, we consider the case where the investigator selects which points belong to the complete sample using independent Bernoulli draws with probabilities that depend only on $Z_{\text{Ubiq}}$ and $Z_{\text{Acet}})$ such that for each of the four combinations of $Z_{\text{Ubiq}}$ and $Z_{\text{Acet}}$, the expected number of points in the complete sample with that combination is $250$ (and the expected complete sample size is $1{,}000$). Various methods for estimating $\theta$ and its confidence intervals are then deployed, and this process of randomly selecting the complete sample and implementing various inference approaches for $\theta$ are repeated across $500$ simulations. The results are displayed in the top panels of Figures \ref{fig:ViolinPlotFigure}, \ref{fig:ExperimentsVaryingCIMethod}, and \ref{fig:ExperimentsVaryingTuningMatrix}.


\subsection{Housing price error-in-covariate regressions}\label{sec:HousingPriceExample}

We downloaded a dataset of ground truth observations and remote sensing-based estimates for housing price, income, nightlights, road length, tree cover, elevation, and population from \cite{MOSAIKSSourceCode}, which was studied in \cite{ProctorPaper} and \cite{KerriRSEPaper} to evaluate methods for leveraging error-prone predictions that arise in remote sensing settings. This dataset was synthesized in \cite{MOSAIKSPaper}, where a method called MOSAIKS was developed to produce satellite-based estimates of various quantities of interest. In summary, \cite{MOSAIKSPaper} produced remote sensing-based estimates of these 7 variables by using unsupervised learning approaches to covert daytime satellite imagery for each $\sim 1 \text{km} \times 1 \text{km}$ grid cell to $8{,}912$ features and subsequently used ridge regression models trained on a small labelled dataset to predict these 7 variables from the $8{,}912$ remotely sensed features. We remark that the ``ground truth" observations for 2 of these 7 variables were also based on remote sensing data and may have been somewhat error-prone; however, these ground truth observations are still thought to be more accurate than the MOSAIKS-based predictions. For example, the ground truth nightlight estimates were derived from nighttime satellite imagery whereas the MOSAIKS-based predictions were derived from daytime satellite imagery. Nonetheless, because our focus is on testing PTD-based methods on a real dataset rather than reporting scientifically meaningful regression coefficients, we ignore possible errors in these ground truth observations.

We test the PTD estimator and our methods for constructing confidence intervals on linear regression and quantile regression models regressing housing price on income, nightlights, and road length. Each of the $M=46{,}418$ samples corresponded to a distinct $\sim 1 \text{km} \times 1 \text{km}$ grid cell. Keeping the variable transformations and units from \cite{MOSAIKSPaper}, the response variable $Y_{\text{Housing\$}}$ was the log of the housing price (in units of \$/$\text{ft}^2$) averaged over the grid cell, and the covariates $Z_{\text{Income}}$, $Z_{\text{Nightlights}}$, and $Z_{\text{RoadLength}}$ were the averages across the grid cell of income (measured in units of \$/household), nightlights (measured in units of $\log(1+\text{nW}/\text{cm}^2 \cdot \text{sr})$), and road length (measured in units of meters per grid cell). We supposed the investigator was interested in estimating the best fit $\theta=(\beta_0,\beta_1,\beta_2,\beta_3)$ in the following linear model $$Y_{\text{Housing\$}}= \beta_0+ \beta_1 Z_{\text{Income}} + \beta_2 Z_{\text{Nightlights}} +\beta_3 Z_{\text{RoadLength}} + \varepsilon.$$ We further suppose that the investigator only has the budget to obtain $\ntot=5{,}000$ samples with $\sim$ 500 of them having measurements of $Z_{\text{Nightlights}}$ and $Z_{\text{RoadLength}}$, but with all of the $\ntot$ samples having MOSAIKS-based proxies for $\tilde{Z}_{\text{Nightlights}}$ and $\tilde{Z}_{\text{RoadLength}}$. We simulate the scenario where in expectation the investigator obtains 500 measurements of $Z_{\text{Nightlights}}$ and $Z_{\text{RoadLength}}$ and use a variety of methods for constructing confidence intervals for the components of $\theta$. The results across 500 such simulations are displayed the rows of Figures \ref{fig:ViolinPlotFigure}, \ref{fig:ExperimentsVaryingCIMethod}, and \ref{fig:ExperimentsVaryingTuningMatrix} with the signifier `(2)'. These experiments were repeated when modifying the above linear model to be quantile regression (with $q=0.5$) and modifying the budget to allow for $\sim 1{,}000$ complete samples (see the rows with the signifier `(3)' from the same figures). The quantile regressions were implemented using the \texttt{quantreg} R package \citep{QuantRegPackage}. For the classical approach in the quantile regression experiments, we used the normal approximations and standard errors based on the Powell kernel estimate of the sandwich covariance formula for $\var(\htc)$.

\subsection{Tree cover error-in-both regressions}\label{sec:TreecoverExpsDetails}

We test our methods on a number of regressions relating tree cover to elevation and population where all data and proxies were taken from \cite{MOSAIKSSourceCode} described in the previous subsection. Each of the $M=67{,}968$ samples corresponded to a distinct $\sim 1 \text{km} \times 1 \text{km}$ grid cell. Keeping the variable transformations and units from \cite{MOSAIKSPaper}, the response variable $Y_{\text{TreeCover}}$ was the percent forest cover in the grid cell, and the covariates $Z_{\text{Elevation}}$ and $Z_{\text{Population}}$ was the average elevation (measured in meters) and population (measured in $\log(1+\text{people}/\text{km}^2)$) across the grid cell. We supposed the investigator was interested in estimating the best fit $\theta=(\beta_0,\beta_1,\beta_2)$ in the following linear model $$Y_{\text{TreeCover}}= \beta_0+ \beta_1 Z_{\text{Elevation}} + \beta_2 Z_{\text{Population}} + \varepsilon.$$ We suppose the investigator has access to $Z_{\text{Elevation}}$ and the MOSAIKS-based proxies $\tilde{Z}_{\text{TreeCover}}$ and $\tilde{Z}_{\text{Population}}$ on $\ntot=5{,}000$ samples but is only able to measure $Z_{\text{Population}}$ and $Y_{\text{TreeCover}}$ on a limited number of samples. The rows in Figures \ref{fig:ViolinPlotFigure}, \ref{fig:ExperimentsVaryingCIMethod}, and \ref{fig:ExperimentsVaryingTuningMatrix} with the signifier `(4)' show the results from $500$ simulations in which the investigator selects a complete sample of size $500$ in expectation using IID Bernoulli trials. 

We then conduct simulations for a setup with the same variables and model specification, with the caveat being that a cluster sampling scheme was used to select the $\ntot$ samples and another cluster sampling scheme was used to select which of the $\ntot$ samples belong to the complete sample. In particular, we treated each 0.5° $\times$ 0.5° grid cell as a cluster and conducted $500$ simulations where in each simulation, all samples from approximately $502.4$ clusters were used (such that in expectation the investigator had access to $\ntot=10{,}000$ samples) and IID Bernoulli trials determined which clusters were assigned to the complete sample (in expectation, precisely $1{,}000$ samples and approximately $50.24$ clusters were assigned to the complete sample in each simulation). The rows in Figures \ref{fig:ViolinPlotFigure}, \ref{fig:ExperimentsVaryingCIMethod}, and \ref{fig:ExperimentsVaryingTuningMatrix} with the signifier `(5)' show the results across these simulations when using Algorithm \ref{alg:ClusterBootstrap}, which constructs confidence intervals that account for the cluster labelling and sampling scheme. For these rows the ``classical approach" uses the normal approximation and \texttt{vcovCL} in R to account for the clusters when estimating $\var(\htc)$, which led to slight undercoverage that we suspect is due to the small number of labelled clusters. In Figure \ref{fig:ExperimentsVaryingCIMethod}, we also consider the convolution-based speed-up to Algorithm \ref{alg:ClusterBootstrap} that is described at the end of Section \ref{sec:ClusterBootstrapDescription} (we refer to this speed-up as Algorithm \ref{alg:QuickConvolutionBootstrap}/\ref{alg:ClusterBootstrap} in Table \ref{table:ExperimentSummary}).

We then consider the case where the investigator is interested in estimating the best fit $\theta=(\beta_0,\beta_1,\beta_2)$ to the following logistic regression model $$\mathbb{P}(Y_{\text{TreeCover}} > 10 \% \giv Z) = \frac{\exp\big( \beta_0+ \beta_1 Z_{\text{Elevation}} + \beta_2 Z_{\text{Population}} \big)}{1+\exp\big( \beta_0+ \beta_1 Z_{\text{Elevation}} + \beta_2 Z_{\text{Population}} \big)}.$$ Note that the threshold of 10\% is meaningful from a forestry perspective, as both the Food and Agriculture Organization of the United Nations and the 2017 Forest Resources Report of the United States defines forested land as land that both meets certain criteria and has greater than 10\% tree cover \citep{UDSAForestServiceReport}. We again suppose the investigator has access to $Z_{\text{Elevation}}$ and the MOSAIKS-based proxies $\tilde{Z}_{\text{TreeCover}}$ and $\tilde{Z}_{\text{Population}}$ on all $\ntot$ samples but is only able to measure $Z_{\text{Population}}$ and $Y_{\text{Tree cover}}$ on a limited number of samples. The rows in Figures \ref{fig:ViolinPlotFigure}, \ref{fig:ExperimentsVaryingCIMethod}, and \ref{fig:ExperimentsVaryingTuningMatrix} with the signifier `(6)' show the results from $500$ simulations in which the investigator has $\ntot=8{,}000$ samples and selects a complete subsample of size $1{,}000$, in expectation, using IID Bernoulli trials.

\subsection{Census error-in-covariate stratified sampling}\label{sec:CensusExpsDetails}

We test the stratified bootstrap method (Algorithm \ref{alg:StratifiedBootstrap}) for a linear regression model assessing the association between disability and income while controlling for age in a setting where stratified sampling and labelling is assumed. To do this we used data from a 2019 US Census survey (downloaded via the Folktables interface \citep{FolktablesPaper}) of $M=200{,}227$ individuals aged 25--64 who were living in California and had reported their income $Y_{\text{Income}}$ (measured in \$). The covariates were age $Z_{\text{Age}}$ and an indicator $Z_{\text{Disability}} \in \{0,1\}$ of whether the subject had a disability. In addition, we trained a Random Forest model on the 2018 California Census (also downloaded via the Folktables interface \citep{FolktablesPaper}) to predict $Z_{\text{Disability}}$ using 16 features from the census including age, sex, race, income, marital status, military service history, and indicators of hearing, vision, and cognitive difficulties. We applied the Random Forest model to these same features to produce predictions $\tilde{Z}_{\text{Disability}}$ of disability status of the $M=200{,}227$ subjects from the 2019 dataset, and these predictions had an overall accuracy of about $0.9535$.

We supposed that the investigator was interested in estimating the best fit $\theta=(\beta_0,\beta_1,\beta_2)$ in the following linear model $$Y_{\text{Income}}=\beta_0+\beta_1 Z_{\text{Age}}+\beta_2 Z_{\text{Disability}} +\varepsilon.$$ We suppose the investigator has the budget to conduct a survey obtaining precisely $\nm=5{,}000$ samples of $(Y_{\text{Income}}, Z_{\text{Age}},\tilde{Z}_{\text{Disability}})$ and another survey obtaining precisely $\nc=1{,}000$ samples of $(Y_{\text{Income}}, Z_{\text{Age}}, Z_{\text{Disability}},\tilde{Z}_{\text{Disability}})$. We further suppose that the investigator conducts these two surveys from the population of $M=200{,}227$ using a stratified sampling scheme with $K=4$ strata and the same number of subjects surveyed for each strata (precisely $1{,}250$ and $250$ subjects were sampled per strata for the incomplete sample survey and complete sample survey, respectively). The four strata were determined by binning the subjects ages into the following 4 categories 25--34, 35--44, 45--54, and 55--64. We simulate the scenario where the investigator obtains these two stratified samples 500 times. The results for the classical approach versus Algorithm \ref{alg:StratifiedBootstrap} can be found in the bottom panels of Figures \ref{fig:ViolinPlotFigure}, \ref{fig:ExperimentsVaryingCIMethod}, and \ref{fig:ExperimentsVaryingTuningMatrix}. For these rows the confidence intervals for the classical approach used a bootstrap approach that was equivalent to running Algorithm \ref{alg:StratifiedBootstrap} with $\hat{\Omega}=0$.

\end{document}